\newcommand{\figref}[1]{Fig.\ \ref{#1}}
\newcommand{\subfigref}[2]{Fig.\ \ref{#1}(#2)}
\newcommand{\subFigref}[2]{Figure \ref{#1}(#2)}
\newcommand{\subfigdref}[3]{Fig.\ \ref{#1}(#2)--(#3)}
\newcommand{\Figref}[1]{Figure \ref{#1}}
\newcommand{\figsref}[2]{Figs.\ \ref{#1} and \ref{#2}}
\newcommand{\secref}[1]{Sec.\ \ref{#1}}
\newcommand{\Secref}[1]{Section \ref{#1}}
\newcommand{\Appref}[1]{Appendix \ref{#1}}
\newcommand{\secsref}[2]{Sec.\ \ref{#1} and \ref{#2}}
\newcommand{\eqnref}[1]{Eq.\ (\ref{#1})}
\newcommand{\Eqnref}[1]{Equation (\ref{#1})}
\newcommand{\Eqnsref}[2]{Equations (\ref{#1}) and (\ref{#2})}
\newcommand{\eqnsref}[2]{Eqs.\ (\ref{#1}) and (\ref{#2})}
\newcommand{\subfigwidth}[0]{1.6in}
\newcommand{\subfigwidthnew}[0]{2.0in}
\newcommand{\subfigwidthtoo}[0]{2.3in}
\newcommand{\Tcross}[0]{T_{\times}}
\newcommand{\Tcrossq}[0]{T_{\times}^{(1/q)}}
\newcommand{\Tcrossp}[0]{T_{\times}^{(p)}}
\newcommand{\Cq}[1]{C_{#1}}
\newcommand{\Cqq}[0]{C_{q}}
\newcommand{\Bq}[1]{B_{#1}}
\newcommand{\Bqq}[0]{B_{q}}
\newcommand{\xpo}[0]{\left(x+1\right)}
\newcommand{\xpoy}[0]{\left(x+y+1\right)}
\newcommand{\xpop}[1]{\left(x+1\right)^{#1}}
\newcommand{\xpoyp}[1]{\left(x+y+1\right)^{#1}}
\newcommand{\trixy}[0]{\left(x^2+x+y\right)}
\newcommand{\tri}[0]{\Delta}
\newcommand{\trip}[0]{\Delta'}
\newcommand{\Kl}[1]{K_{#1}}
\newcommand{\Kld}[2]{K_{#1}^{\left(#2\right)}}
\newcommand{\Klm}[2]{G_{#1}^{\left[#2\right]}}
\newcommand{\Pochhammer}[2]{\frac{\Gamma\left(#1\right)}{\Gamma\left(#2\right)}}
\newcommand{\ie}[0]{\emph{i.e.}}
\theoremstyle{plain}  \newtheorem{lemma}{Lemma}
\begin{document}

\title{The stability to instability transition in the structure of large
scale networks}

\begin{abstract}
We examine phase transitions between the ``easy,'' ``hard,'' and the
``unsolvable'' phases when attempting to identify structure in large
complex networks (``community detection'') in the presence of
disorder induced by network ``noise'' (spurious links that obscure
structure), heat bath temperature $T$, and system size $N$. The
partition of a graph into $q$ optimally disjoint subgraphs or
``communities'' inherently requires Potts type variables. In earlier
work [Phil. Mag. {\bf 92}, 406 (2012)] when examining power law and
other networks (and general associated Potts models), we illustrated
transitions in the computational complexity of the community
detection problem typically correspond to spin-glass-type
transitions (and transitions to chaotic dynamics in mechanical
analogs) at both high and low temperatures and/or noise. When
present, transitions at low temperature or low noise correspond to
entropy driven (or ``order by disorder'') annealing effects wherein
stability may initially increase as temperature or noise is
increased before becoming unsolvable at sufficiently high
temperature or noise. Additional transitions between contending
viable solutions (such as those at different natural scales) are
also possible. Identifying community structure via a dynamical
approach
where ``chaotic-type'' transitions were earlier found. The
correspondence between the spin-glass-type complexity transitions
and transitions into chaos in dynamical analogs might extend to
other hard computational problems. In this work, we examine large
networks (with a power law distribution in cluster size) that have a
large number of communities ($q \gg 1$). We infer that large systems
at a constant ratio of $q$ to the number of nodes $N$ asymptotically
tend toward insolvability in the limit of large $N$ for any positive
$T$. The asymptotic behavior of temperatures below which structure
identification might be possible, $T_\times =O[1/\log q]$, decreases
slowly, so for practical system sizes, there remains an accessible,
and generally easy, global solvable phase at low temperature. We
further employ multivariate Tutte polynomials to show that
increasing $q$ emulates increasing $T$ for a general Potts model,
leading to a similar stability region at low $T$. Given the relation
between Tutte and Jones polynomials, our results further suggest a
link between the above complexity transitions and transitions
associated with random knots.
\end{abstract}

\author{Dandan Hu}
\author{Peter Ronhovde}
\author{Zohar Nussinov}
\email{zohar@wuphys.wustl.edu}
\affiliation{Department of Physics, Washington University in St. Louis,
Campus Box 1105, 1 Brookings Drive, St. Louis, MO 63130, USA}

\pacs{89.75.Fb, 64.60.Cn, 89.65.-s}
\maketitle{}

\section{Introduction}

Applications of physics to networks \cite{ref:newmanphystoday} has opened
fascinating doors for enhancing our understanding of these complex systems.
In particular, community detection \cite{ref:fortunatophysrep} endeavors to
identify pertinent structures within such systems.
Applications of the problem are exceptionally broad, and numerous methods have
been proposed to attack the problem
\cite{ref:rosvallmaprw,ref:blondel,ref:hastings,ref:smcd,ref:lanc,ref:rzmultires,
ref:chengshen,ref:gudkov,ref:barberLPA},
some of which have been compared for efficiency and accuracy
\cite{ref:danon,ref:noack,ref:shenchengspectral,ref:lancLFRcompare}.

Computational ``phase transitions'' have been studied in many challenging problems \cite{ref:hoggHW,ref:mezardpz,ref:monassonzecchina,ref:mertensNPPT,ref:gentTSPPT,
ref:weigtVCPT,ref:lacasaprimes,ref:krzakala}.
Practical implications of such studies abound (e.g., Refs.\
\cite{ref:hoggHW,ref:gentTSPPT,ref:baukeMSPPT,ref:mukherjeejamming,ref:arevalojamming}),
and understanding the behavior of algorithmic solutions to these problems is
of interest because the knowledge can be leveraged to understand when a particular
solution is computationally challenging, trustworthy, or perhaps not obtainable either
via an inherent difficulty or required computational effort.
Such knowledge may be used to in certain cases to predict the hard or unsolvable
regimes of the problem \emph{a priori} (e.g., $k$-SAT \cite{ref:mezardpz}) or perhaps,
more practically in general, to dynamically adapt the solver during the onset
of a phase transition \cite{ref:ashokpatraPT}. 

Earlier work related to computational phase transitions with
connections to clustering include
\cite{ref:rosePTC,ref:graepelmapsCPT}, and Ref.\
\cite{ref:dorogovtsevRMP} reviewed some critical phenomena in
complex networks. The complexity of the energy landscape in
community detection was studied for a ``fixed'' Potts model (model
parameters are not set by the network under study)
\cite{ref:rzlocal,ref:huCDPTsgd}, modularity \cite{ref:goodMC}, and
belief propagation on block models \cite{ref:decelleKMZPT}. The
former and latter studies explicitly identified phase transitions in
the respective systems. We extend a previous analysis
\cite{ref:huCDPTsgd} of a Potts model where we studied the
thermodynamic and complexity character resulting in two distinct
transitions: an entropic stabilization transition where added
complexity can result in ``order by disorder'' annealing and a high
temperature disordered unsolvable phase. For extreme complexity
(high noise) at low $T$, the system is again unsolvable. Additional
transitions can appear between unsolvable and difficult solutions or
contending partitions of natural network scales. Here, we seek to
move beyond characterizing the solvable/unsolvable transition to
study the transitions in terms of changes in the energy landscape
and thermodynamic functions as functions of temperature and
``noise'' (intercommunity edges).

We utilize overlap parameters in the form of information theory measures
(see \Appref{app:information}) and a ``computational susceptibility''
$\chi$ (see \Appref{app:chi}).
Using these measures, we monitor increases in the number of local
minima corresponding to (often sharply) increased computational
complexity. We apply our Potts model to solve a random graph with
an embedded ground state, and we identify phase transitions between
``easy'' and ``hard'' \emph{solvable} phases which transition into
\emph{unsolvable} regions. Specifically, the normalized mutual
information (NMI) $I_N$, Shannon entropy $H$, the energy $E$, and
$\chi$ exhibit progressively sharper changes as the system size $N$
increases suggesting the existence of genuine thermodynamic
transitions. Similar analysis can be done for other community
detection approaches. Many community detection methods will agree on
the best solution within the easy phase, but the hard region
presents a substantially more difficult challenge.

The identified transitions may be connected to jamming
\cite{ref:mukherjeejamming,ref:arevalojamming} and avalanche (cascade)
transitions \cite{ref:leeavalanche,ref:morenocascade,ref:wangcascade}
in networks.
Dynamic jamming transitions occur in traffic, computer network,
particulate matter (e.g., sand piles), and the glassy state in amorphous
solids may be caused by similar behavior.
Refs.\ \cite{ref:zhengscascade,ref:wucascade,ref:ikedacascade} showed
relations between clustering and cascades in certain networks,
and Ref.\ \cite{ref:tahbazsalehiauto} relates agent dynamics to the
Kuramoto oscillators model which has been used for community detection
\cite{ref:arenassync}.
The threshold emergence of Giant Connected Components (GCC) is
related to epidemic thresholds \cite{ref:serranoperc,ref:mooreperc},
and by nature of the emerging global connectivity, the GCC is directly
detectable via clustering at large-scale resolutions
[\ie{}, small $\gamma$ in \eqnref{eq:ourpotts}].
Jones polynomials in knot theory are related to Tutte polynomials for the Potts
model, so our results suggest similar transitions in random knots
(see \Appref{app:trefoilknot}).

We will analytically investigate partition functions and free
energies of a several graphs in the high temperature $T$ and large
number of communities $q$ approximations. We illustrate that
increasing $T$ emulates increasing $q$ for a general system, and the
analytical results are consistent with the computational phase
diagrams.

The remarks of the paper is organized as follows: We introduce the
community detection model in \secref{sec:potts} and then the
embedded graph/noise test in \secref{sec:graphdef}.
\Secref{sec:transition} demonstrates the spin-glass-type transitions
that occur in our community detection problem via numerical
simulation using several instability measures. In
\secref{sec:NIcliques}, we derive crossover thresholds for a simple
case and discuss their connections to the numerical simulations, and
\secref{sec:birdexample} demonstrates the effect of the different
solution regions with a specific example. \Secref{sec:f-analytic}
carries out analytic free energy calculations on arbitrary
unweighted graphs using a ferromagnetic Potts model.
\Appref{app:replica} exams the notation of ``trials'' and
``replicas'' which are of paramount importance in our work to
directly probe the phase diagram sans the use of mean-field type or
other approximation concerning complexity. \Appref{app:replica}
defines some terminology used in the paper. \Appref{app:information}
and \Appref{app:chi} describe our information and stability
measures, and \Appref{app:HBA} elaborates on our heat bath community
detection algorithm. We introduce the Tutte polynomial method for
calculating the partition function of a Potts model for unweighted
and weighted graphs in \Appref{app:tuttepoly}, and we show an exact
calculation for a simple connected graph in
\Appref{app:cliquecircle}. Finally, \Appref{app:trefoilknot}
conjectures the existence of a similar transition for knots.

\section{Potts Hamiltonian}
\label{sec:potts}

\begin{figure}
\centering
\includegraphics[width=0.7\columnwidth]{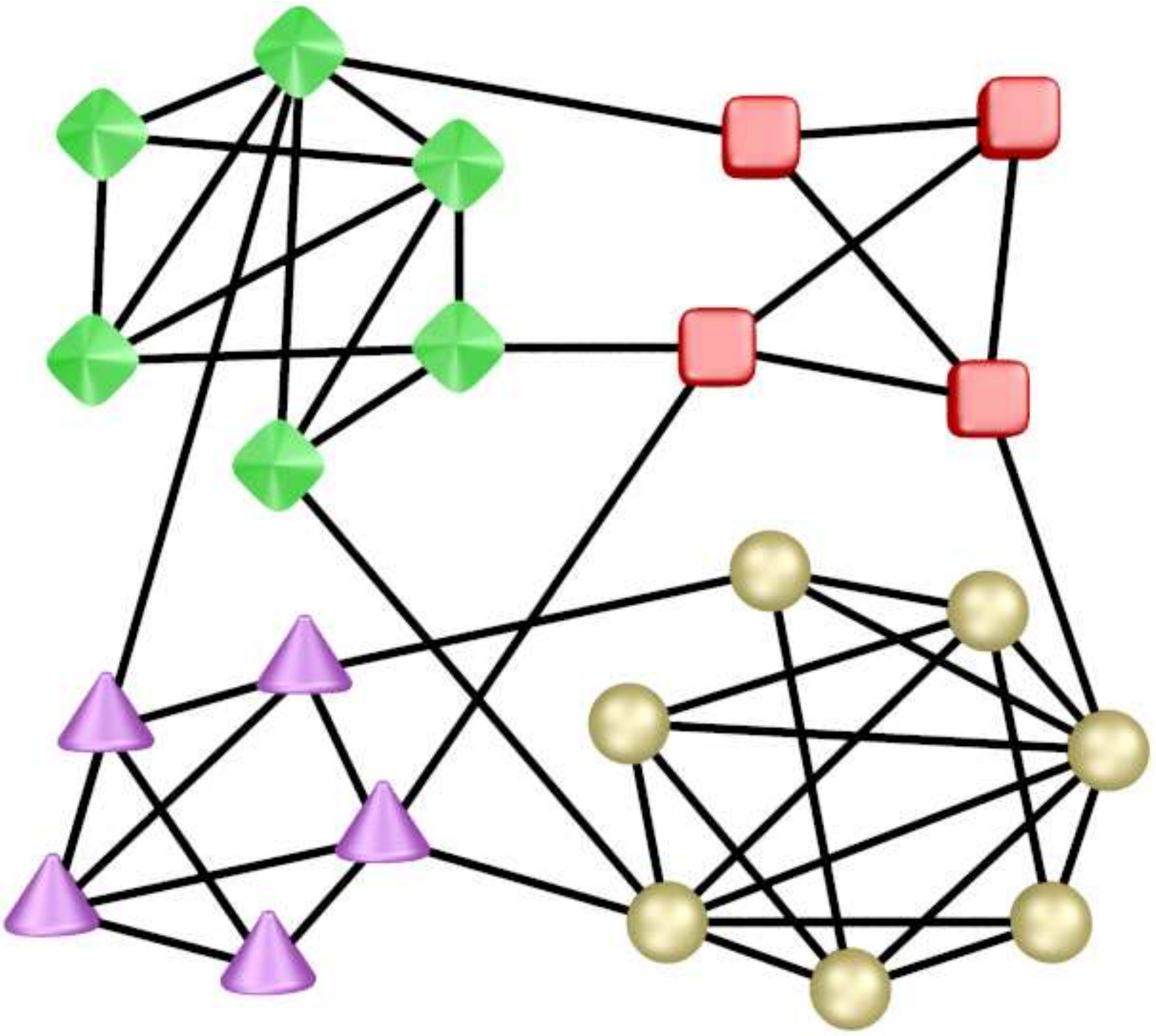}
\caption{(Color online) The figure illustrates a partition where nodes are
separated into distinct communities as indicated by distinct shapes and colors,
thus identifying relevant structure in the graph.
The current work elaborates on computational transitions and disorder
in terms noise (extraneous intercommunity edges) or thermal effects
(high temperature $T$ or large system size $N$) of solving such systems
using a stochastic heat bath solver (see \Appref{app:HBA}).}
\label{fig:communities}
\end{figure}

We employ a spin-glass-type Potts model Hamiltonian for solving the
community detection problem
\begin{equation}
  H(\{\sigma\}) = -\frac{1}{2}\sum_{i\neq j}\left[ A_{ij}-\gamma
                   \left(1-A_{ij}\right)\right] \delta(\sigma_i,\sigma_j)
  \label{eq:ourpotts}
\end{equation}
which we refer to as an ``Absolute Potts Model'' (APM).
Given $N$ nodes, $A_{ij}$ denotes the adjacency matrix where $A_{ij}=1$
if nodes $i$ and $j$ are connected and is $0$ otherwise.
In general, $A_{ij}$ may be trivially extended to a weighted adjacency
matrix $w_{ij}$ (perhaps including ``adversarial'' relations) \cite{ref:rzlocal},
but we utilize unweighted graphs in most of the current work
(see \secref{sec:birdexample}).
Each spin $\sigma_{i}$ may assume integer values in the range
$1\leq\sigma_{i}\leq q$ where $q$ is the (dynamic) number of communities
where node $i$ is a member of community $k$ when $\sigma_{i}=k$.
In the current work, we set the resolution parameter \cite{ref:rzmultires}
to $\gamma=1$ which is near an optimal value for communities with high internal
edge densities (see \secref{sec:graphdef}).

Previous work \cite{ref:rzlocal,ref:rzmultires} elaborated on a
``zero-temperature'' ($T=0$) community detection algorithm which we used
to minimize \eqnref{eq:ourpotts}.
A depiction of community structure is shown in \figref{fig:communities}
where different communities are represented by different node shapes
and colors.
Here, we investigate the Hamiltonian at non-zero temperatures ($T>0$)
by applying a heat bath algorithm (HBA, see \Appref{app:HBA}).

We further invoke $s$ independent solutions (``trials'', see
Appendix A) by solving copies of the system which differ by a
permutation of the order of the spin indices. This process leads to
states that (perhaps locally) minimize \eqnref{eq:ourpotts}, so we
select the lowest energy trial as the best solution. We vary $s$ in
the range $4 \le s \le 20$ where we employ $s=4$ trials in general
and use $s>4$ trials for calculating the computational
susceptibility in \eqnref{eq:sus}.

In our multi-scale (``multiresolution'') analysis, we solve $r=100$
independent ``replicas'' (see \Appref{app:replica}) and examine
information theory correlations between the replicas and the planted
ground state solutions. We schematically show such a set of
independent solvers in \figref{fig:MRAlandscape} where stronger
agreement among the replicas indicates a more robust solution. We
compute the average inter-replica information correlations among the
ensemble of replicas allowing us to infer a more detailed picture of
the system beyond that of a single optimized solution. Specifically,
\emph{information theory extrema} as a function of $T$ and $\gamma$
(or other scale parameters in general) correspond to most relevant
scale(s) of the system.

\begin{figure}
\centering
\includegraphics[width=0.9\columnwidth]{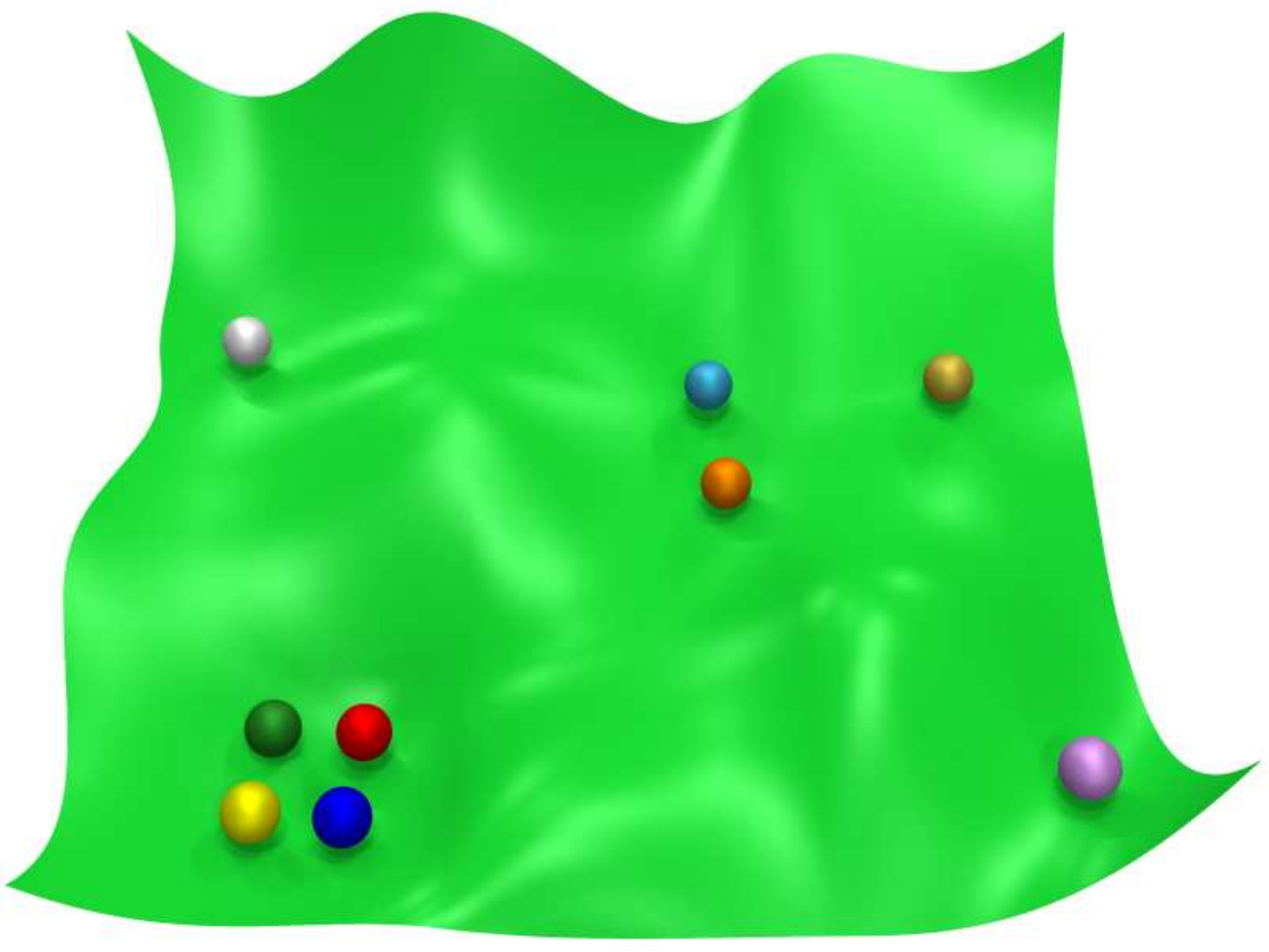}
\caption{(Color online) The schematic illustrates different minimizers
attempting to solve a system.
Colored spheres represent distinct minimizers (``replicas'') that seek
a (perhaps local) minimum of a cost function.
In an easy system, multiple solution attempts will generally reach
a good solution (such as the bottom left region of the landscape),
but hard systems require more effort to solve accurately
(that is, to achieve strong agreement between the replicas).
Unsolvable regions restrict accurate solutions without extreme levels
optimization (such as exhaustive search).}
\label{fig:MRAlandscape}
\end{figure}

\begin{figure}
\centering
\includegraphics[width=0.9\columnwidth]{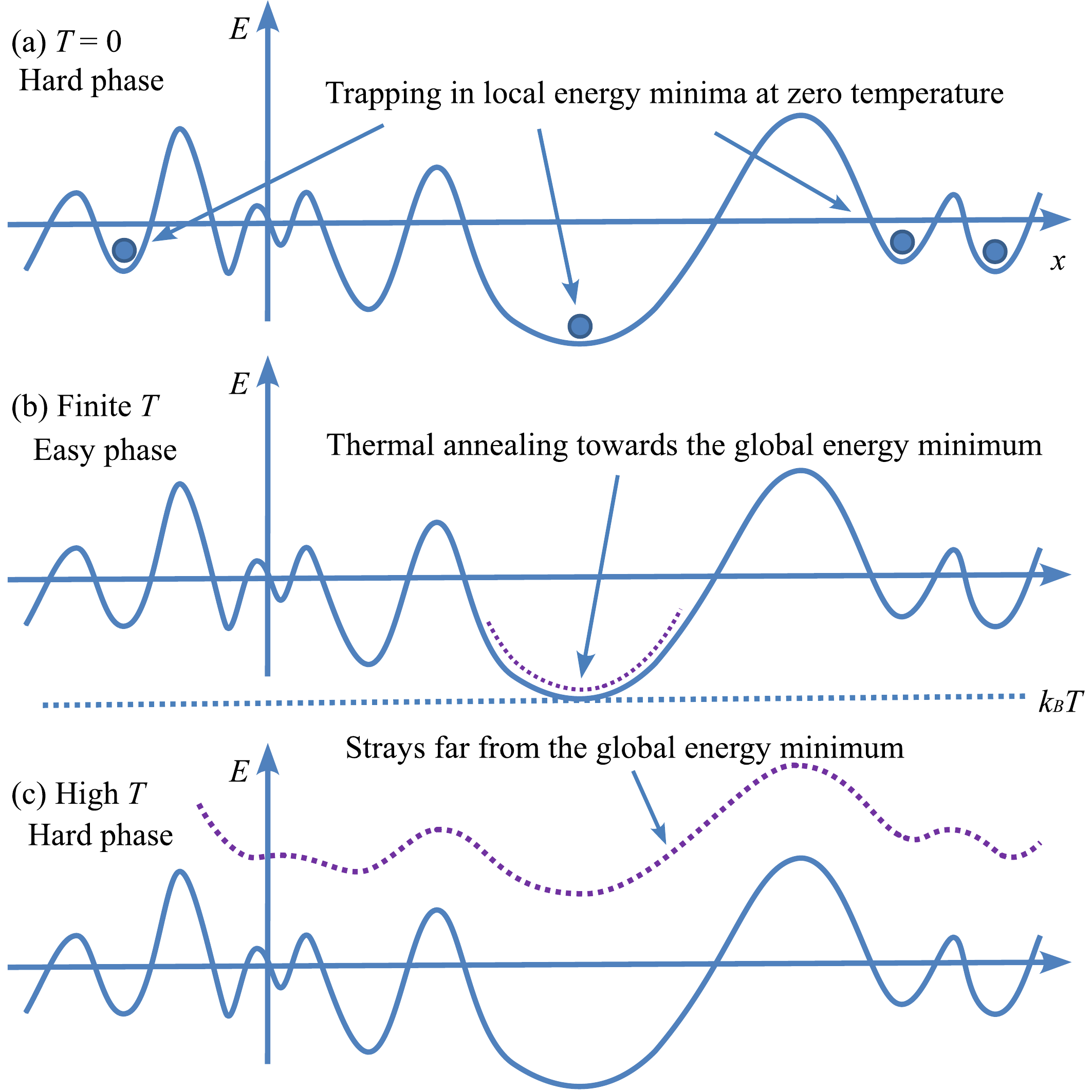}
\caption{(Color online) Panel (a) schematically illustrates in one dimension
the easy and hard phases induced by the level of noise (extraneous
intercommunity edges) encountered by a solver.
Greedy algorithms are easily trapped in local energy minima above a certain
noise threshold.
We previously showed that the model of \eqnref{eq:ourpotts} is robust to noise
\cite{ref:rzlocal} even with a greedy algorithm.
Stochastic solvers such as a heat bath algorithm (see \Appref{app:HBA}) or simulated
annealing enable one to circumvent the effects of some noise, but excessive levels
will still thwart these solvers because meaningful partition information is obscured
by the complexity of the energy landscape.
Panels (b) and (c) schematically depict the easy and hard phases in terms
of the temperature for the stochastic heat bath solver (see \Appref{app:HBA}).
Above a graph-dependent threshold, the solver is less sensitive to local
energy landscape features.}
\label{fig:phaseanalogs}
\end{figure}

\section{Construction of embedded graphs and the noise test}
\label{sec:graphdef}

Similar to \cite{ref:lancbenchmark}, we construct a ``noise test''
benchmark as a medium in which to study phase transitions in random
graphs with embedded solutions \cite{ref:rzlocal,ref:huCDPTsgd}. We
define the system ``noise'' as intercommunity edges that connect a
given node to communities other than its original or ``best''
community assignment. In general \cite{ref:rzlocal}, it is not
possible at the beginning of an attempted solution to ascertain
which edges contribute to noise and which constitute edges within
communities of the best partition(s).

For each benchmark graph, we divide $N$ nodes into $q$ communities with
a power law distribution of community sizes $\{n_i\}$ given by $n^\beta$
where $\beta=-1$.
We then connect ``intracommunity'' edges at a high average edge density
$p_{in}=0.95$.
Initially, the external edge density is zero, $p_{out}=0$, so that we have
perfectly decoupled clusters.
To this system, we add random intercommunity edges at a density
of $p_{out}<0.5$.
We define $p_{in}$ ($p_{out}$) as the ratio of the number of intracommunity
(intercommunity) edges over the maximum possible intracommunity (intercommunity)
edges.

We define the \emph{average external degree} of each node $Z_{out}$ as the
average number of links that a given node has with nodes in communities other
than its own.
Similarly, the \emph{average internal degree} $Z_{in}$ is defined as the average
number of links to nodes in the \emph{same} community, and $Z_{in} + Z_{out} = Z$
where $Z$ is the average coordination number.
Then we can explicitly write the internal and external edge densities
\begin{equation}
  p_{in}=\frac{NZ_{in}}{\sum_{a=1}^q n_a(n_a-1)},
  \label{eq:pin}
\end{equation}
and
\begin{equation}
  p_{out}=\frac{NZ_{out}}{\sum_{a=1}^q\sum_{b \neq a}^q n_a n_b}.
  \label{eq:pout}
\end{equation}
where $n_a$ denotes the size of community $a$.

The communities in this construction are well defined, on average,
at reasonable levels of noise ($p_{out}\lesssim 0.3$ depending on
the typical community size $n$). As external links are progressively
added to the system ($p_{out}$ increases), the communities become
increasingly difficult to detect. At some stage, enough noise is
added and $p_{out}$ is sufficiently high that the planted partition
cannot be detected despite the fact that the optimal ground state is
still well-defined. This transition often occurs sharply,
particularly for large networks. We investigate the phase transition
from the solvable to unsolvable phases at both low and high
temperatures by means of the heat bath algorithm described in
\Appref{app:HBA}.

\begin{figure}[t!]
\centering
\includegraphics[width=0.9\columnwidth]{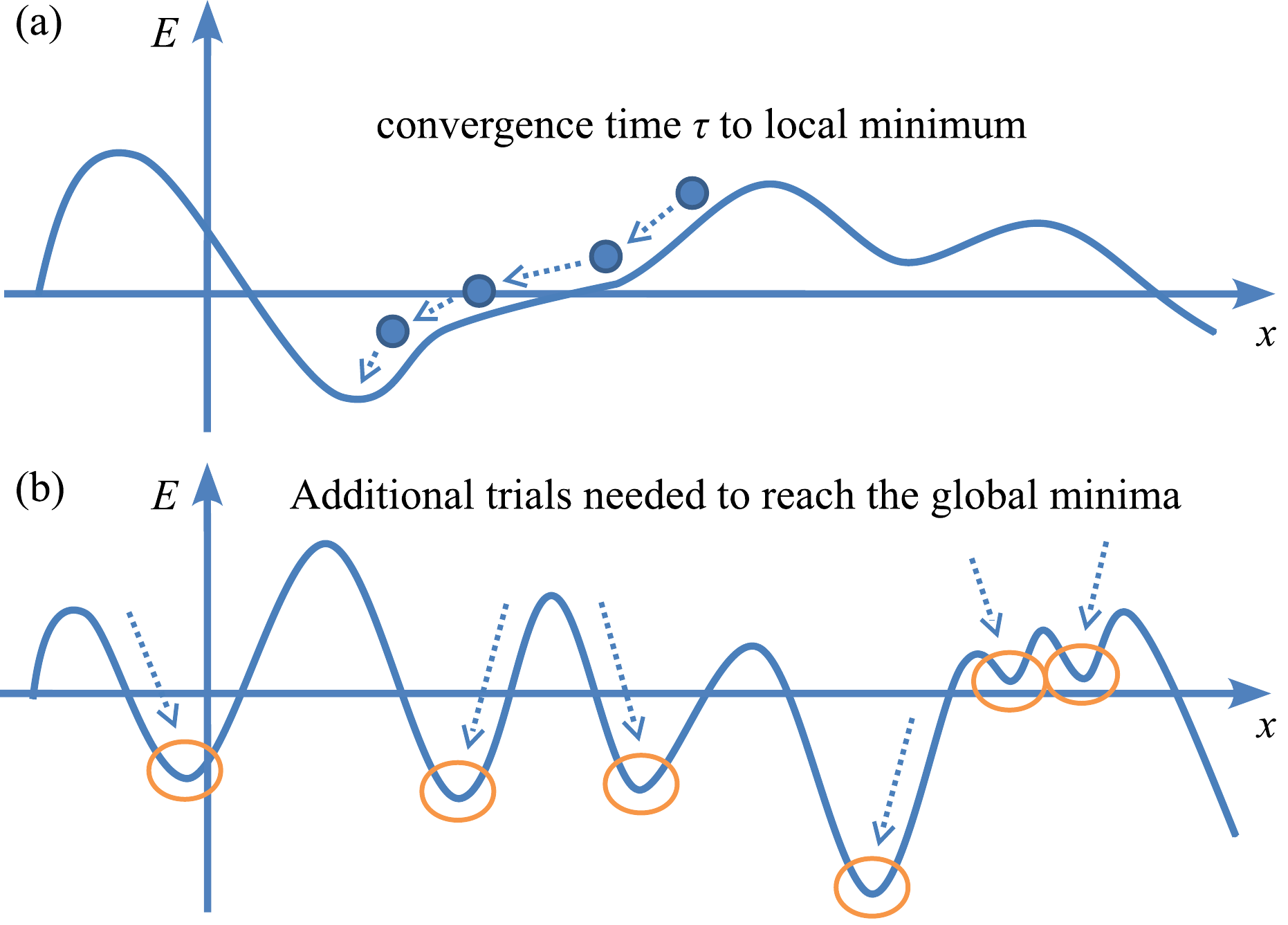}
\caption{(Color online) The figure schematically illustrates the convergence
time of a solver in panel (a) and the effect of additional optimization trials
in panel (b).
Additional optimization trials are utilized in a ``computational susceptibility''
$\chi$ in order to numerically estimate the complexity of the energy landscape
(see \Appref{app:chi}).}
\label{fig:chianalog}
\end{figure}

\begin{figure*}[t]
\begin{center}
\subfigure[\ $N=256$, $q=4$, $\alpha
=0.016$]{\includegraphics[width=\subfigwidth]{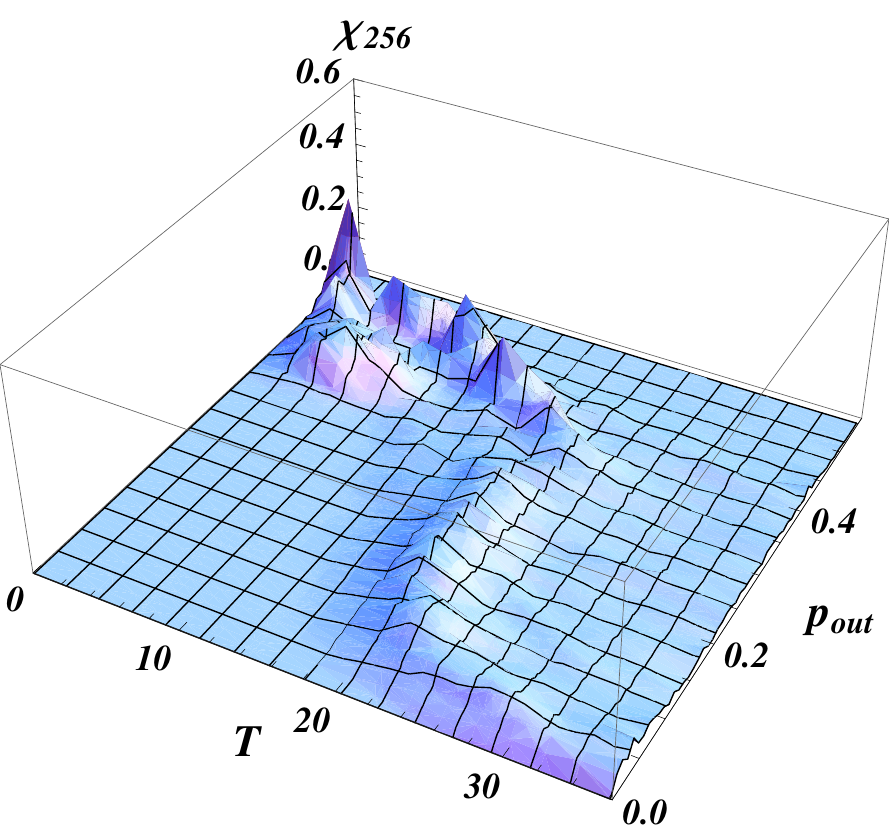}}
\subfigure[\ $N=512$, $q=8$, $\alpha
=0.016$]{\includegraphics[width=\subfigwidth]{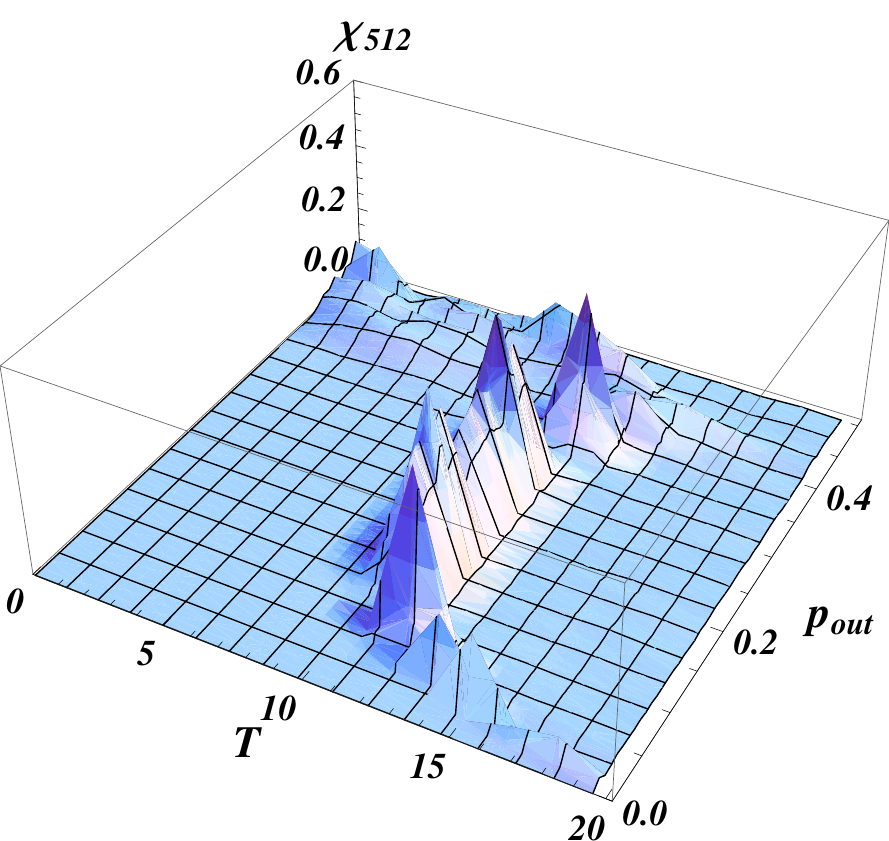}}
\subfigure[\ $N=1024$, $q=16$, $\alpha
=0.016$]{\includegraphics[width=\subfigwidth]{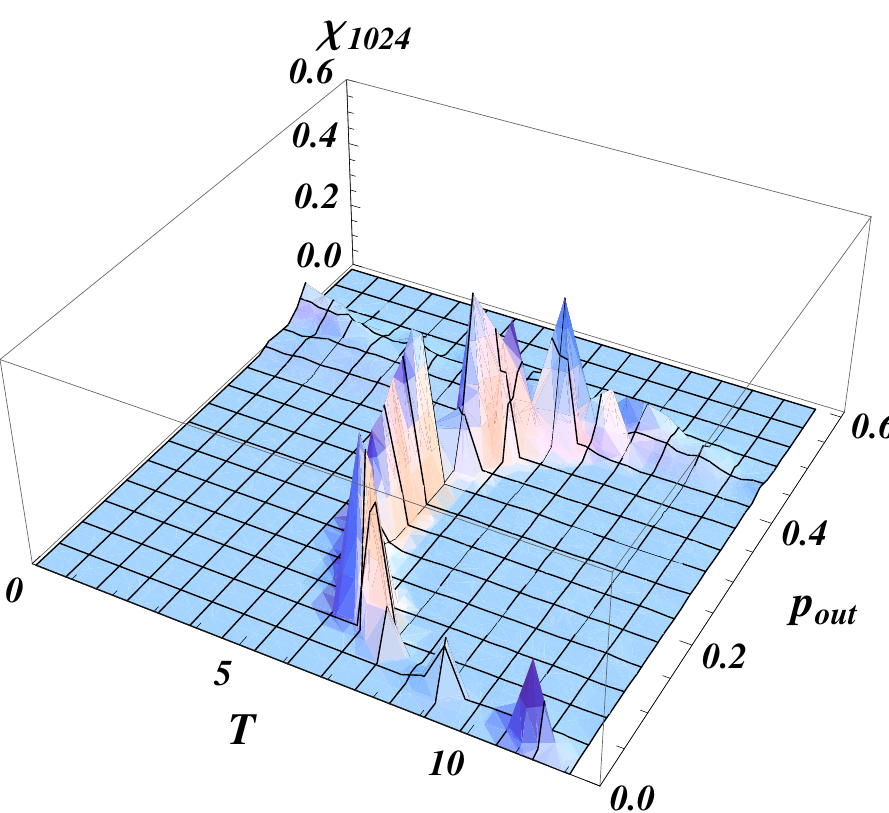}}
\subfigure[\ $N=2048$, $q=32$, $\alpha
=0.016$]{\includegraphics[width=\subfigwidth]{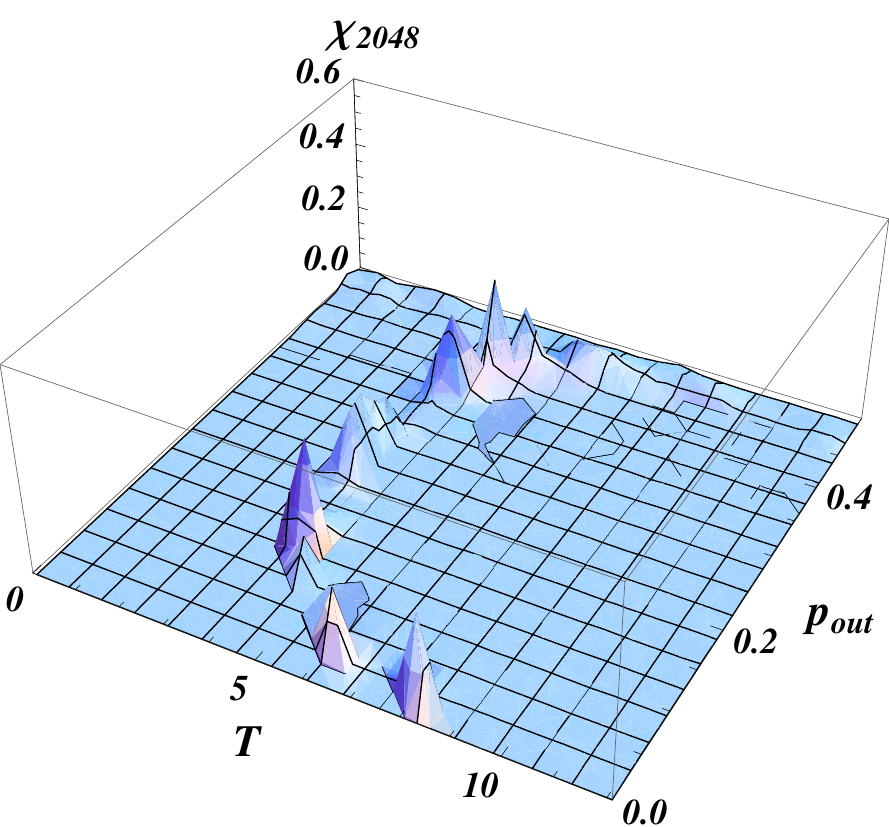}}
\subfigure[\ $N=256$, $q=18$, $\alpha
=0.07$]{\includegraphics[width=\subfigwidth]{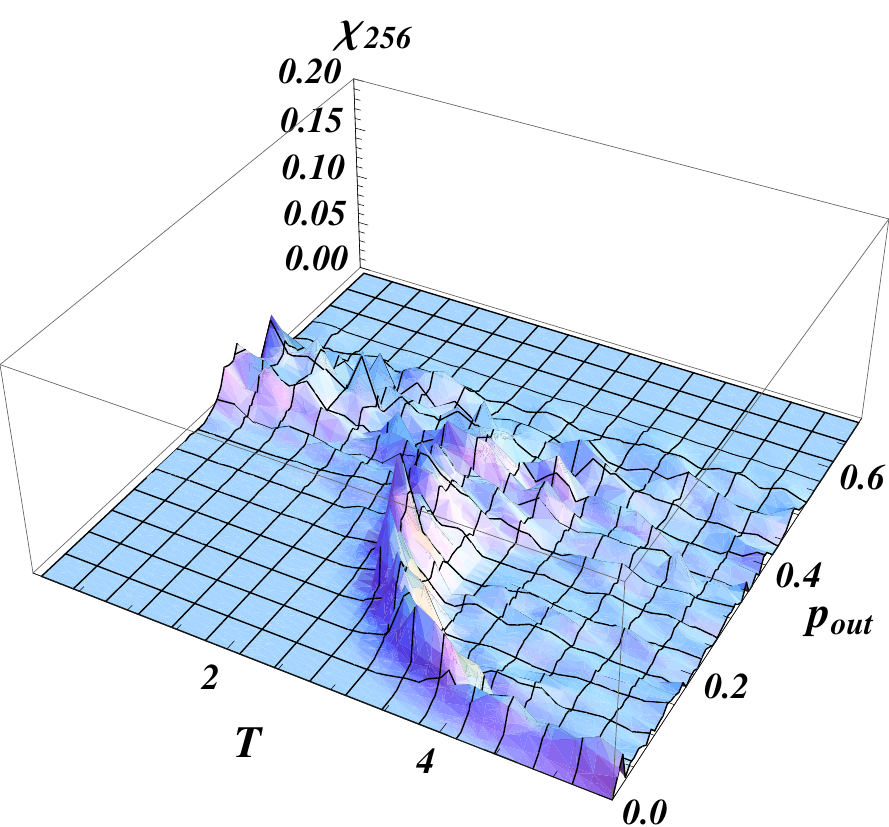}}
\subfigure[\ $N=512$, $q=35$, $\alpha
=0.07$]{\includegraphics[width=\subfigwidth]{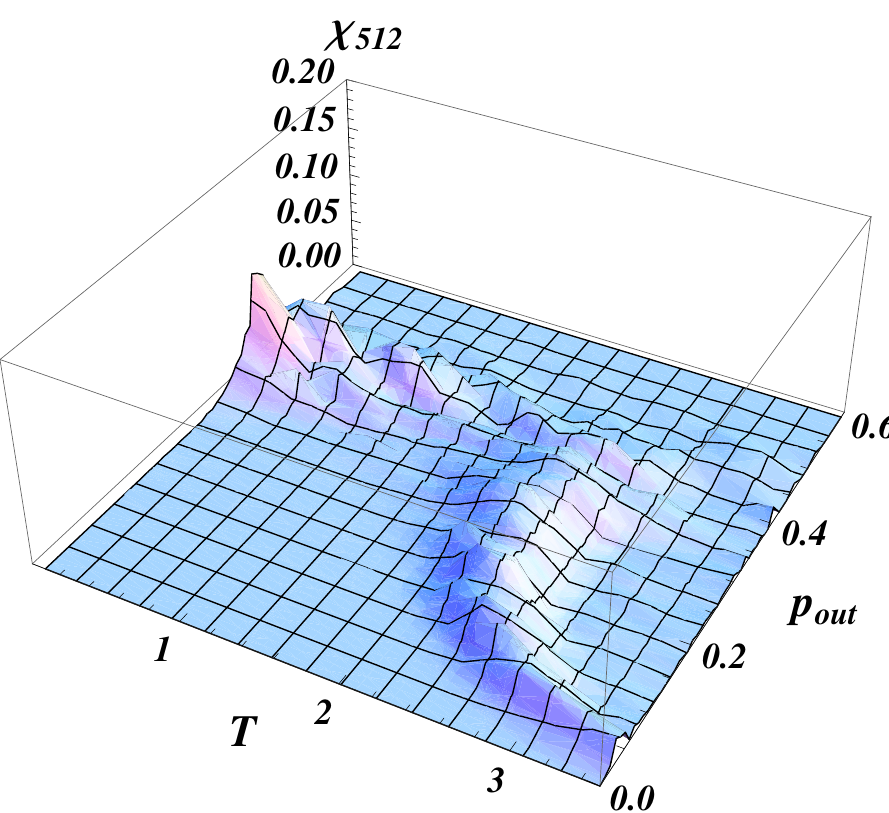}}
\subfigure[\ $N=1024$, $q=70$, $\alpha
=0.07$]{\includegraphics[width=\subfigwidth]{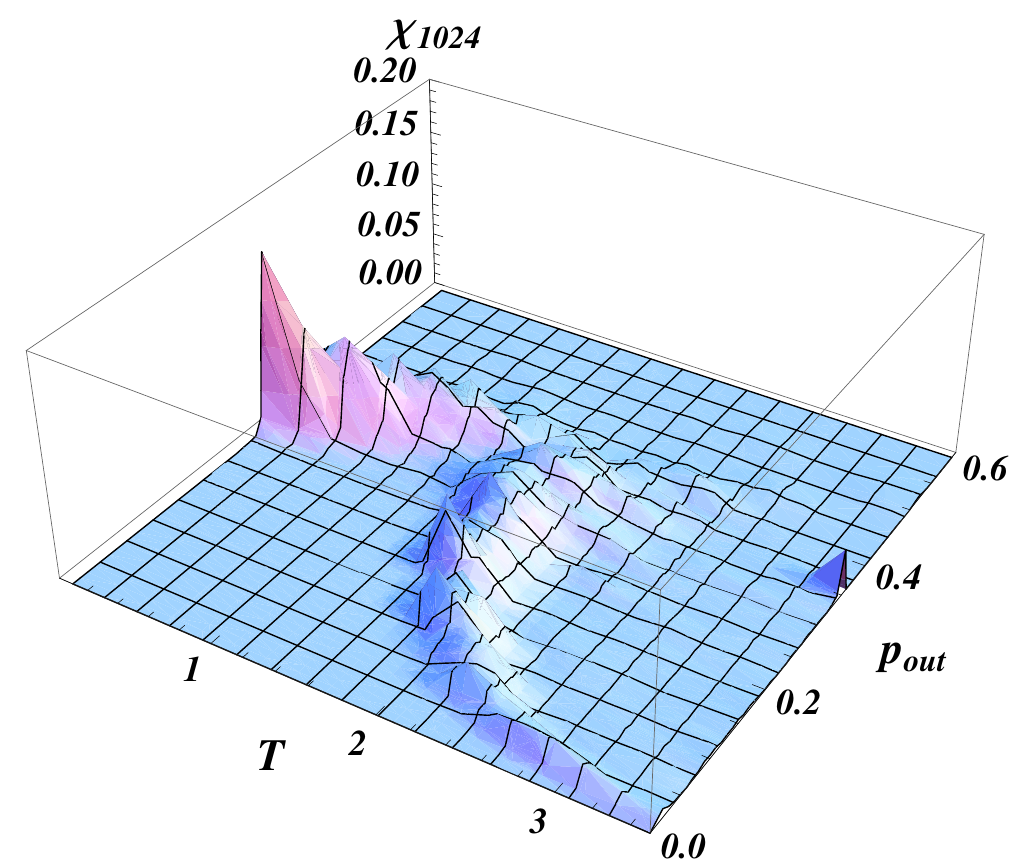}}
\subfigure[\ $N=2048$, $q=140$, $\alpha
=0.07$]{\includegraphics[width=\subfigwidth]{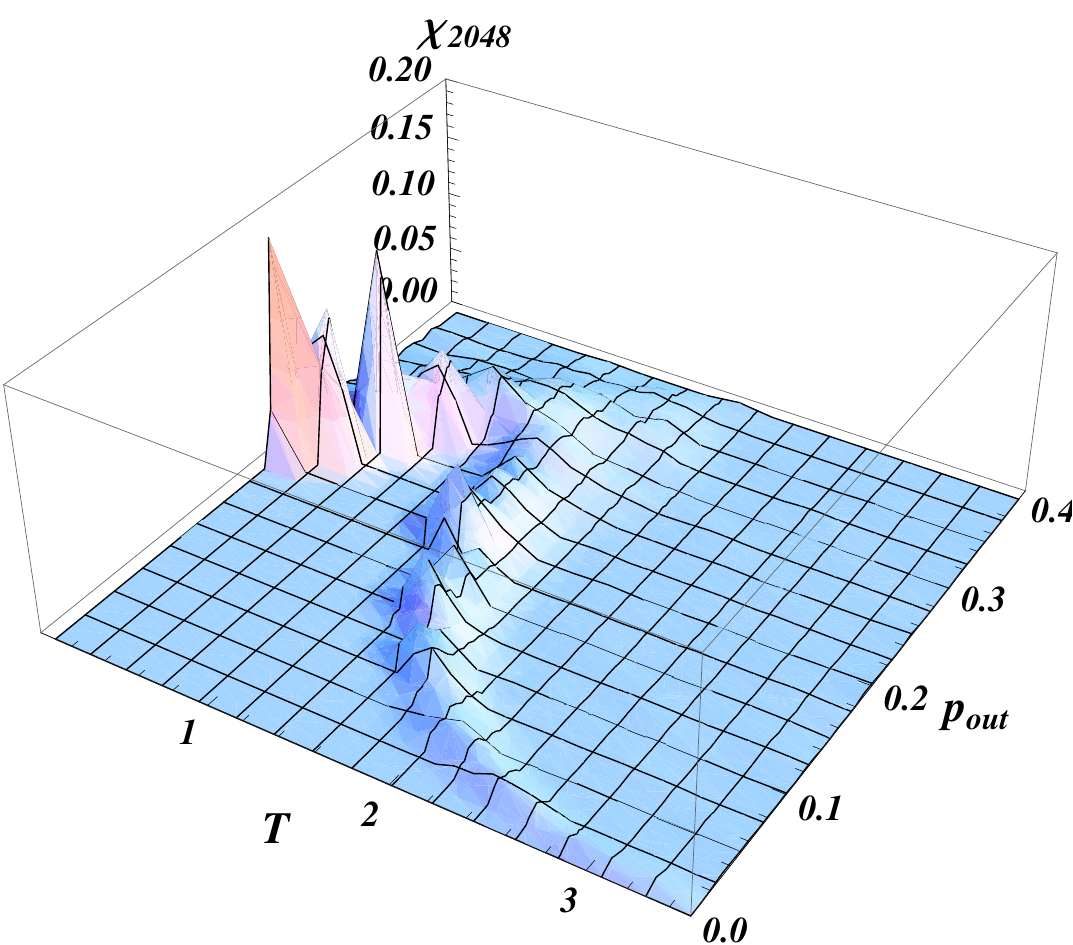}}
\subfigure[\ $N=128$, $q=20$, $\alpha
=0.15$]{\includegraphics[width=\subfigwidth]{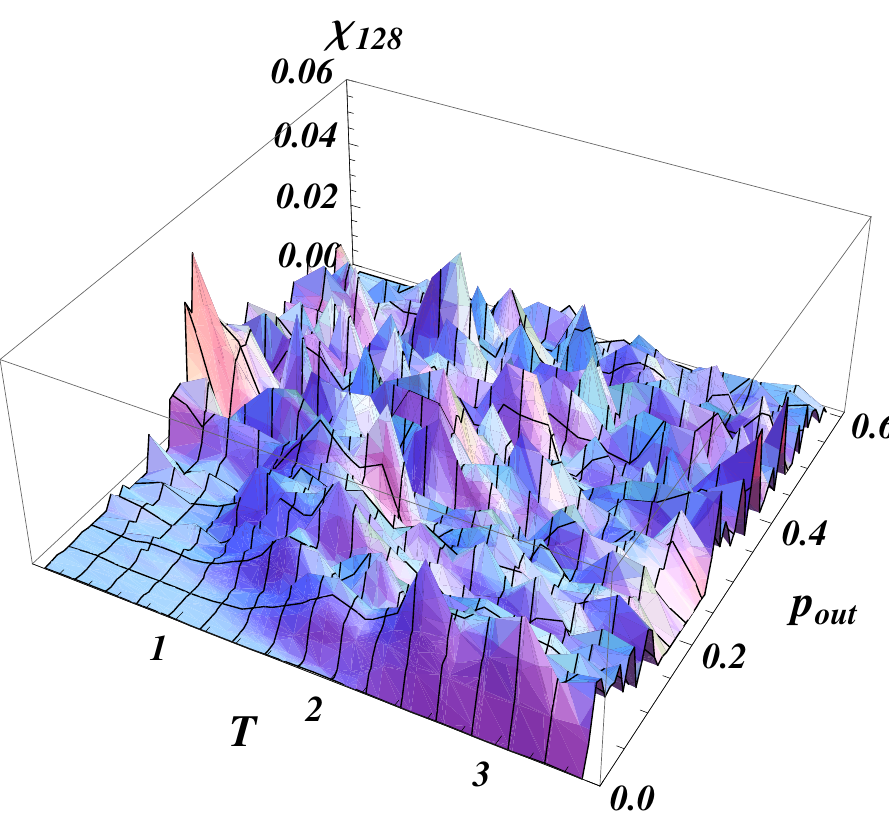}}
\subfigure[\ $N=256$, $q=40$, $\alpha
=0.15$]{\includegraphics[width=\subfigwidth]{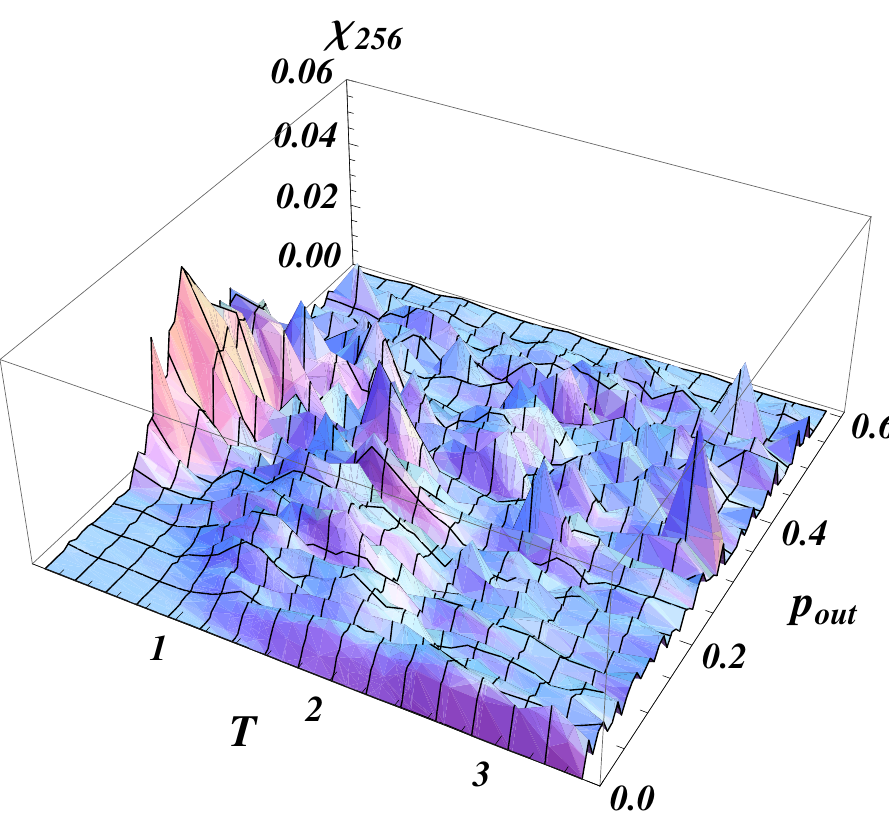}}
\subfigure[\ $N=512$, $q=80$, $\alpha
=0.15$]{\includegraphics[width=\subfigwidth]{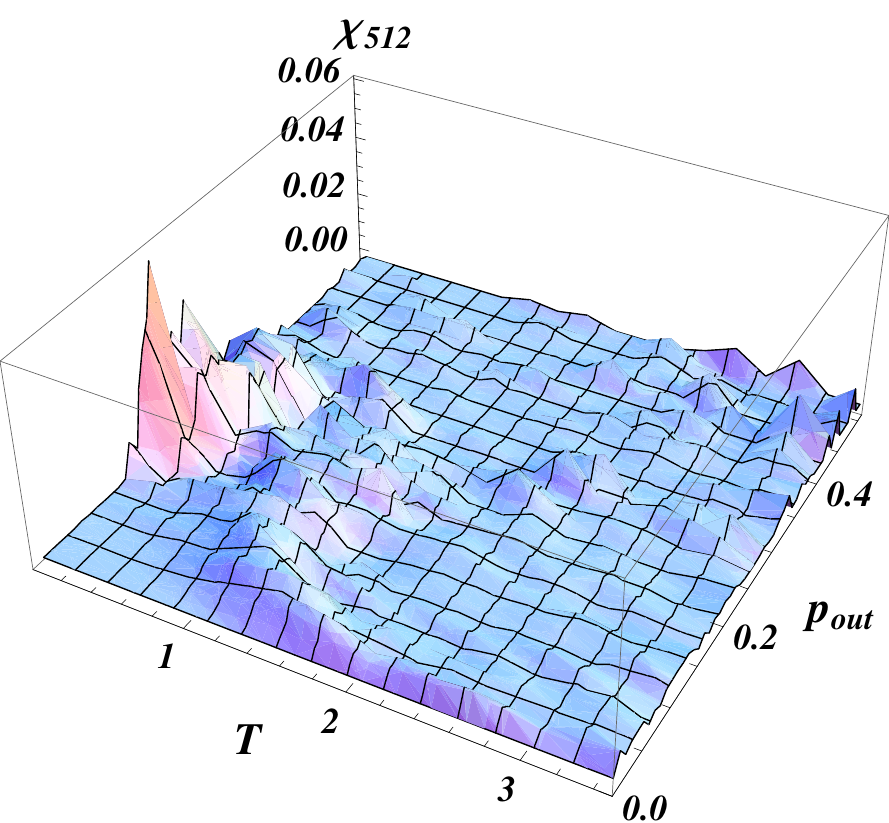}}
\subfigure[\ $N=1024$, $q=160$, $\alpha
=0.15$]{\includegraphics[width=\subfigwidth]{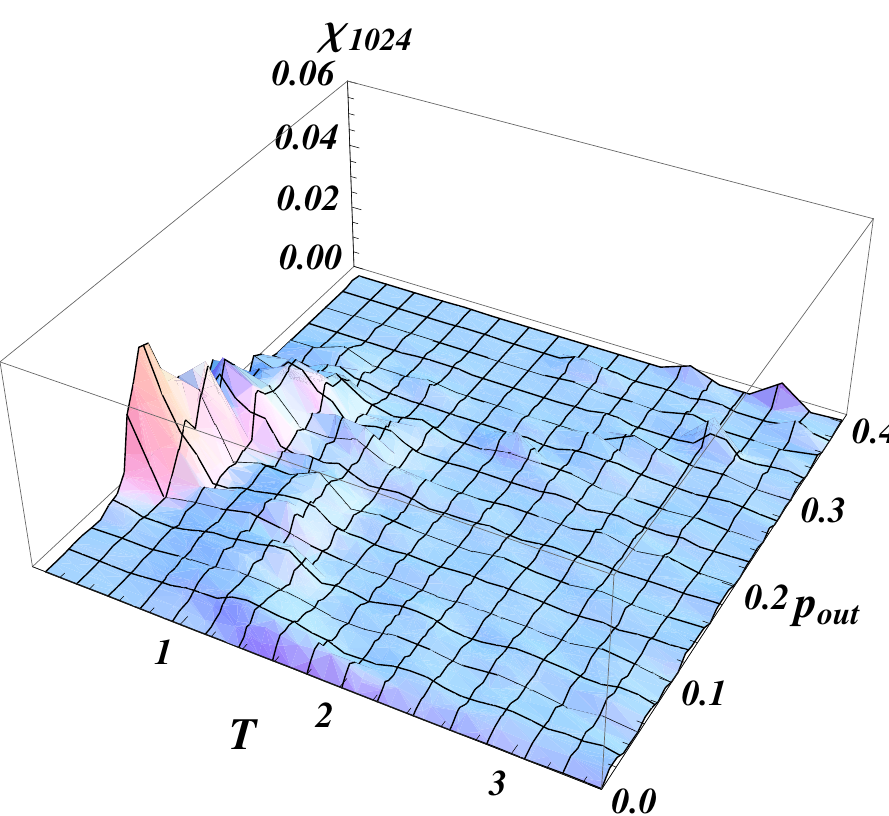}}
\end{center}
\caption{(Color online) Each panel shows a 3D plot $\chi(T,p_{out})$ as a function
of temperature $T$ and noise level $p_{out}$ for systems with the indicated number
of nodes $N$, communities $q$, and $\alpha=q/N$ ratio.
In panels (a--h) for $\alpha=0.016$ and $0.07$, all plots show three clear phases,
and the ``ridges'' at low and high temperatures mark the hard phase.
The hard phase separates the easy phase (the flat region in the lower left corner
with low temperature and low noise) from the unsolvable phase (the flat region
in the upper right corner with high temperature and high noise).
In panels (a)--(h), the ridges in $\chi(T,p_{out})$ become narrower
as $N$ increases.
The area of the easy (hard) regions decreases (increases) from panel (a) to (d)
and (e) to (h), respectively.
In panels (a--d) for $\alpha=0.016$, the hard phase at low temperature becomes
less prominent from panel (a) to (d), but it becomes more prominent at high
temperature.
In panels (e--h) for $\alpha=0.07$, the hard phase at low temperature becomes
more prominent from panel (e) to (h), but it remains constant at high temperature.
In panels (i--l) for $\alpha=0.15$, only the larger systems with $N\ge 512$ show
clear phases.
The smaller systems with $N=128$ in panel (i) and $N=256$ in panel (j) show
very noisy phases where only the easy phase can be readily determined, and
the boundaries for the hard and unsolvable phases are difficult to pinpoint.}
\label{fig:susAllalpha}
\end{figure*}

\section{Spin glass type transitions}
\label{sec:transition}

We previously reported \cite{ref:huCDPTsgd} on the existence of \emph{two}
spin-glass-type transitions in the constructed graphs mentioned in \secref{sec:graphdef}.
Evidence for the transitions are observed in several measures such the accuracy
of the solution obtained by means of the APM in \eqnref{eq:ourpotts}
(and other models \cite{ref:smcd,ref:rzlocal} in general), the computational
effort required to converge to a solution \cite{ref:rzmultires,ref:rzlocal},
entropy effects, and others.
Compared to another Potts-type qualtity function \cite{ref:smcd}
utilizing a ``null model'' (a random graph used to evaluate the
quality of a candidate partition), the APM exhibits a somewhat
sharper transition as $N$ is increased \cite{ref:rzlocal}. As
alluded to above, two transitions are generally encountered as the
noise value (or temperature) is increased. At fixed temperature $T$,
as $p_{out}$ is steadily increased from zero, the first onset of
spin glass behavior first appears for values $p_{1} \le p_{out} \le
p_{2}$.

\subFigref{fig:phaseanalogs}{a} illustrates a one dimension
characterization of the easy and hard phases in terms of the level
of noise (extraneous intercommunity edges) encountered by a greedy
solver. It is in this context that greedy algorithms are, in
general, more easily trapped in local energy minima above a certain
noise threshold. Stochastic solvers such a heat bath algorithm
discussed in \Appref{app:HBA} or simulated annealing (SA) enable one
to circumvent noise to some extent, but excessive levels will even
thwart these more robust solvers because meaningful information is
eventually obscured by the complexity of the energy landscape.
\subfigref{fig:phaseanalogs}{b,c} depict the easy and hard phases at
low and high temperatures $T$, respectively, for our HBA (see
\Appref{app:HBA}). Above a graph-dependent threshold, the solver is
insensitive to local features, and it is unable to find an accurate
solution.

We showed that \eqnref{eq:ourpotts} is robust to noise \cite{ref:rzlocal}
leading to exceptional accuracy even with a greedy algorithm.
Some other methods and cost-functions \cite{ref:blondel,ref:traaglocalscope}
have also proven to be very accurate \cite{ref:lancLFRcompare} with a
greedy-oriented algorithm.
While maximizing modularity \cite{ref:gn} and a closely related cost function
in \cite{ref:smcd} have proven to be accurate and productive,
Refs.\ \cite{ref:fortunato,ref:goodMC,ref:lancfortunatomod} have discussed
problems associated with maximizing modularity in community detection.
We briefly illustrated \cite{ref:rzlocal} a correspondence between the major
transition experienced by \eqnref{eq:ourpotts} and a Potts model in \cite{ref:smcd}.
We conjecture the existence of a related transition for random knots
in \Appref{app:trefoilknot}.

In \secsref{sec:sec1}{sec:sec2}, we elaborate on the transitions using a
computational susceptibility $\chi$ as defined in \Appref{app:chi}.
In analogy with other physical susceptibility parameters, $\chi$ measures
the response of the system to additional optimization effort.
We schematically illustrate the effect in \figref{fig:chianalog}.
A higher $\chi$ indicates a more disordered, but navigable, energy landscape
where a low $\chi$ indicates that additional optimization has less effect
whether due to extreme disorder or a trivially solvable system.
Finally in \secref{sec:sec3}, we illustrate the transitions using additional
stability measures.

\begin{figure*}[t]
\begin{center}
\subfigure[\ ``hard'' phase boundary for
$\alpha=0.016$]{\includegraphics[width=\subfigwidthnew]{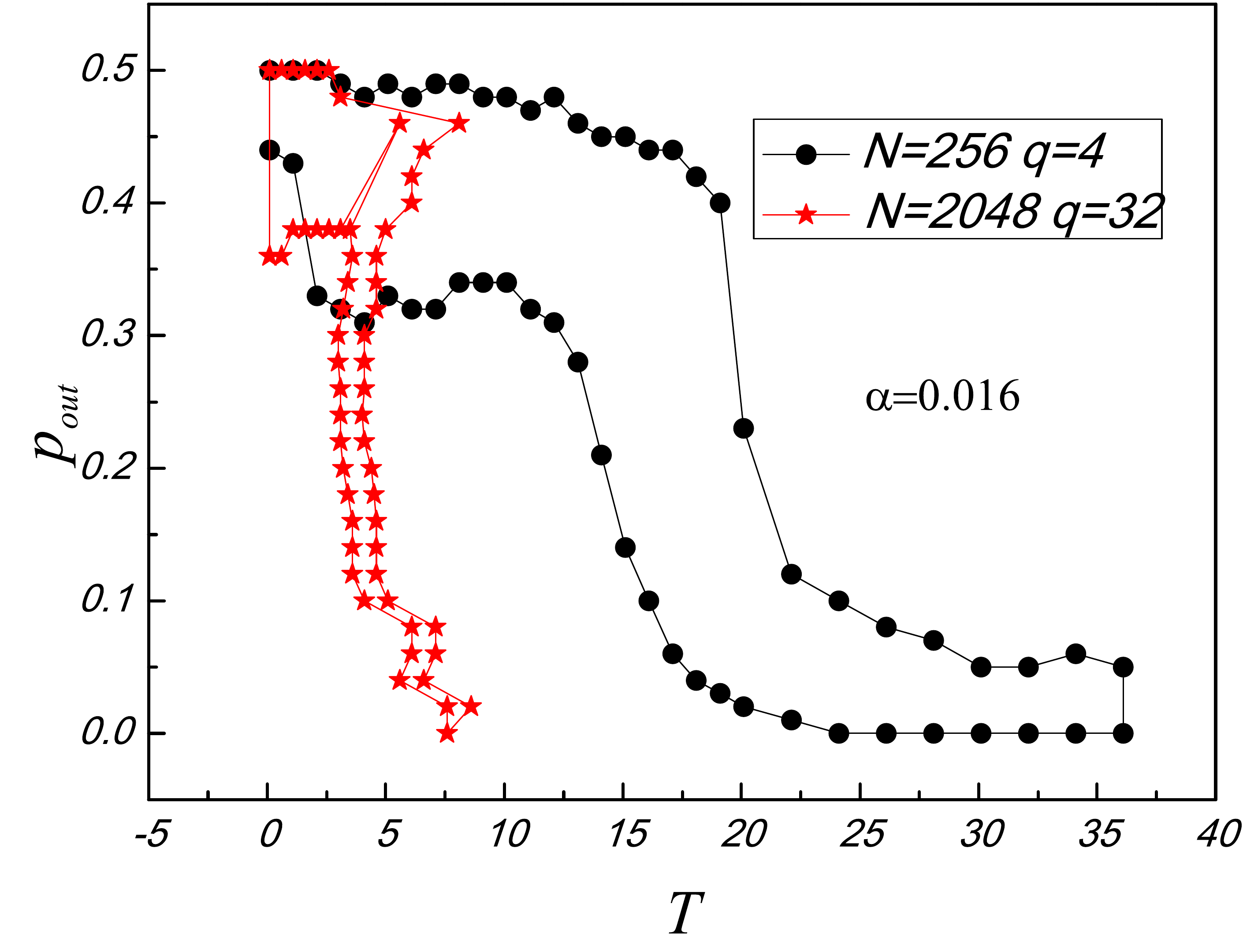}}
\subfigure[\ ``hard'' phase boundary for
$\alpha=0.07$]{\includegraphics[width=\subfigwidthnew]{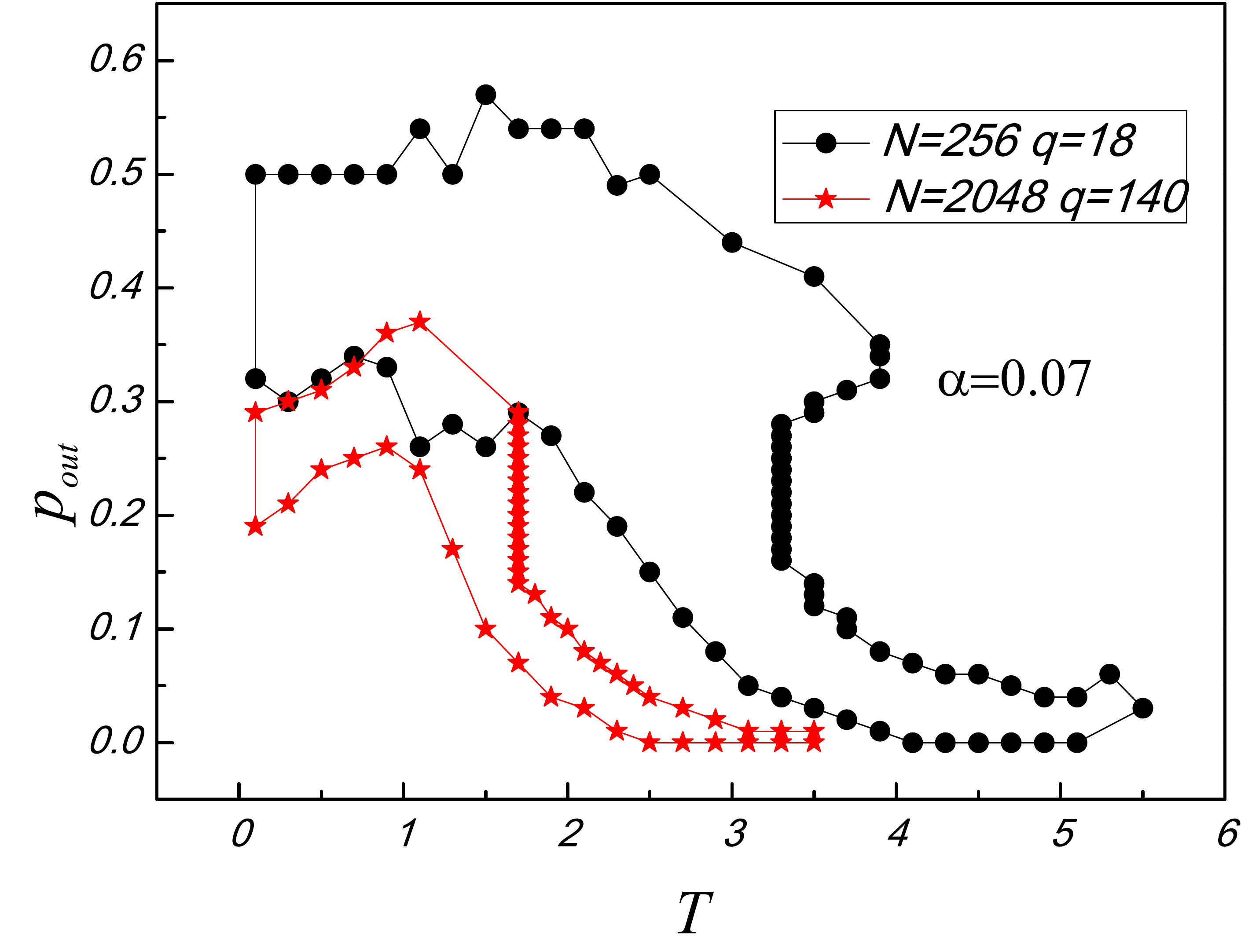}}
\subfigure[\ ``hard'' phase boundary for
$\alpha=0.15$]{\includegraphics[width=\subfigwidthnew]{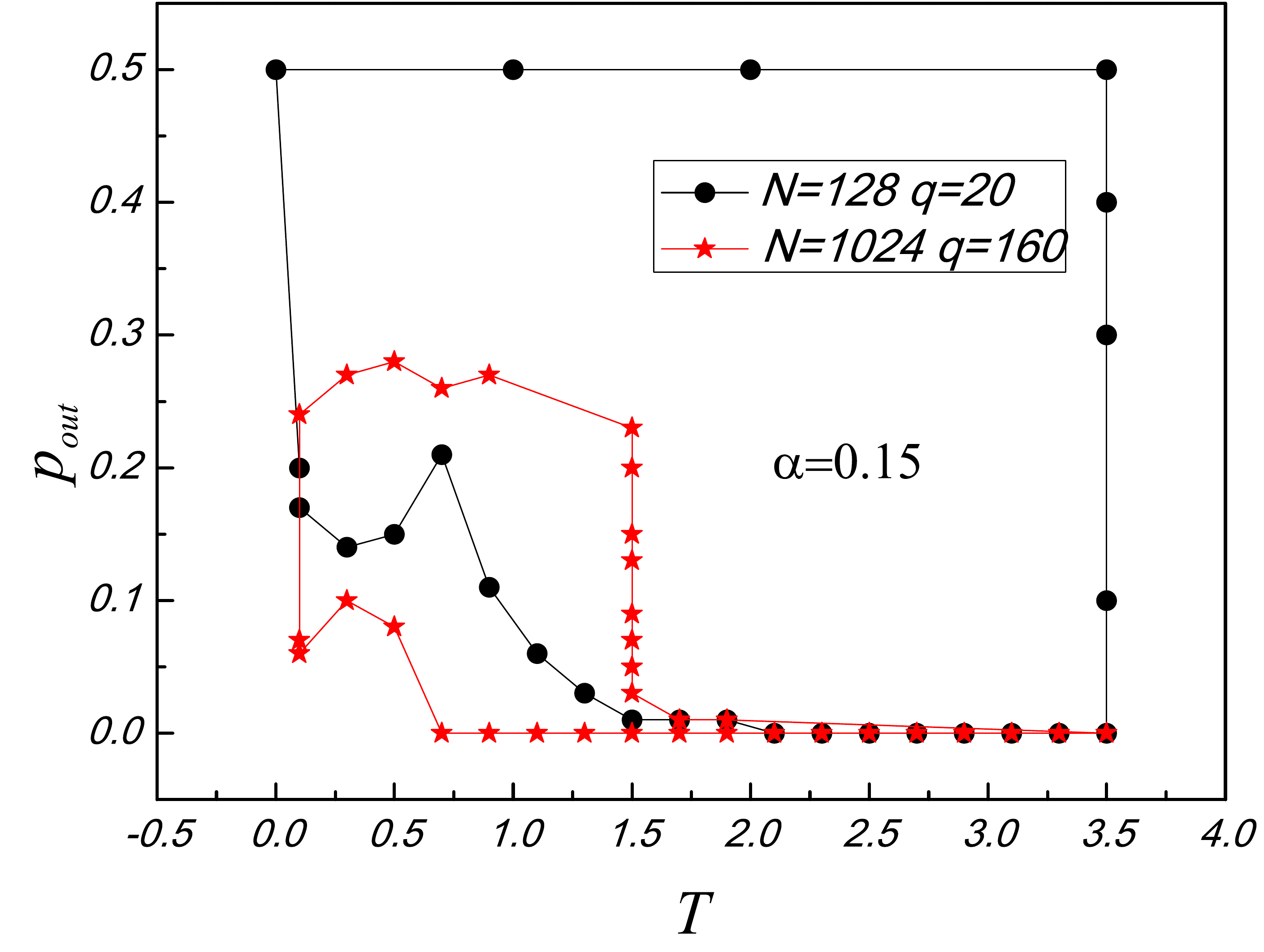}}
\end{center}
\caption{(Color online) Corresponding to \figref{fig:susAllalpha} and
\secref{sec:sec1}, each plot depicts the boundaries of the hard phase
for the system series with a fixed $\alpha=q/N$ ratio.
Panels (a), (b), and (c) show the results for $\alpha=0.016$, $\alpha=0.07$,
and $\alpha=0.15$, respectively.
System sizes range from $N=256$ to $2048$, and $q$ varies from $4$ to $160$
as indicated in each plot.
For each $\alpha$, the area within hard phase boundary becomes progressively
narrower indicating that the transitions from the easy to unsolvable
phases are more clear in the thermodynamic limit.}
\label{fig:sus2dalpha}
\end{figure*}

\begin{figure*}[t]
\begin{center}
\subfigure[\ $p_1(T)$ with $\alpha=0.016$]{\includegraphics[width=\subfigwidthnew]{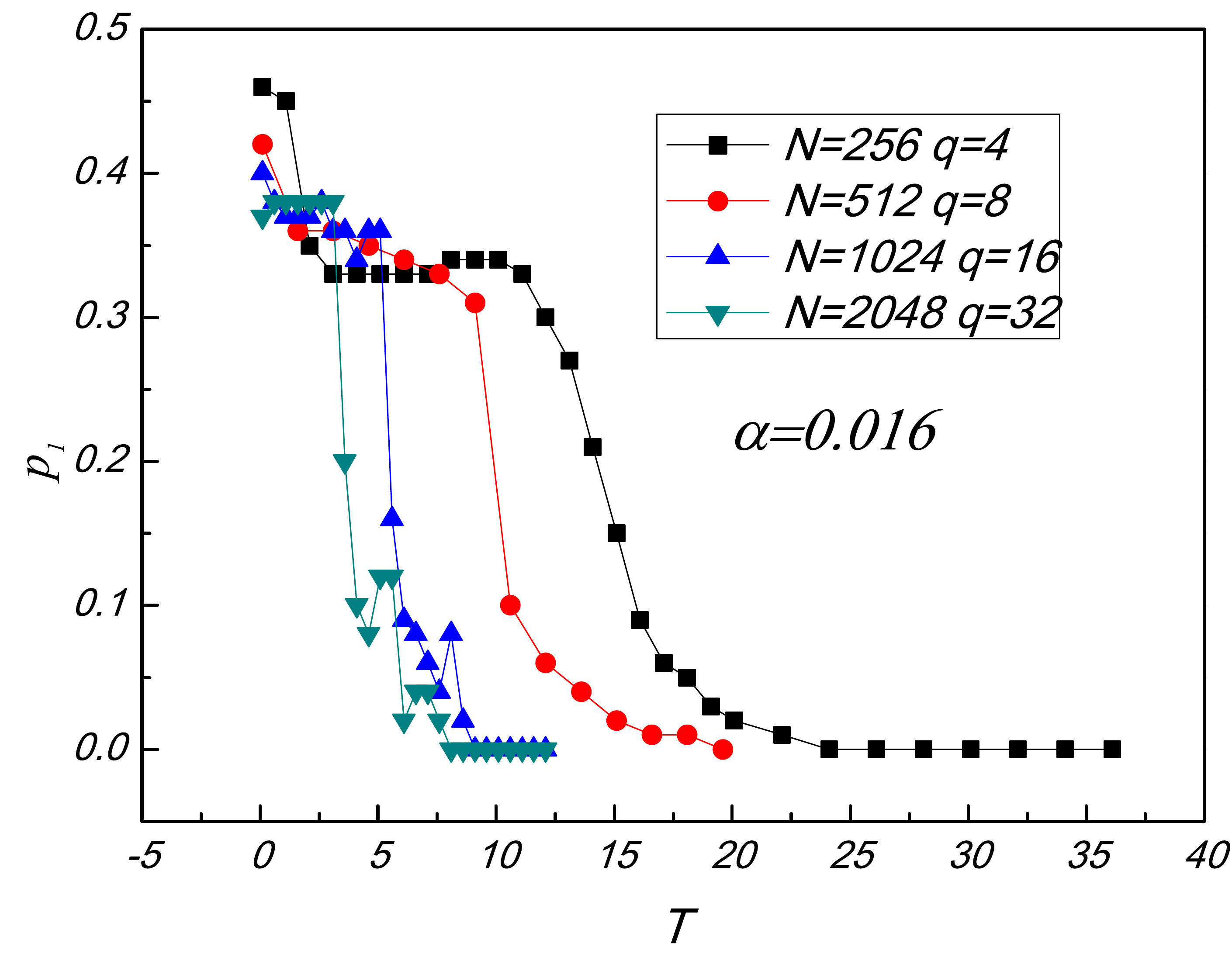}}
\subfigure[\ $p_1(T)$ with $\alpha=0.07$]{\includegraphics[width=\subfigwidthnew]{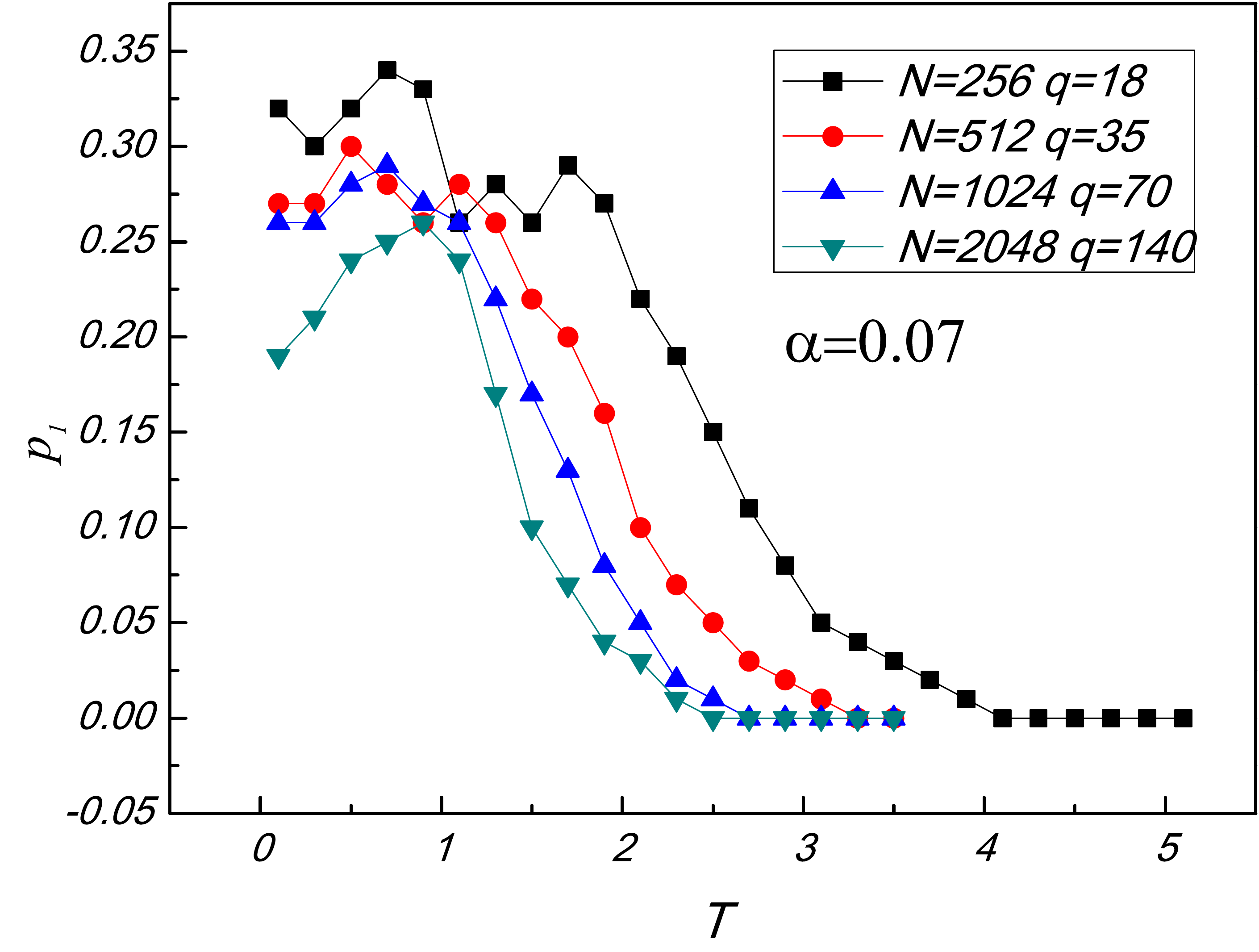}}
\subfigure[\ $p_1(T)$ with $\alpha=0.15$]{\includegraphics[width=\subfigwidthnew]{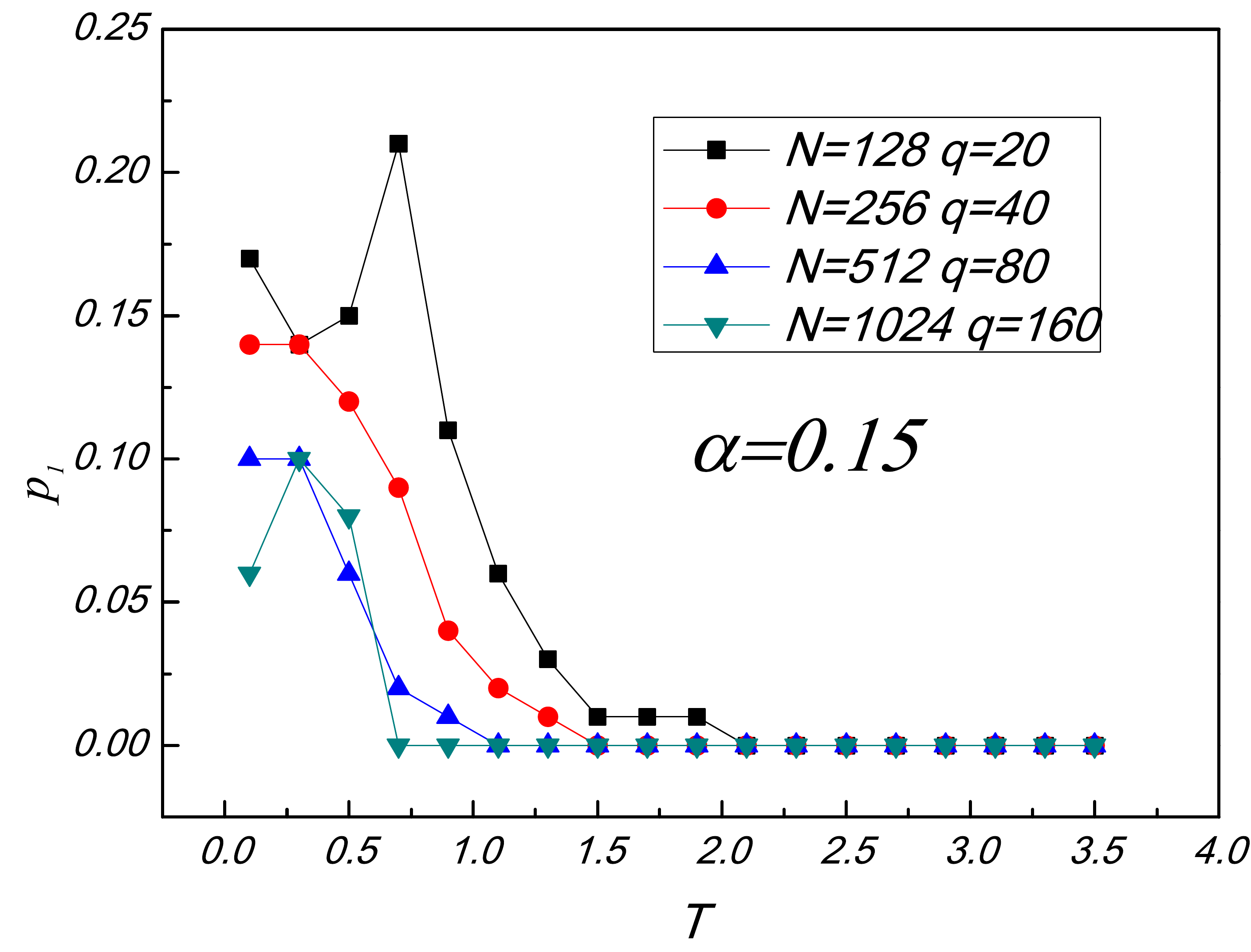}}
\end{center}
\caption{(Color online) Corresponding to \figref{fig:susAllalpha} and
\secref{sec:sec1}, each plot depicts the first phase transition point $p_1$
as a function of the temperature $T$ for systems with a fixed ratio of $\alpha=q/N$.
Panels (a), (b), and (c) show the results for $\alpha=0.016$, $\alpha=0.07$,
and $\alpha=0.15$, respectively.
System sizes range from $N=256$ to $2048$, and $q$ varies from $4$ to $160$
as indicated in each plot.
All panels show that when $\alpha$ is fixed, the value of the first transition
point $p_1$ decreases as the system size increases.
This behavior further indicates that the system becomes more complex to solve
in the thermodynamic limit.}
\label{fig:p1alpha}
\end{figure*}

\begin{figure*}[t]
\begin{center}
\subfigure[\ $\tau$ with
$\alpha=0.016$]{\includegraphics[width=\subfigwidthtoo]{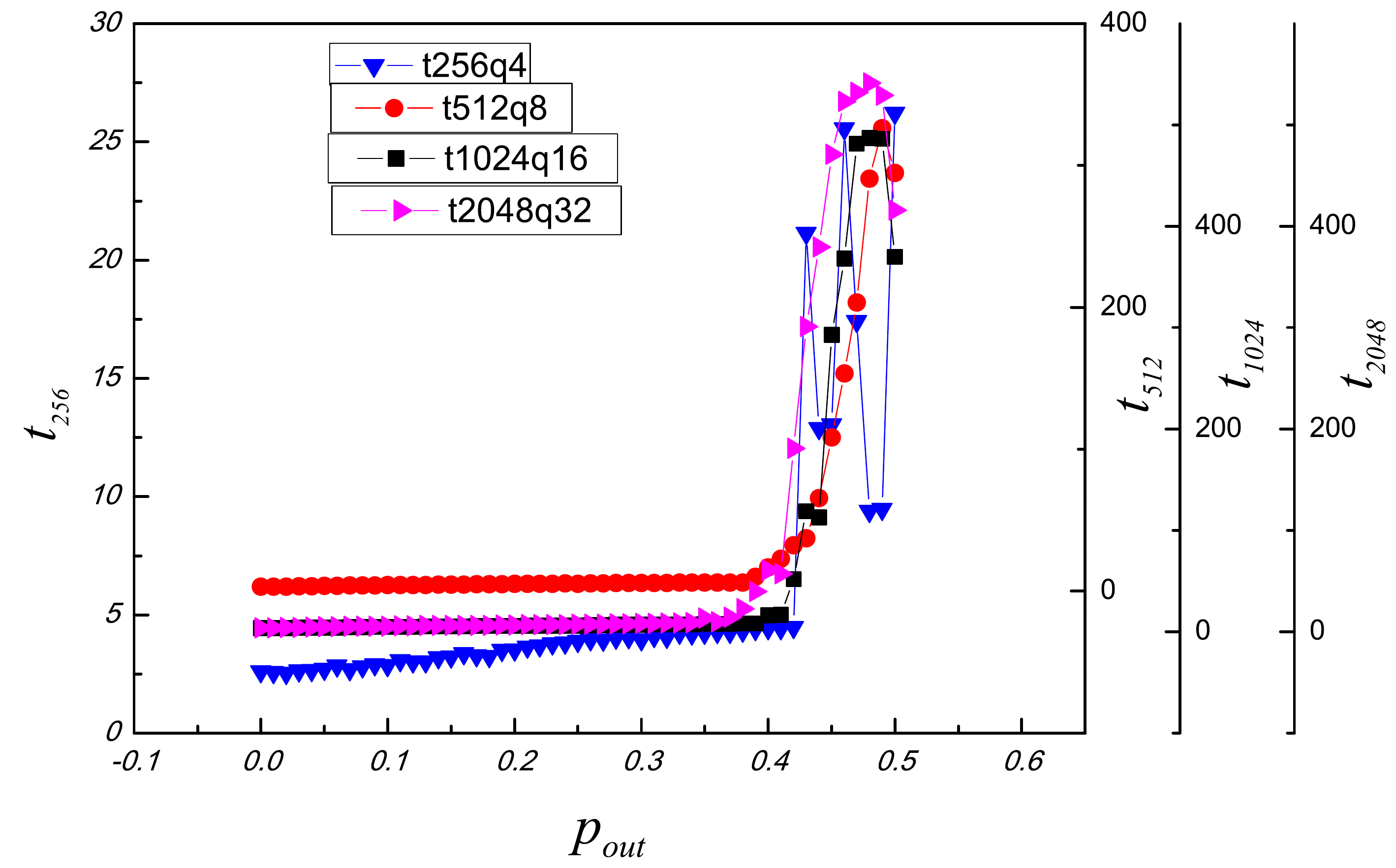}}
\subfigure[\ $\tau$ with
$\alpha=0.07$]{\includegraphics[width=\subfigwidthtoo]{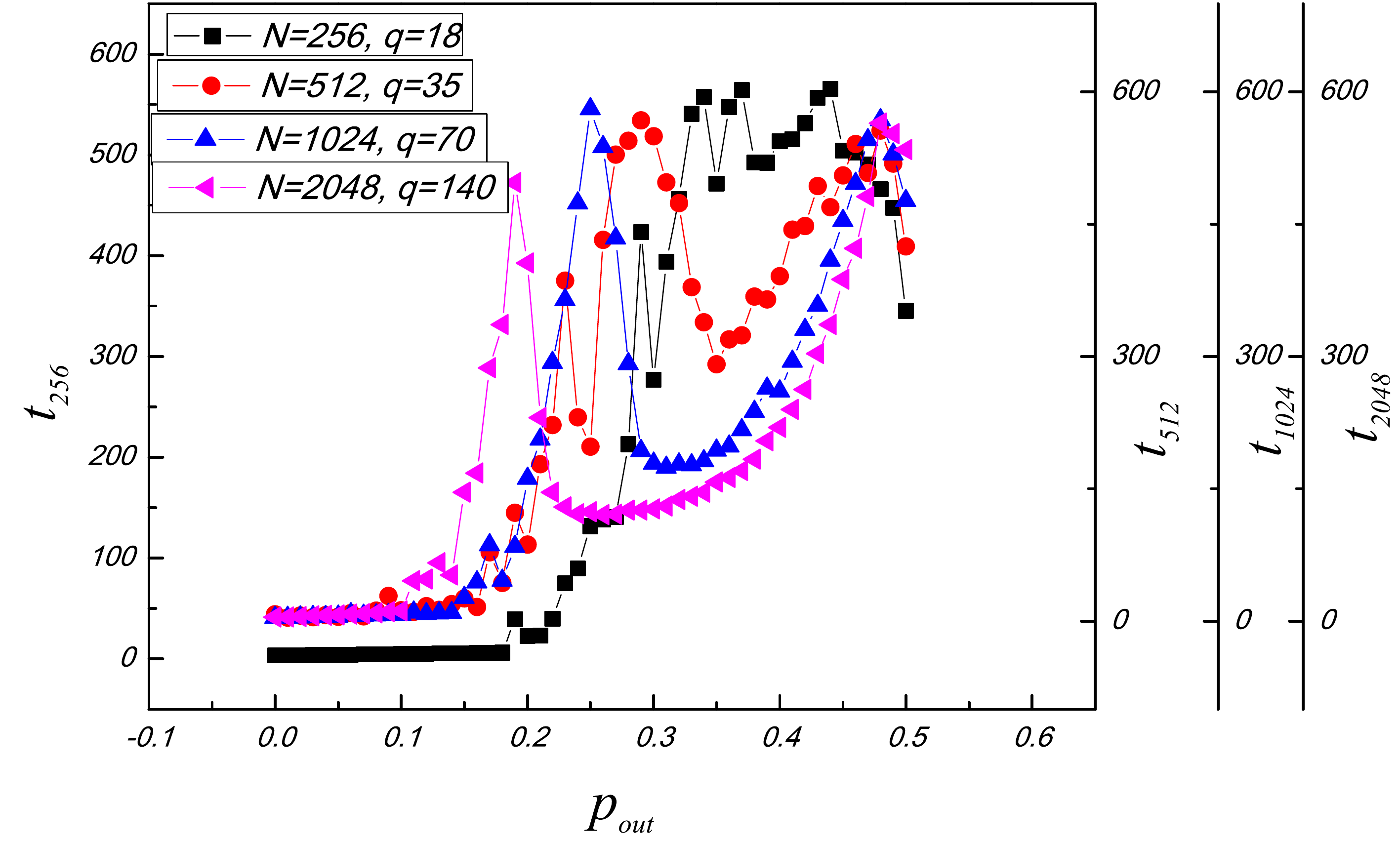}}
\subfigure[\ $\tau$ with
$\alpha=0.15$]{\includegraphics[width=\subfigwidthtoo]{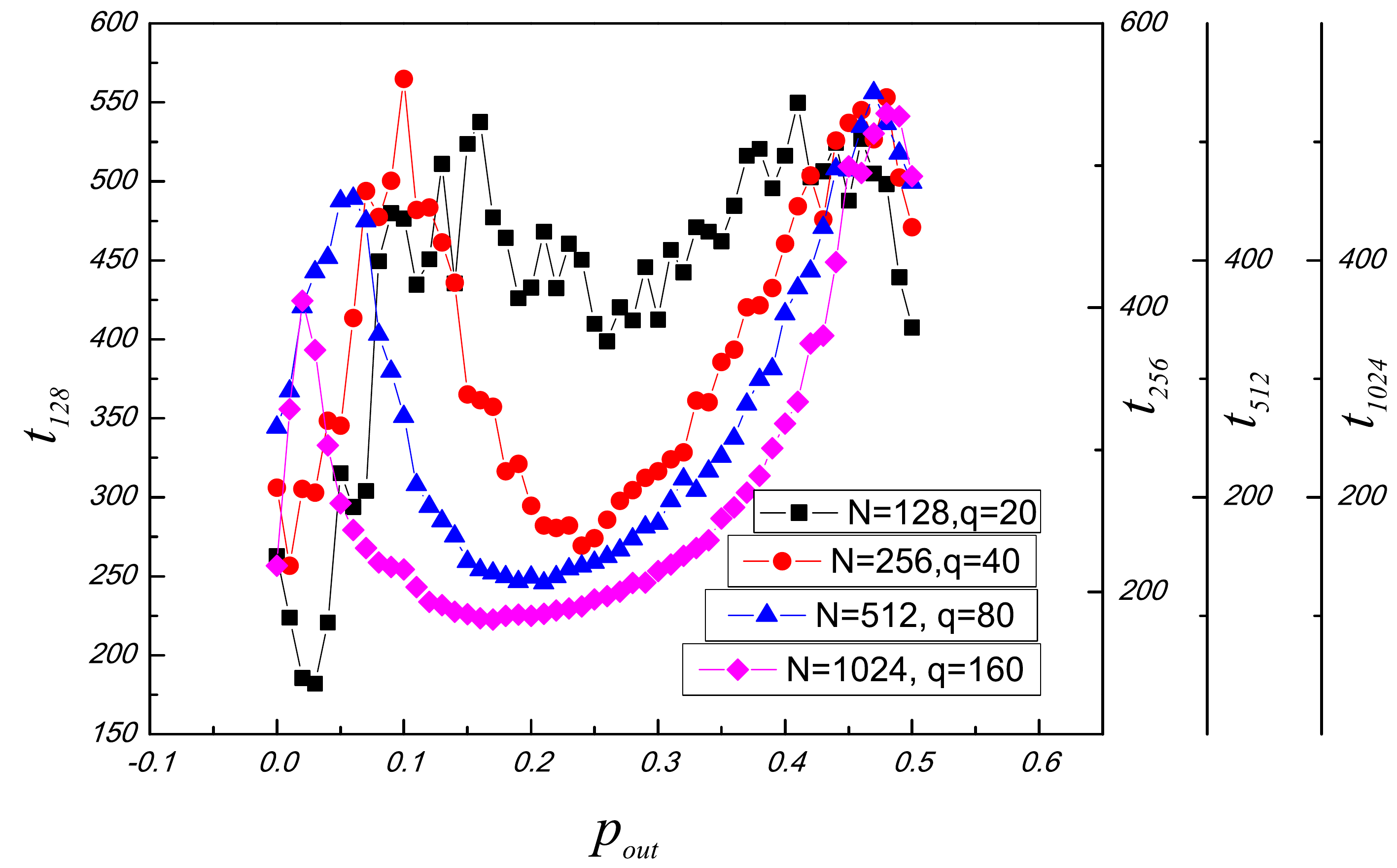}}
\end{center}
\caption{(Color online) Corresponding to \figref{fig:susAllalpha} and
\secref{sec:sec1}, the convergence time $\tau$ [see
\subfigref{fig:chianalog}{a}] as a function of noise $p_{out}$ at
zero temperature for the systems with a fixed ratio of $\alpha=q/N$.
Panels (a), (b), and (c) show the results for $\alpha=0.016$,
$\alpha=0.07$ and $\alpha=0.15$, respectively. System sizes range
from $N=256$ to $N=2048$, and $q$ varies from $4$ to $160$ as
indicates in each plot. The noise level $p_{out}$ at the first peak
of the convergence time corresponds to the first transition point
$p_1$ in \figref{fig:p1alpha} at zero temperature. As the system
size increases, the first peak in the convergence time moves to the
left. They share the same trend as in \figref{fig:sus2dalpha} and
\figref{fig:p1alpha}.} \label{fig:talpha}
\end{figure*}

\begin{figure*}[t!]
\begin{center}
\subfigure[\ $N=256$,
$q=16$]{\includegraphics[width=\subfigwidth]{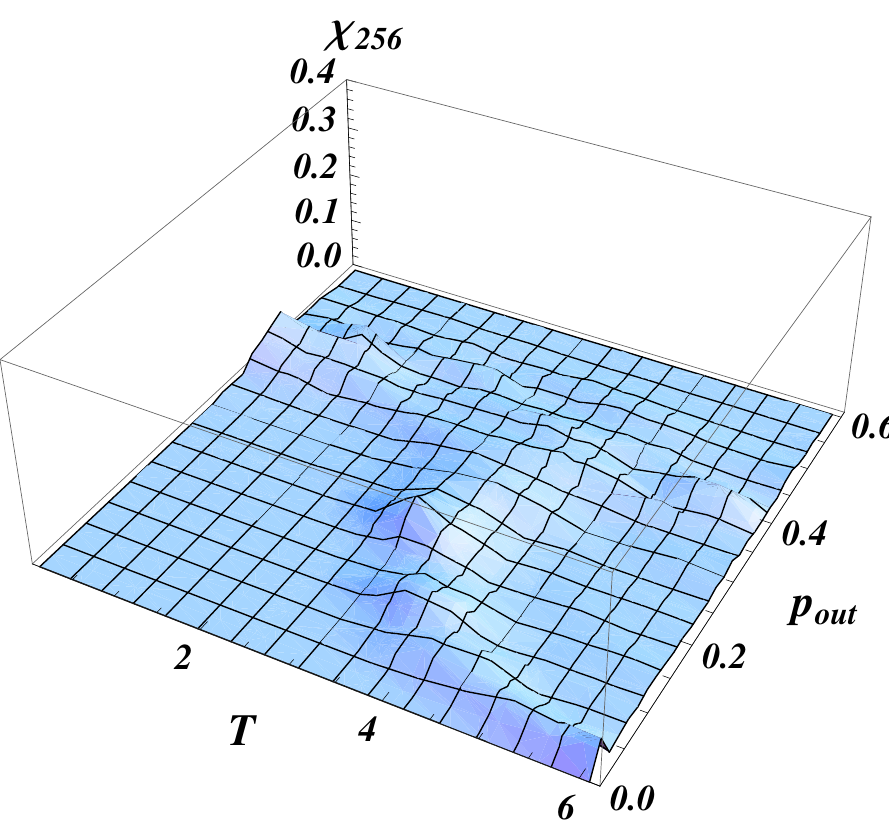}}
\subfigure[\ $N=512$,
$q=16$]{\includegraphics[width=\subfigwidth]{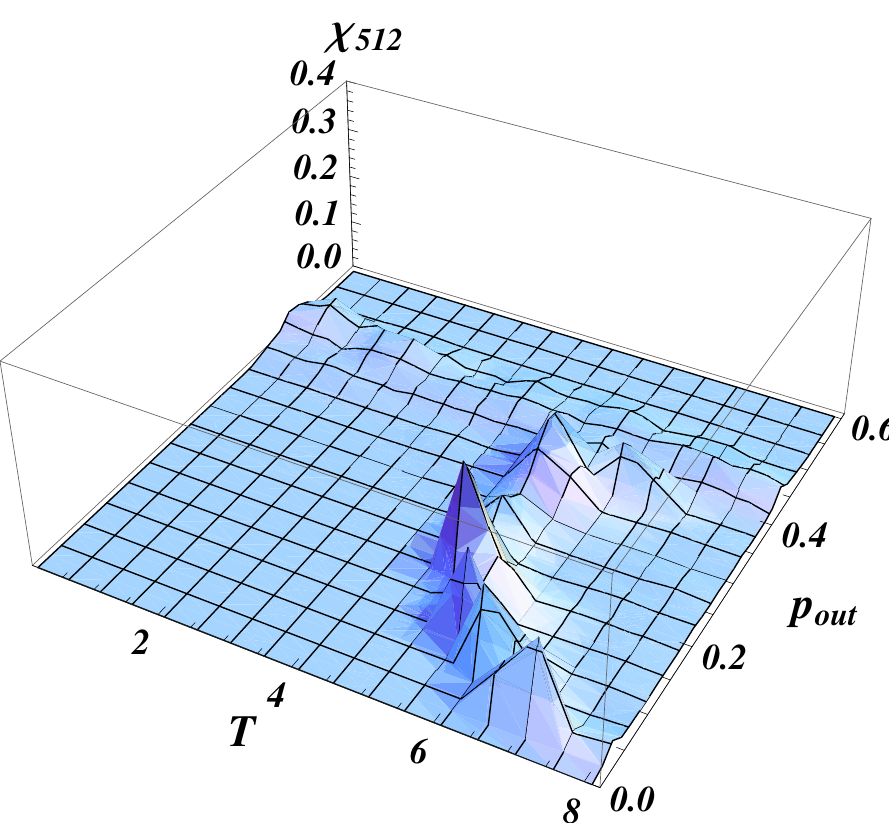}}
\subfigure[\ $N=1024$,
$q=16$]{\includegraphics[width=\subfigwidth]{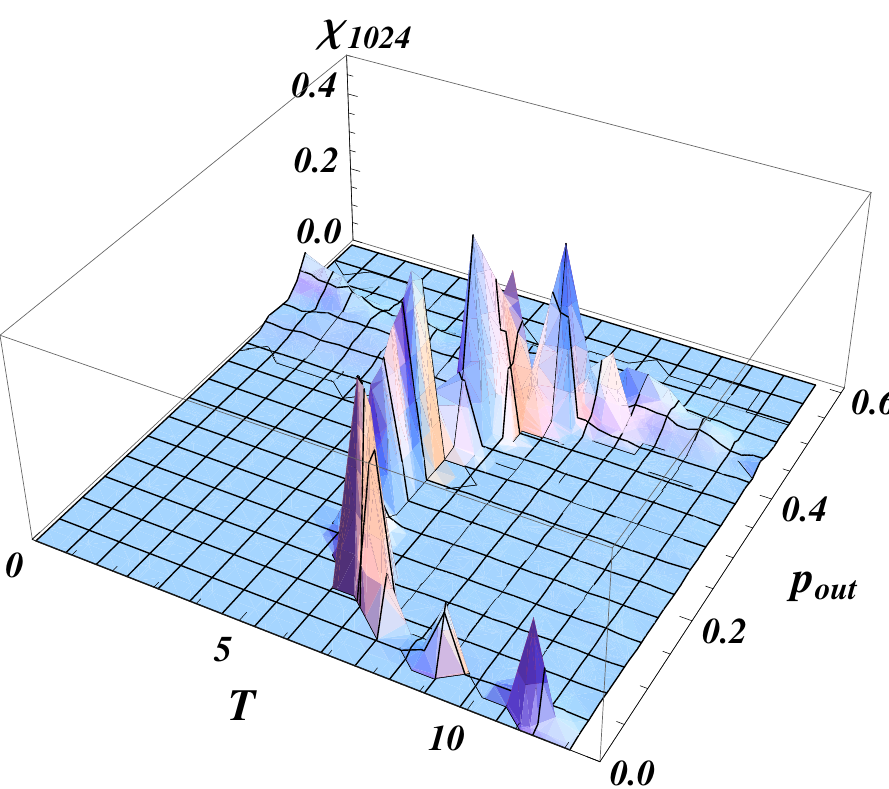}}
\subfigure[\ $N=2048$,
$q=16$]{\includegraphics[width=\subfigwidth]{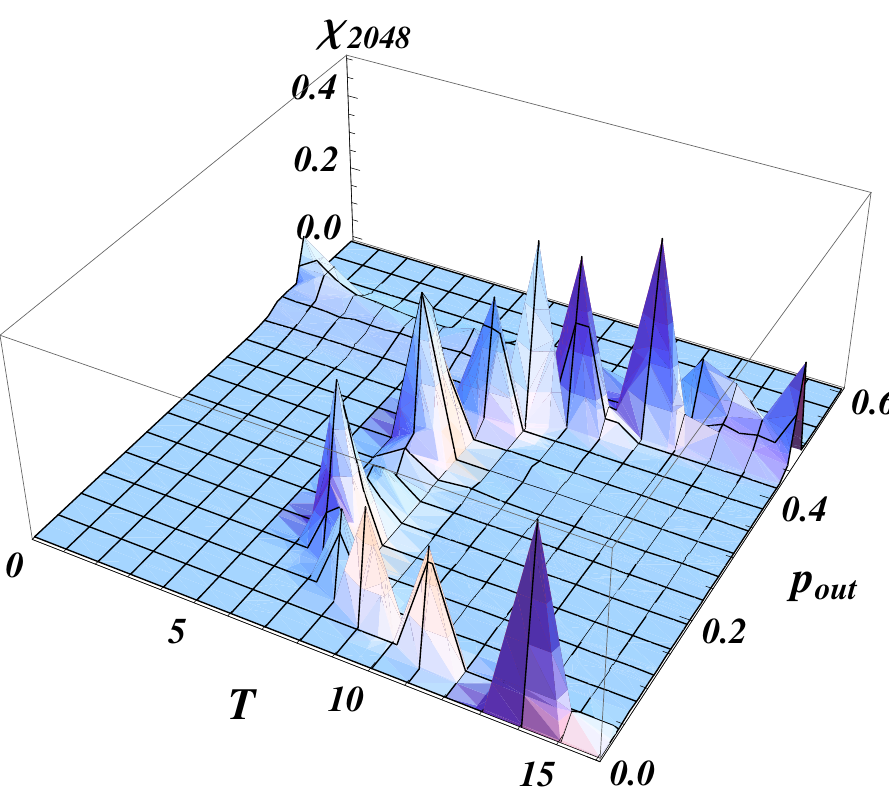}}
\subfigure[\ $N=256$,
$q=40$]{\includegraphics[width=\subfigwidth]{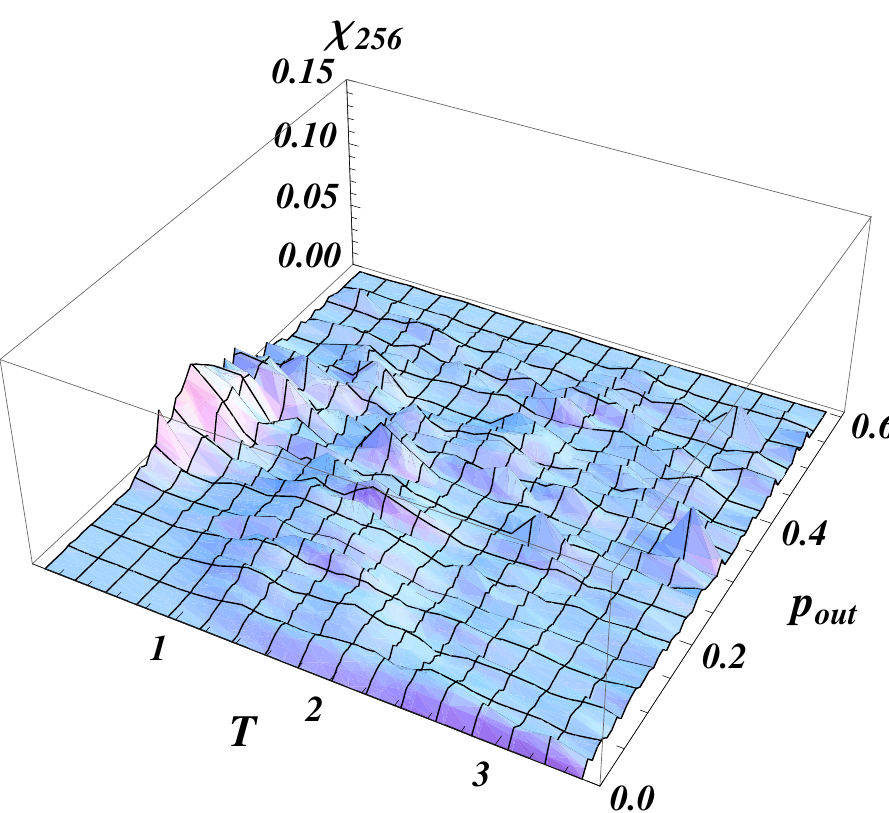}}
\subfigure[\ $N=512$,
$q=40$]{\includegraphics[width=\subfigwidth]{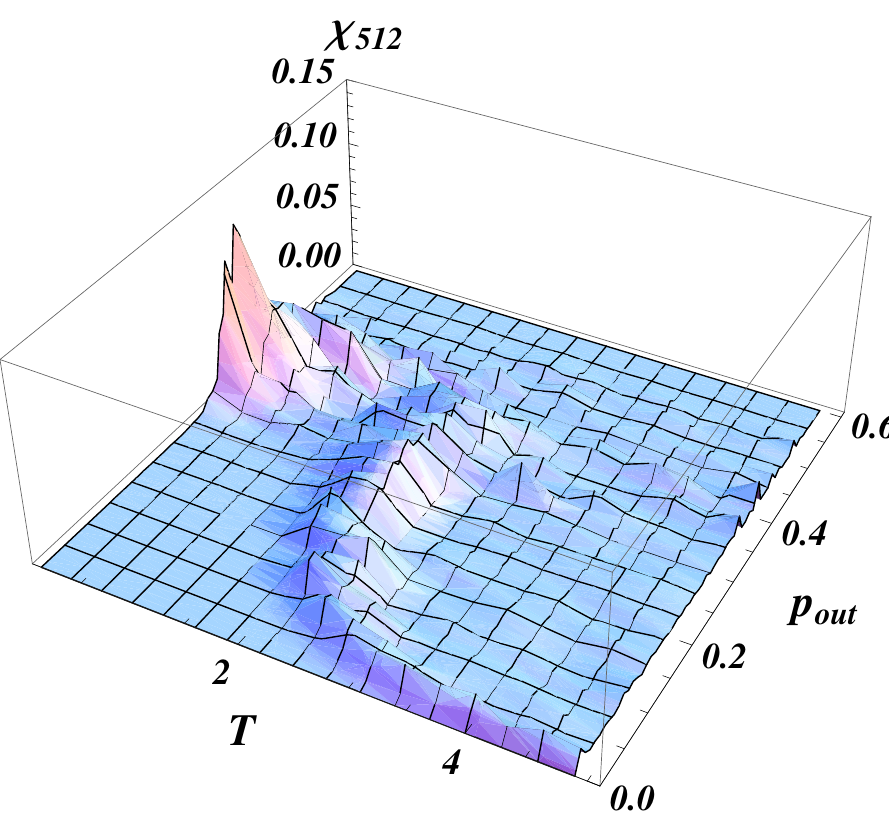}}
\subfigure[\ $N=1024$,
$q=40$.]{\includegraphics[width=\subfigwidth]{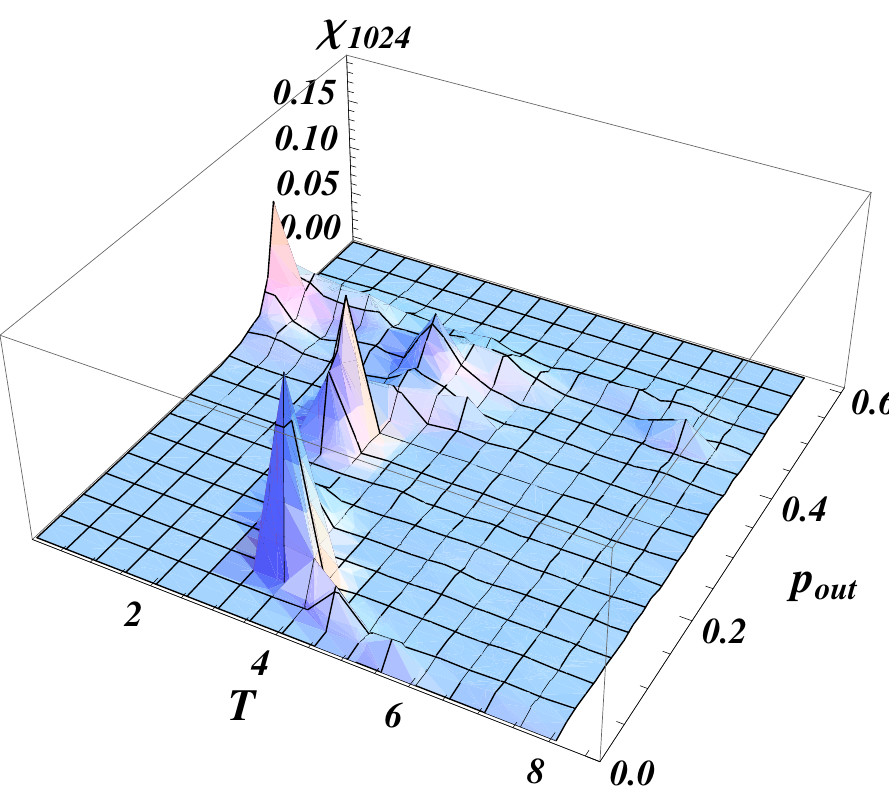}}
\subfigure[\ $N=2048$, $q=40$.]{\includegraphics[width=\subfigwidth]{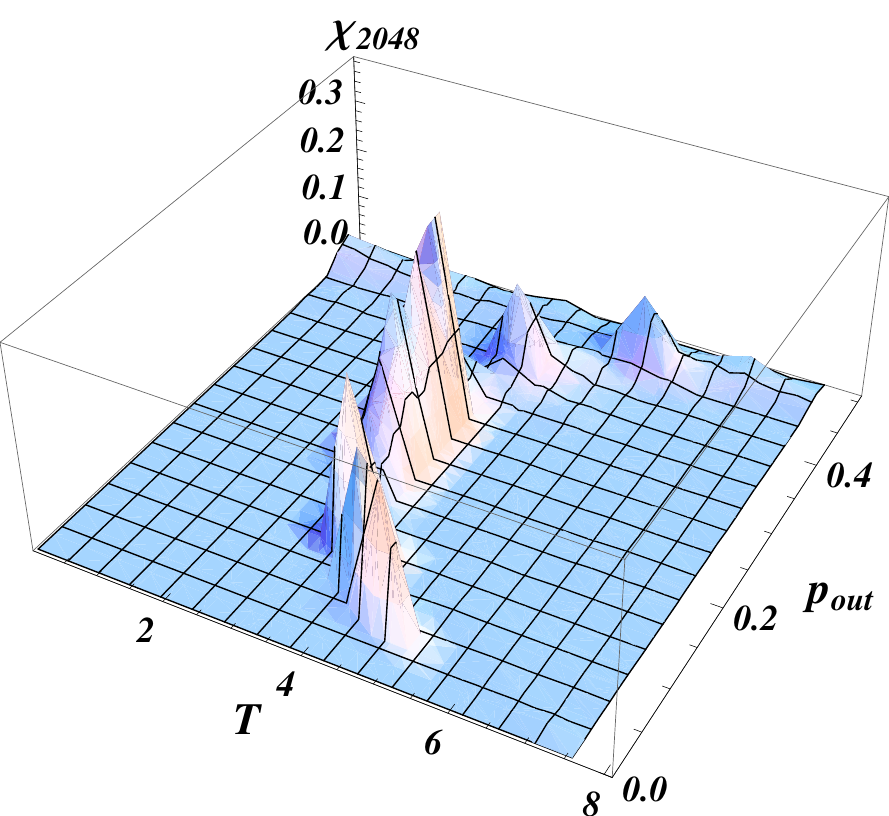}}
\subfigure[\ $N=512$,
$q=70$]{\includegraphics[width=\subfigwidth]{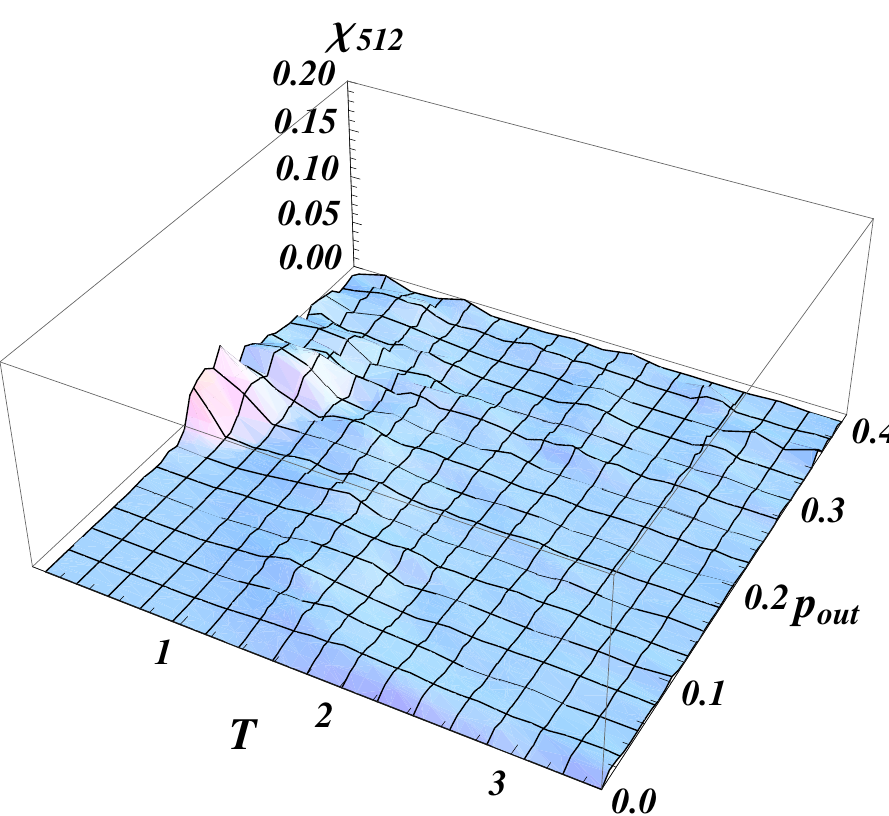}}
\subfigure[\ $N=800$,
$q=70$]{\includegraphics[width=\subfigwidth]{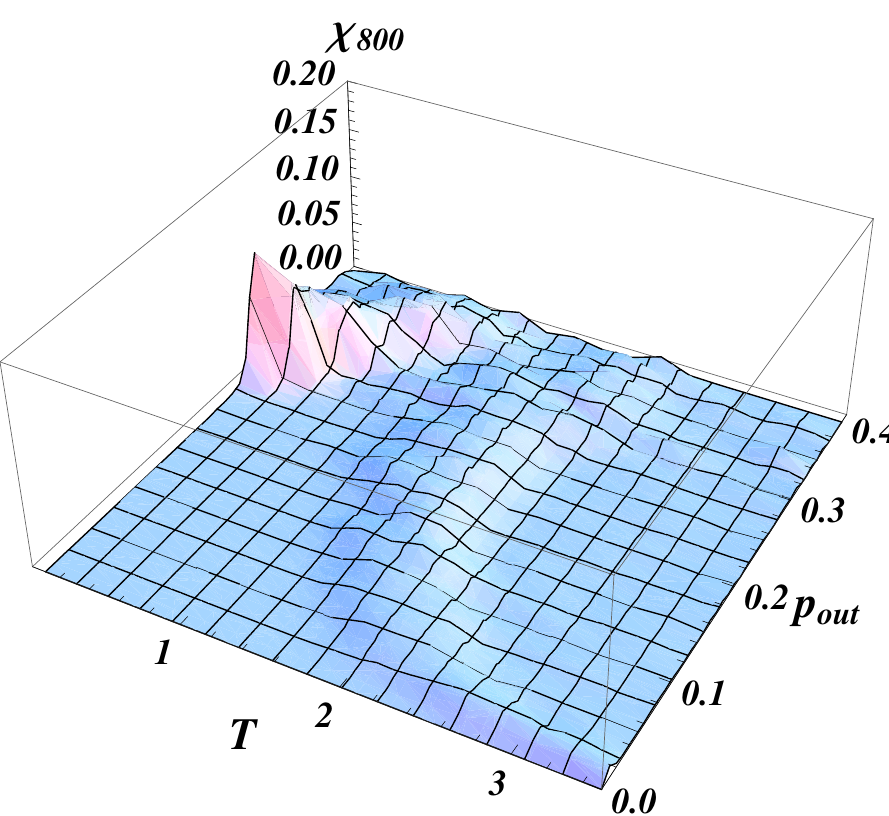}}
\subfigure[\ $N=1024$,
$q=70$]{\includegraphics[width=\subfigwidth]{sus1024q70}}
\subfigure[\ $N=2048$, $q=70$]{\includegraphics[width=\subfigwidth]{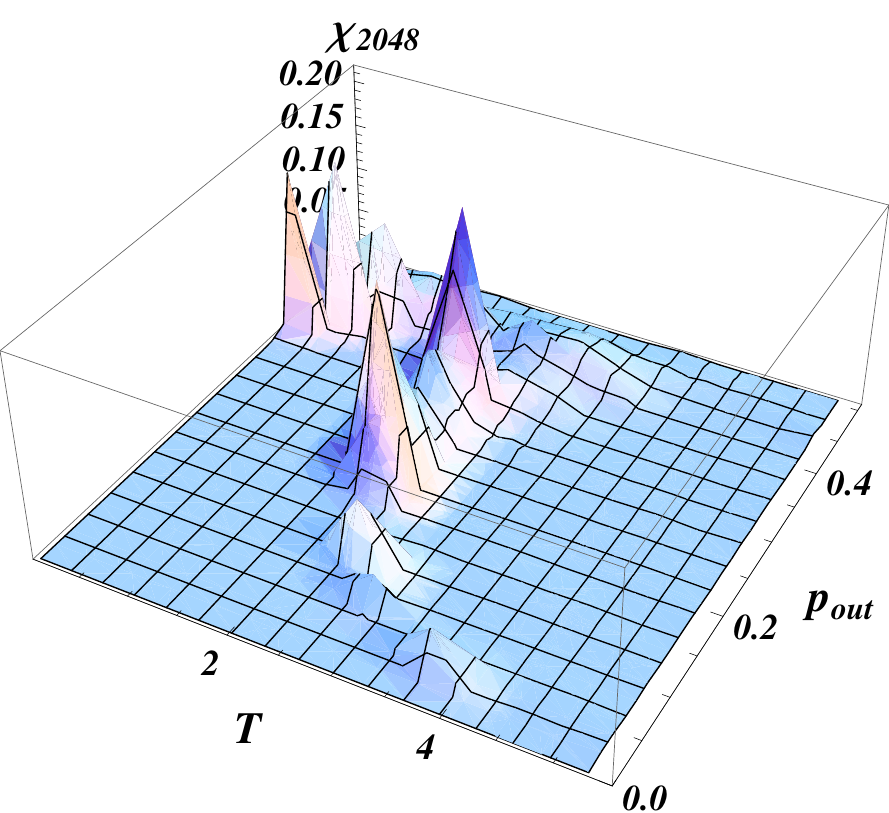}}
\end{center}
\caption{(Color online) Similar to \figref{fig:susAllalpha}, we plot of
$\chi(T,p_{out})$ as a function of temperature $T$ and noise level
$p_{out}$ for systems with the indicated number of nodes $N$,
communities $q$, and $\alpha=q/N$ ratio. Here, $q$ is fixed for each
row series, and we vary $\alpha$ (rows) to examine the behavior as
$N$ increases (columns). The heights of the susceptibility peaks at
higher $T$ increase across each series as $N$ increases whereas the
heights at low $T$ are relatively constant.
The $N=256$ node systems do not show clear hard or unsolvable phases, but
the transitions are strong at high temperature for most panels in the second
and third columns of plots.}
\label{fig:susAllq}
\end{figure*}

\begin{figure*}[t]
\begin{center}
\subfigure[\ hard phase boundary for $q=16$]{\includegraphics[width=\subfigwidthnew]{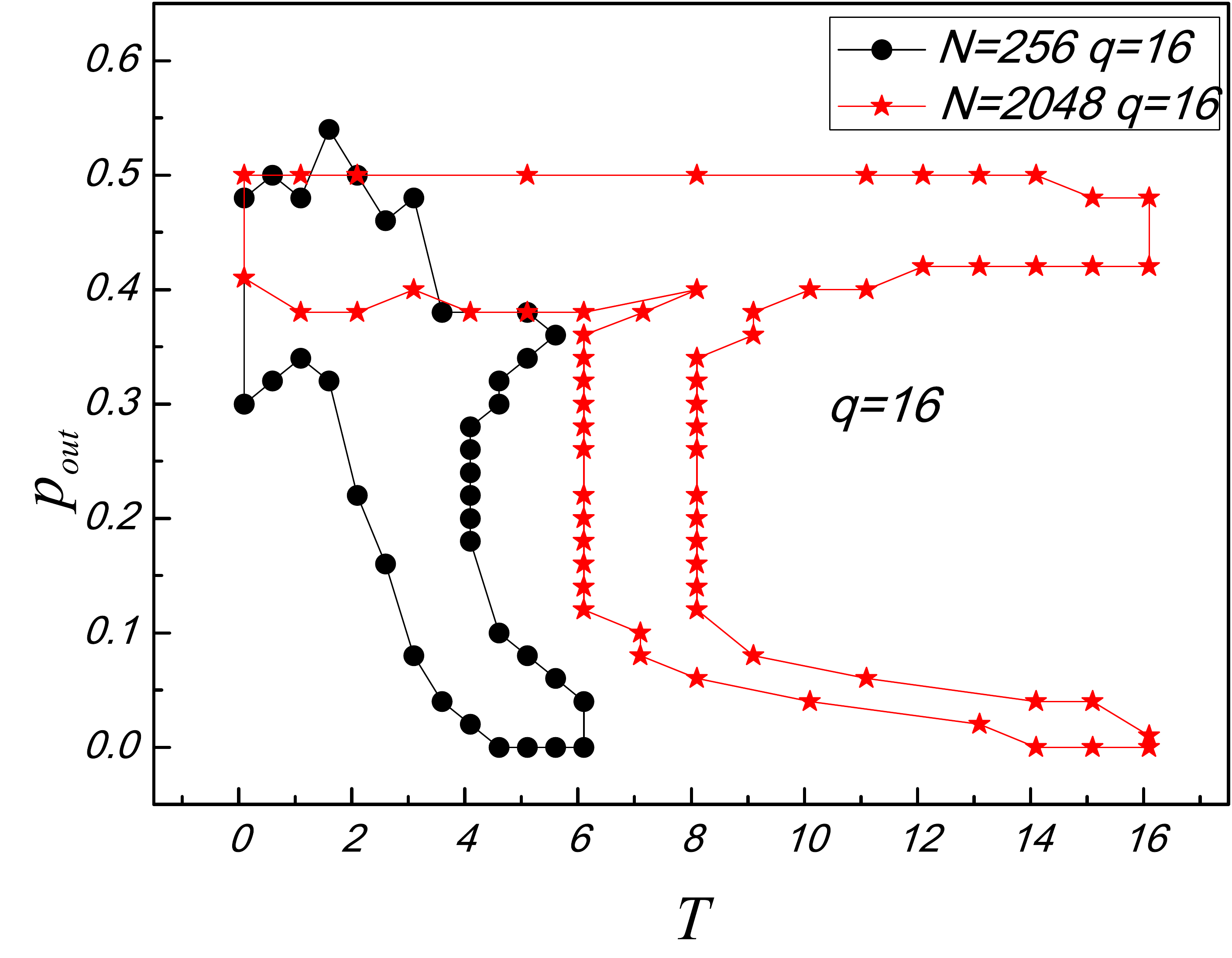}}
\subfigure[\ hard phase boundary for $q=40$]{\includegraphics[width=\subfigwidthnew]{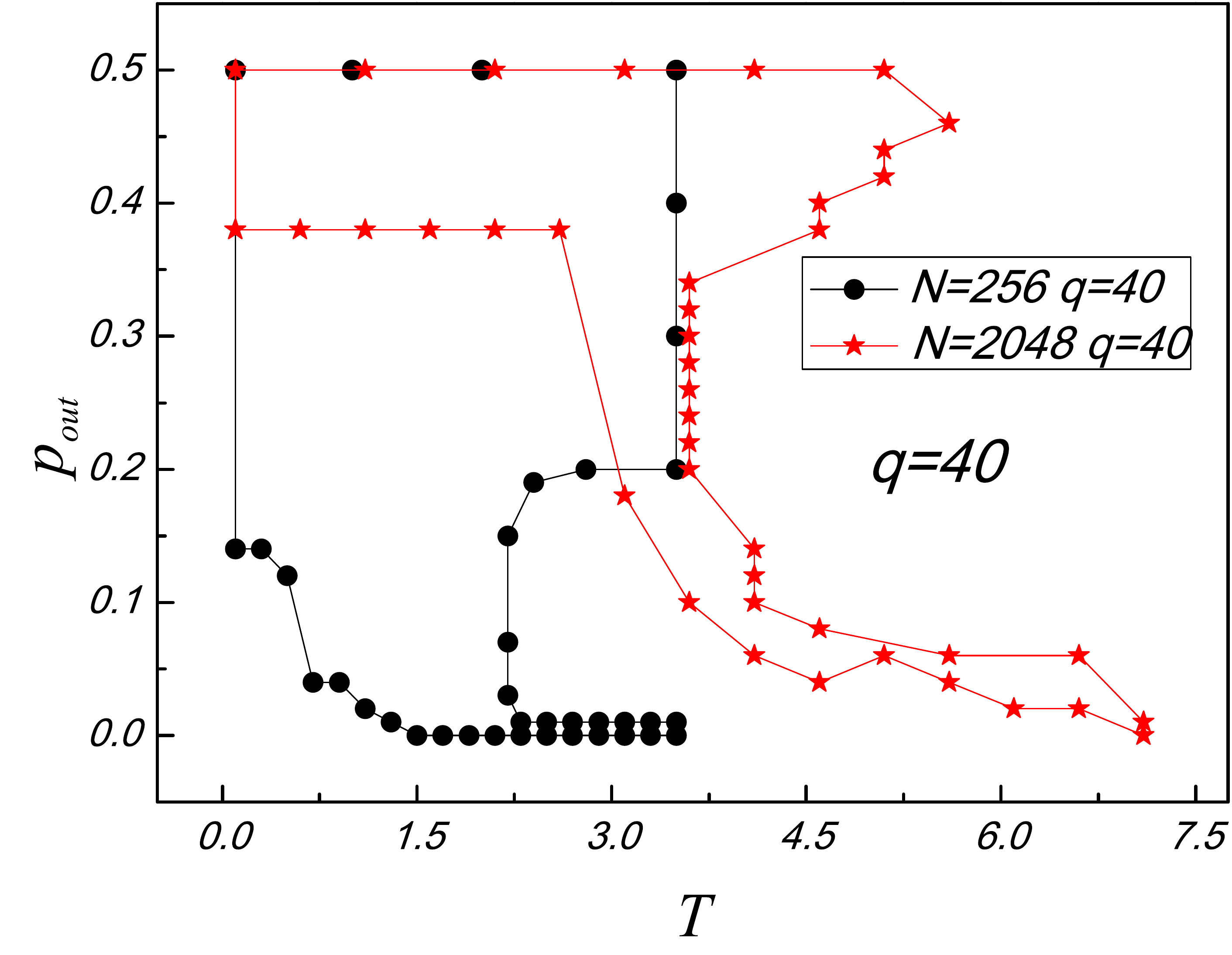}}
\subfigure[\ hard phase boundary for $q=70$]{\includegraphics[width=\subfigwidthnew]{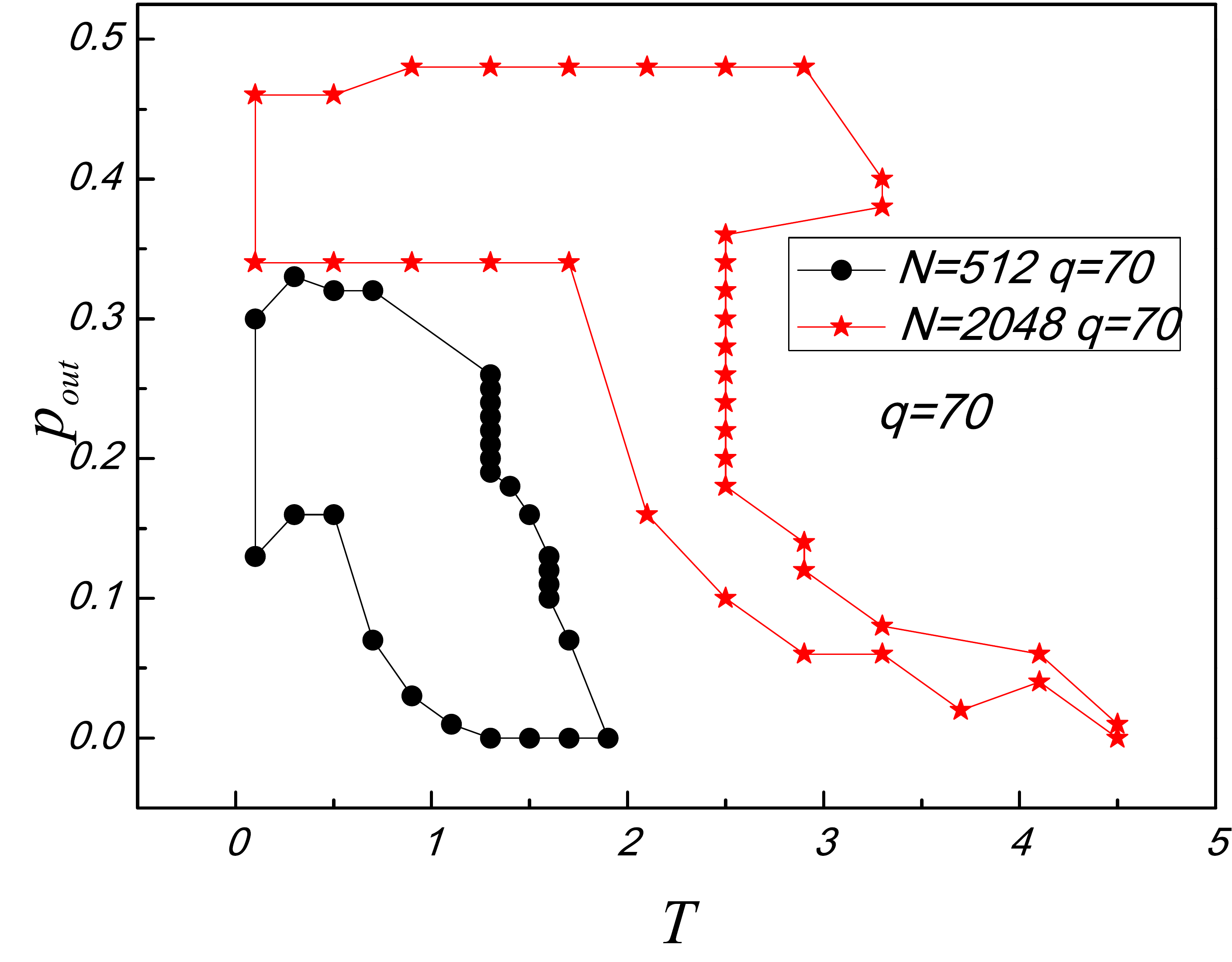}}
\end{center}
\caption{(Color online) Corresponding to \figref{fig:susAllq} and \secref{sec:sec2},
each plot depicts the boundaries of the hard phase
for the system series with a fixed number of communities $q$ where panels (a),
(b), and (c) correspond to $q=16$, $40$, and $70$, respectively.
System sizes range from $N=256$ to $2048$ as indicated.
For each $q$, the area of the hard phase becomes progressively
narrower which indicates clearer transitions from the easy to unsolvable
phases in the thermodynamic limit.}
\label{fig:sus2dq}
\end{figure*}

\begin{figure*}[t]
\begin{center}
\subfigure[\ $p_1(T)$ with $q=16$]{\includegraphics[width=\subfigwidthnew]{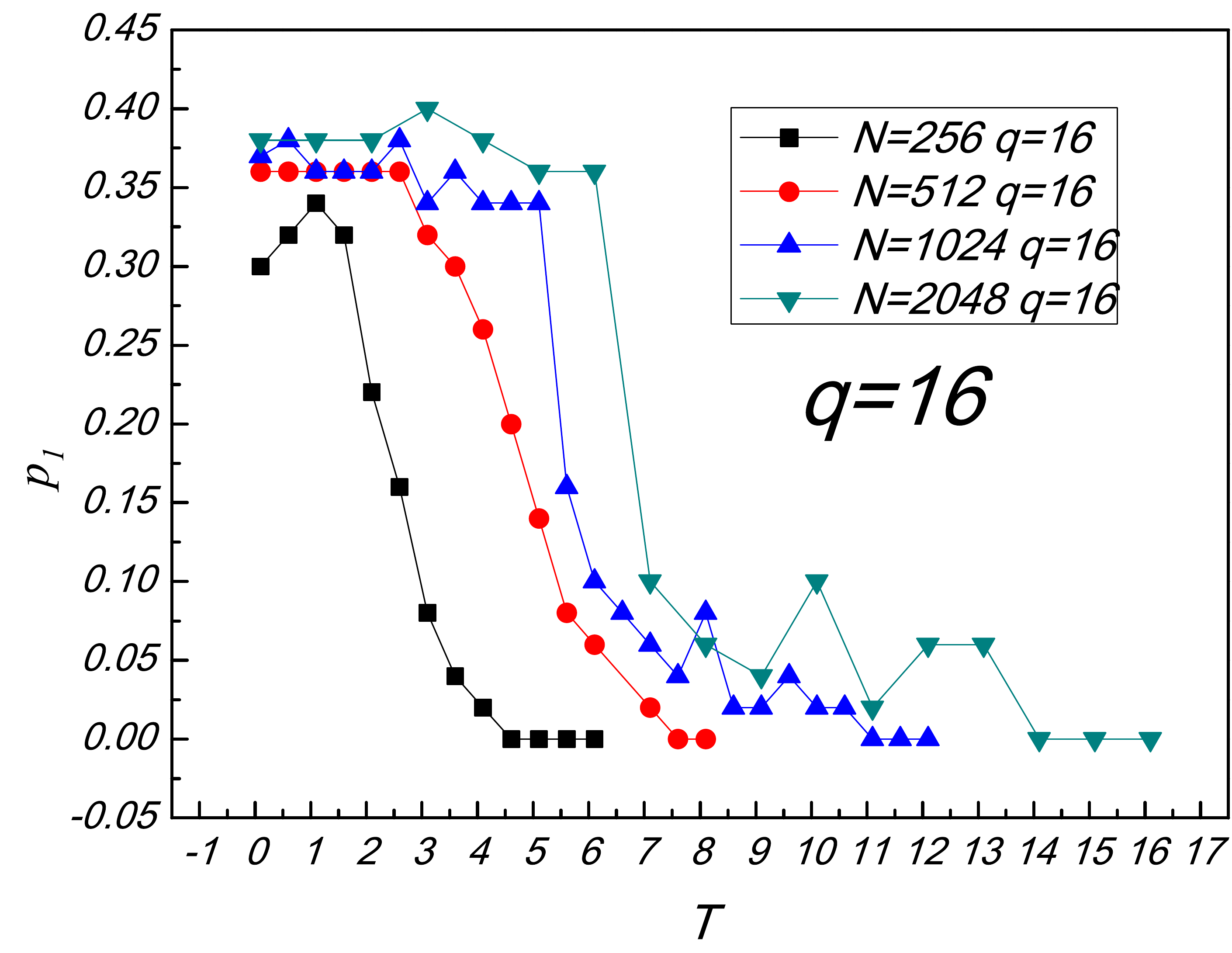}}
\subfigure[\ $p_1(T)$ with $q=40$]{\includegraphics[width=\subfigwidthnew]{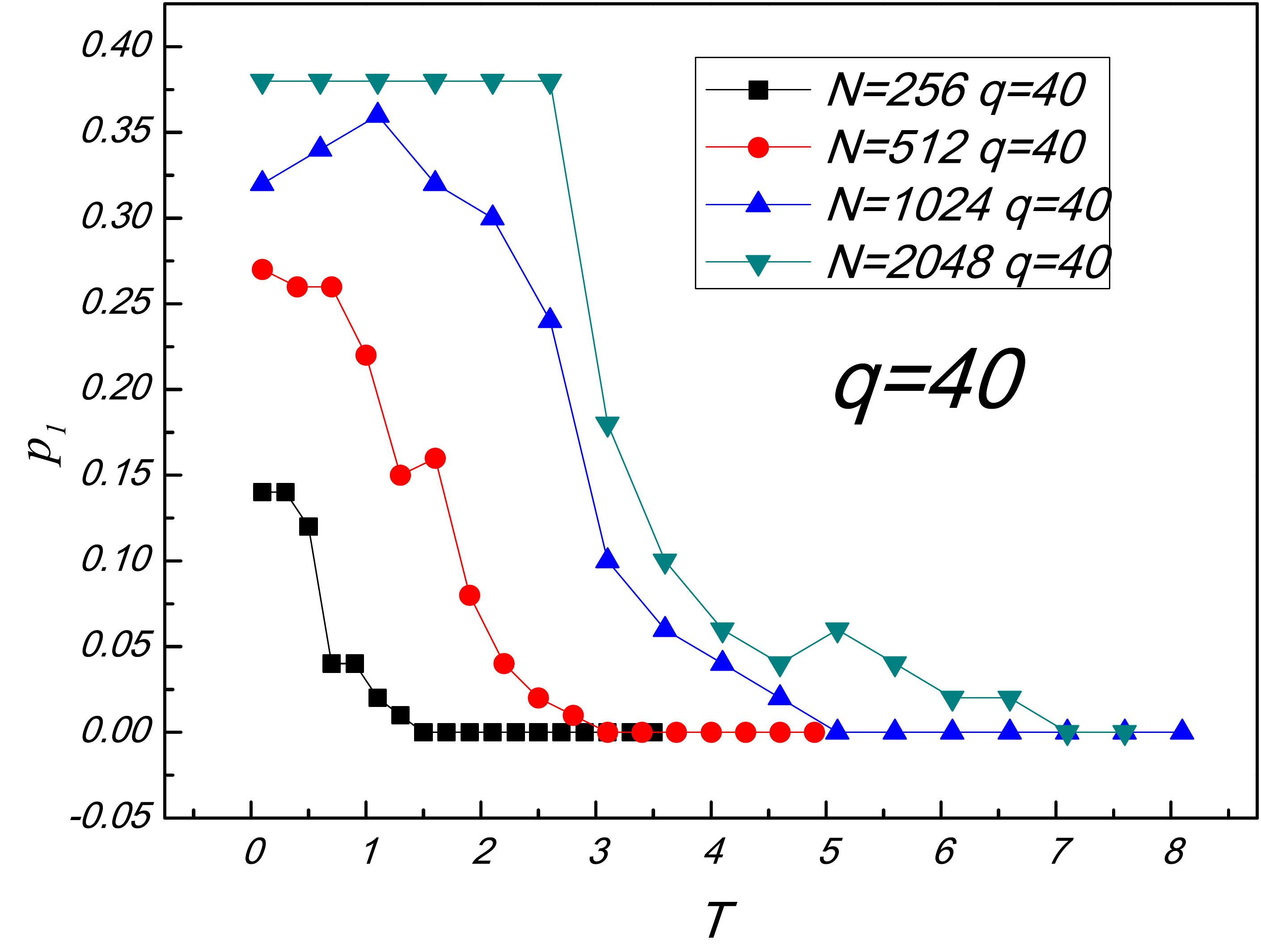}}
\subfigure[\ $p_1(T)$ with $q=70$]{\includegraphics[width=\subfigwidthnew]{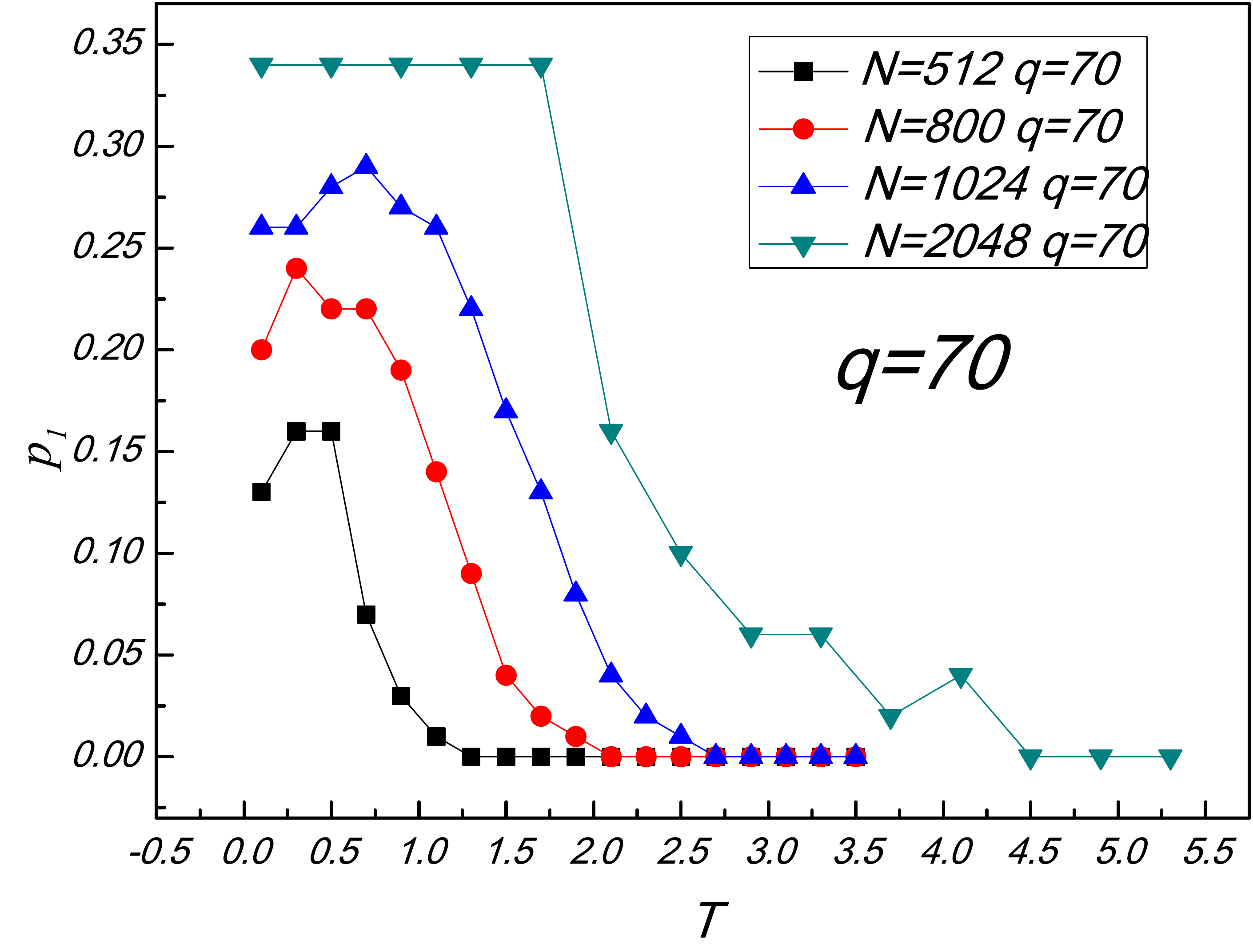}}
\end{center}
\caption{(Color online) Corresponding to \figref{fig:susAllq} and \secref{sec:sec2},
each plot depicts the first phase transition point $p_1$
as a function of temperature $T$ for systems with a fixed $q$.
Panels (a), (b), and (c) show the results for $q=16$, $40$, and $70$, respectively.
System sizes range from $N=256$ to $2048$ as indicated in each plot.
All panels show that the first transition point increases as the system size
increases which is consistent with the complexity trend of the system series.}
\label{fig:p1q}
\end{figure*}

\begin{figure*}[t]
\begin{center}
\subfigure[\ $\tau$ with
$q=16$]{\includegraphics[width=\subfigwidthtoo]{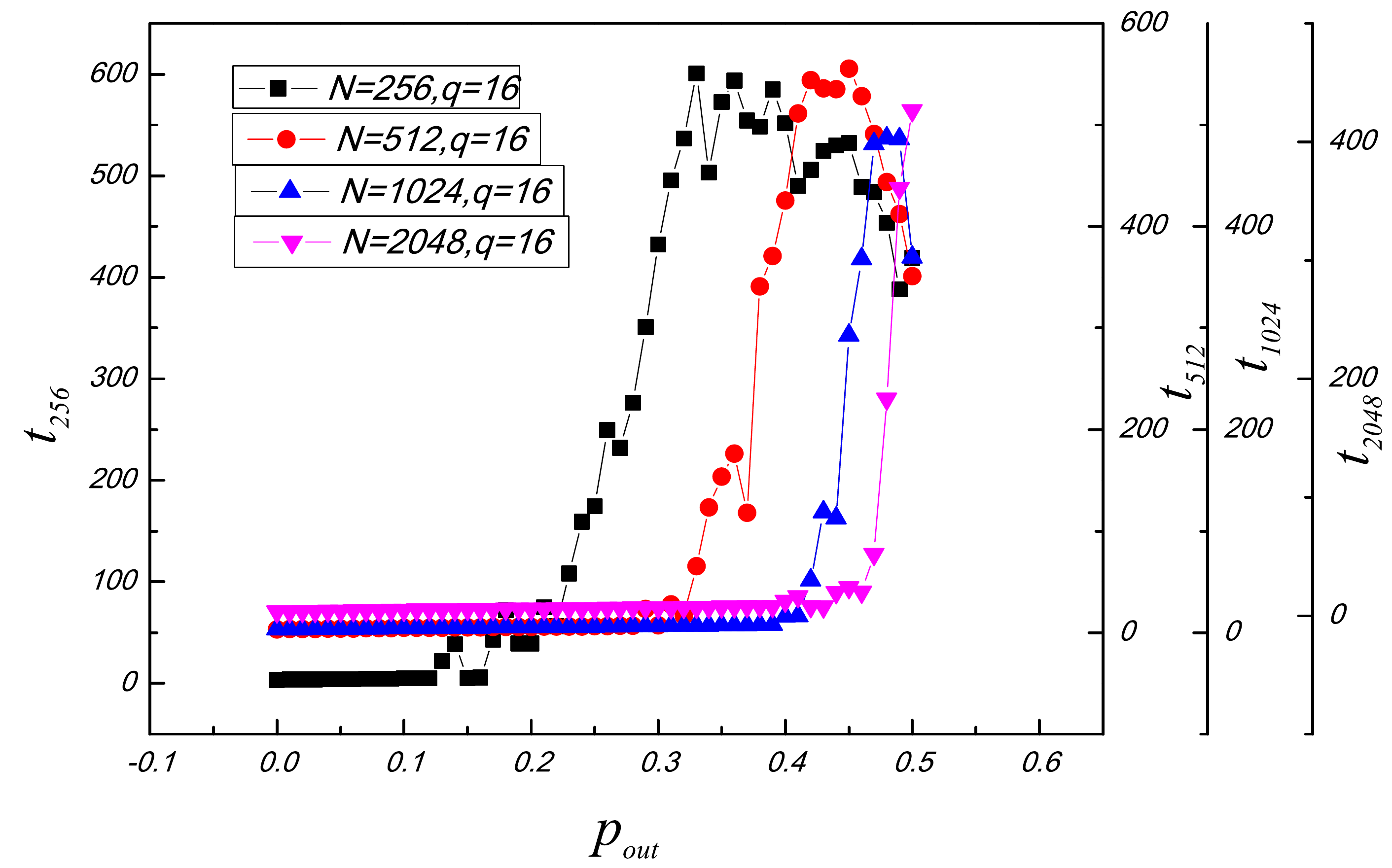}}
\subfigure[\ $\tau$ with
$q=40$]{\includegraphics[width=\subfigwidthtoo]{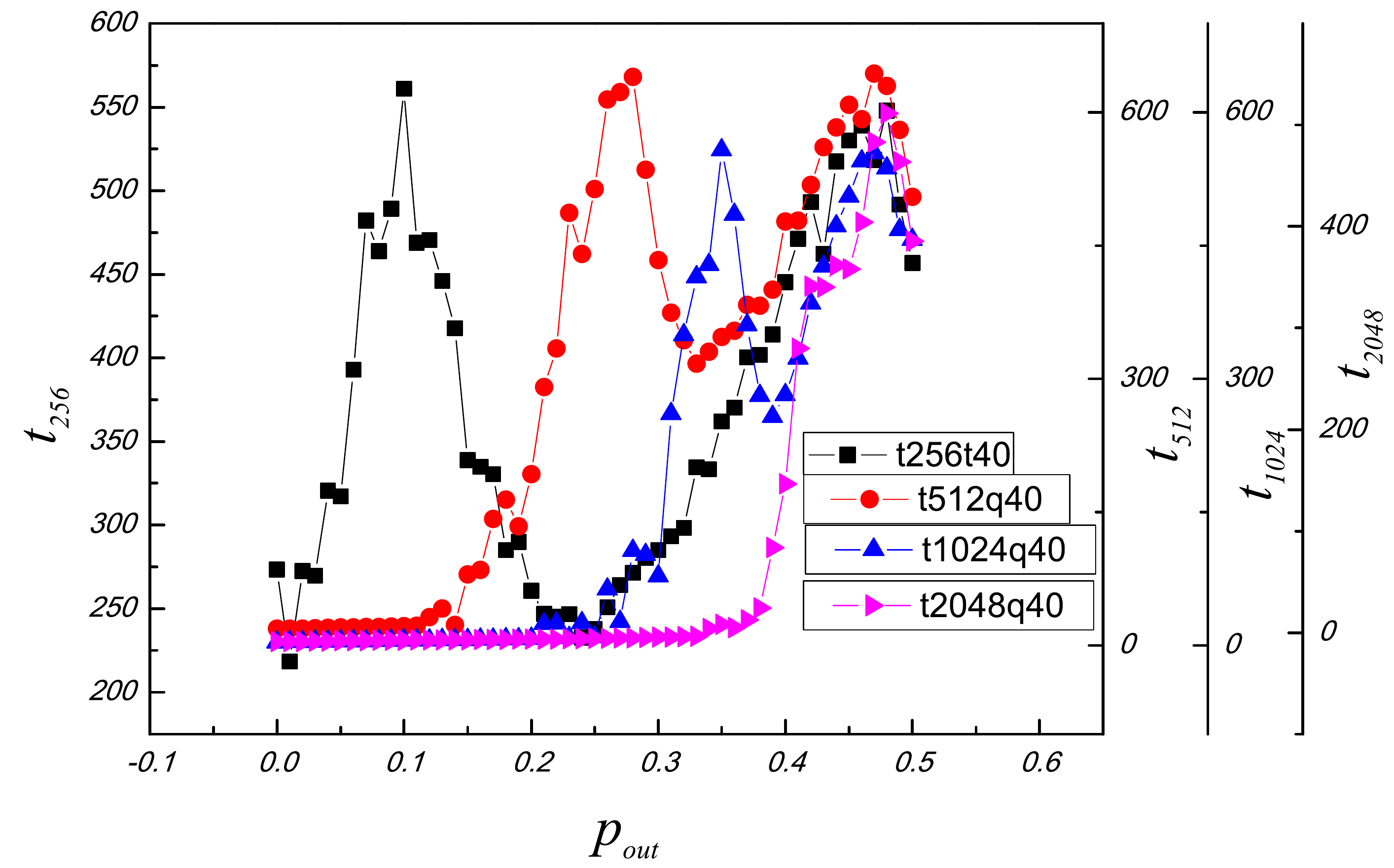}}
\subfigure[\ $\tau$ with
$q=70$]{\includegraphics[width=\subfigwidthtoo]{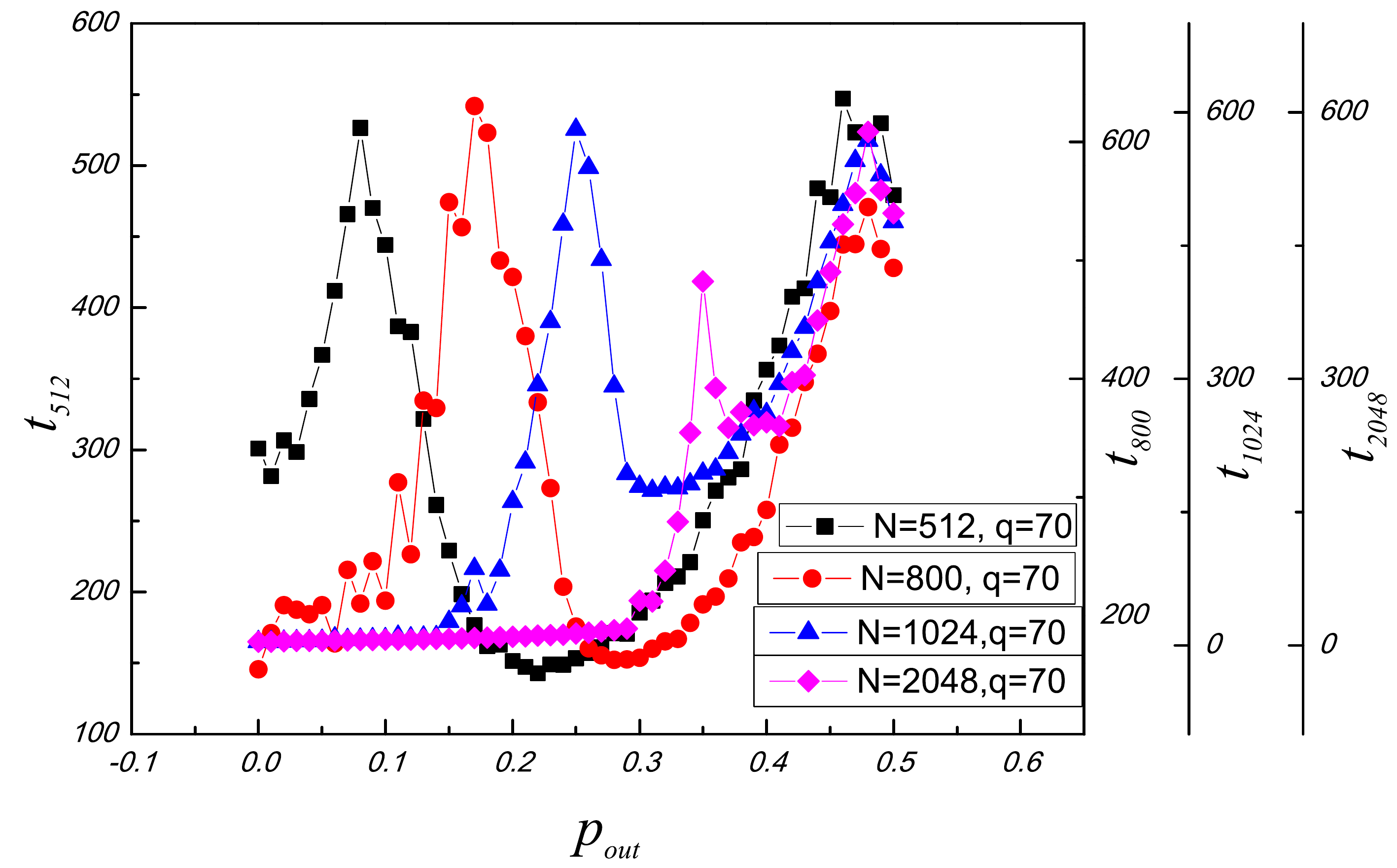}}
\end{center}
\caption{(Color online) Corresponding to \figref{fig:susAllq} and
\secref{sec:sec2}, the convergence time $\tau$
[\subfigref{fig:chianalog}{a}] as a function of noise $p_{out}$ for
systems with fixed $q$. Panels (a), (b) and (c) show the results for
$q=16$, $40$, and $70$, respectively. System sizes vary from $N=256$
to $2048$ as indicated in each plot. The noise level $p_{out}$ at
the first peak of the convergence time corresponds to the initial
transition point $p_1$ in \figref{fig:p1q} at zero temperature. As
the system size increases, the first peak in the convergence time
moves to the right. They share the same trend as in
\figsref{fig:sus2dq}{fig:p1q}.} \label{fig:tq}
\end{figure*}

\subsection{$\chi(T,p_{out})$ at fixed $\alpha=q/N$}
\label{sec:sec1}

We show the phase transitions in terms of three-dimensional (3D) plots
with the computational susceptibility $\chi(T,p_{out})$ for a range
of system sizes $N$ and numbers of communities $q$.
First, we fix the ratio $\alpha=q/N$ and study the phase transitions as $N$
increases.
Then we test a range of systems with fixed $q$ as $N$ increases.

\subsubsection{$\chi(T,p_{out})$ at $\alpha=0.016$}

In \figref{fig:susAllalpha} panels (a) through (d), we begin the
analysis at a small $\alpha = q/N =0.016$ ratio.
The results for four system sizes are shown: $N=256$, $N=512$,
$N=1024$ and $N=2048$ which maintain a fixed ratio of $\alpha$
across the respective rows.
Each plot shows the easy, hard, and unsolvable phases.

The two ``ridges'' in each plot denote the hard phases.
The height of the first ridge at low temperature decreases as the system
size increases while the height of the second ridge at high temperature
increases in the same process.
This finite size scaling behavior for the hard phase at high temperature
indicates that the phase transition at high temperature exists in
the thermodynamic limit.
However, the phase transition at low temperature will disappear
in the same limit.
In the meantime, the ridge in the high temperature will gradually
expand into the low temperature region as the system size increases.
Thus, for the systems with the small ratio of $\alpha$, the phase
transition will exist in almost the entire temperature range in the
thermodynamic limit (see \secref{sec:NIcliques}).

The ``easy'' phase shrinks and the unsolvable phase expands as $N$
increases. In detail, the approximate area of the easy phase on the
left corner in panel (a) is in the range of $T\in (0,20)$ and
$p_{out}\in (0,0.4)$. The area of the unsolvable phase on the right
upper corner is in the range of $T\in (20,+\infty)$ and
$p_{out}\in(0,0.4)$. As the system size increases from $N=256$ in
panel (a) to $N=1024$ in panel (c), the area of the easy phase
shrinks to the range of $T\in(0,5)$ and $p_{out}\in (0,0.4)$ while
the unsolvable phase expands to $T\in(5,+\infty)$ and
$p_{out}\in(0,0.4)$. As the system size further increases to
$N=2048$ in panel (d), the easy phase further shrinks to the range
of $T\in (0,4)$ and $p_{out}\in (0,0.4)$ while the unsolvable phase
expands to $T\in (4,+\infty)$ and $p_{out}\in (0,0.4)$. We note that
the range of $p_{out}$ for the easy phase does not decrease as the
system size increases.

In order to track the range of the hard phases, we further display a
set of ``boundary'' plots in \figref{fig:sus2dalpha} as well as the first
transition point $p_1$ as the function of temperature in \figref{fig:p1alpha}.
For the system series with the fixed $\alpha=0.016$ discussed above,
the 2D ``hard phase'' boundaries and the values of the first transition
points are in panel (a) of \figref{fig:sus2dalpha} and \figref{fig:p1alpha},
respectively.

In \subfigref{fig:sus2dalpha}{a}, the area of the hard phase shrinks,
and its area at high temperature becomes narrower as the system size
increases.
Specifically, the width of the hard phase for $N=256$ is about $\Delta T=6$,
while it only extends to $\Delta T=1$ for the $N=2048$.
Together with the 3D phase diagrams in panels (a)--(d) of \figref{fig:susAllalpha},
we conclude that the hard phase at the high temperature becomes sharper
in the thermodynamic limit.

The boundaries of the hard phase at low temperature are more easily seen
in \subfigref{fig:p1alpha}{a} where we plot the first transition point $p_1$
as the function of temperature $T$ for a range of systems.
The plots confirm the observations in \subfigdref{fig:susAllalpha}{a}{d}
regarding the constant $p_{out}$ range.
That is, the range of $p_{out}$ for the easy phase does not decrease as the
system size increases [in \subfigref{fig:p1alpha}{a}, $p_1$ collapses before
$T\leq 5$ for all the systems].
This behavior hints that the first transition point $p_1$ at low temperature
and small $\alpha$ remains constant in the thermodynamic limit.

As depicted in \subfigref{fig:chianalog}{a}, the convergence time
$\tau$ provides another view of the phase transition. We plot $\tau$
as a function of noise level $p_{out}$ in \subfigref{fig:talpha}{a}
for systems with a fixed ratio of $\alpha=q/N=0.016$. The value
$p_{out}$ at the first peak of the convergence time in each system
is consistent with the first transition point $p_1$ observed in
\subfigref{fig:p1alpha}{a}. As the system size increases, the peak
convergence time shifts to the left, which corresponds to the lower
value of $p_1$.

\subsubsection{$\chi(T,p_{out})$ at $\alpha=0.07$}

For $\alpha=0.07$, the phase transitions are presented in
\figref{fig:susAllalpha} panels (e) through (h).
The phases in panel (e) are noisy compared to panels (f) through (h), and
all of the systems are more complicated than the plots with $\alpha=0.016$.
As $N$ increases, the phase transitions become more clear.
However, contrary to panels (a) through (d), the phase transition at low
temperature becomes more prominent as $N$ increases, and the transition
at high temperature stays roughly constant.
Specifically, the height of the susceptibility peak at low temperature
increases from $\chi=0.01$ at $N=256$ in panel (e), $\chi=0.05$ at $N=512$
in panel (f), $\chi=0.1$ for $N=1024$ in panel (g), and finally reaches
$\chi=0.2$ in panel (h) with $N=2048$.
The phase transitions in this series appear to be persistent.

The easy phase (lower left of each panel) decreases in area as the system
size increases.
This is the same trend that was observed in the previous $\alpha=0.016$
series implying that the easy phase will tend to decrease in the
thermodynamic limit up to a threshold (see \secref{sec:NIcliques}).
Specifically, the easy phase in the smallest system in panel (e) covers
the range of $T\in (0,3)$ and $p_{out}\in (0,0.3)$ while in the large
system in panel (h) covers $T\in(0,1.5)$ and $p_{out}\in(0,0.2)$.
The range for $p_{out}$ in the easy phase decreases as the $N$ increases
which differs from the $\alpha=0.016$ data where the noise $p_{out}$
stayed at a roughly constant range of $p_{out}\in(0,0.4)$.
In both series for $\alpha=0.016$ and $0.07$, the value of the initial
transition point $p_1$ decreases in the thermodynamic limit.

The corresponding 2D plots of the hard phase boundaries and the first
transition points $p_1$ are displayed in \subfigref{fig:sus2dalpha}{b}
and \subfigref{fig:p1alpha}{b}, respectively.
For the series with $\alpha=0.07$ in \subfigref{fig:sus2dalpha}{b},
the area of the hard phase becomes narrower at both low and high
temperatures as the system size increases.
In detail, the width of the hard phase for $N=256$ is about $\Delta T=1.3$,
while the width shrinks to about $\Delta T=0.3$ at $N=2048$.
Together with the 3D phase diagrams in \subfigdref{fig:susAllalpha}{e}{h},
the phase transitions become sharper in the thermodynamic limit.

As shown in \subfigref{fig:p1alpha}{b}, the first transition point
$p_1$ decreases as the system size increases, even in the low
temperature limit. This is consistent with the first peak of the
convergence time $\tau$ at zero temperature in
\subfigref{fig:talpha}{b}. This indicates that the system becomes
progressively harder to solve in the thermodynamic limit over the
whole temperature range.

\subsubsection{$\chi(T,p_{out})$ at $\alpha=0.15$}

In panels (i) through (l) of \figref{fig:susAllalpha}, $\alpha=0.15$
and the clusters are smaller on average resulting in systems that are
more difficult to solve.
In panels (i) and (j), almost the entire region is covered by small peaks
which indicates mixing of the hard and unsolvable phases thus making the
phase boundaries hard to detect.

The flat easy regions are recognizable in all panels, but the area is
small relative to the previous cases and becomes even smaller as $N$
increases into panel (l).
In panel (i), the flat easy region is roughly triangular with legs
along $T\in (0,1.5)$ and $p_{out}\in(0,0.2)$.
The easy region shrinks to a smaller triangle along $T\in (0,0.2)$
and $p_{out}\in (0,0.2)$ in panel (j) and (k).
In panel (l), it further shrinks to $T\in (0,1)$ and $p_{out}\in(0,0.1)$.
The easy phase shrinks for both $p_{out}$ and $T$ as $N$ increases which
further indicates that the initial transition point $p_1$ decreases
substantially in the thermodynamic limit.

The corresponding plots of the hard phase boundaries and the first
transition points $p_1$ are displayed in Figs.\
\ref{fig:sus2dalpha}(c) and \ref{fig:p1alpha}(c), respectively. From
\subfigref{fig:sus2dalpha}{c}, the area of the hard phase shrinks in
the thermodynamic limit. The hard phase is more identifiable
relative to the unsolvable region as $N$ increases. The initial
transition point $p_1$ drops as $N$ increases as shown in
\subfigref{fig:p1alpha}{c}. The convergence time $\tau$ for the
systems with the fixed ratio of $\alpha=q/N=0.15$ at zero
temperature is shown in \subfigref{fig:talpha}{c} where the first
peak of $\tau$ shifts to the left as the system size increases. This
is consistent with the trend observed in \subfigref{fig:p1alpha}{c}.
We further show in \figref{fig:comparison2048q140} and
\figref{fig:comparison1024q70} that the first transition points in
``computational susceptibility'', energy, entropy, convergence time
and normalized mutual information are consistent with each other.

In \figref{fig:correlation}, we provide plots of scaled waiting
correlation function data which clearly indicate {\it spin glass
type collapse}. The collapse is best at the center of the
computational susceptibility ridge \figref{fig:correlation}(b). The
collapse persists up to the ends of the susceptibility ridge (e.g.,
$p_{out} = p_{1}$ in \figref{fig:correlation}(a)) and is no longer
valid outside the susceptibility ridge (e.g., $p_{out}=0.26 >
p_{2}=0.24$ in \figref{fig:correlation}(c)).

\begin{figure*}[t!]
\begin{center}
\subfigure[\ $\chi(T,p_{out})$ for $q=16$]{\includegraphics[width=\subfigwidth]{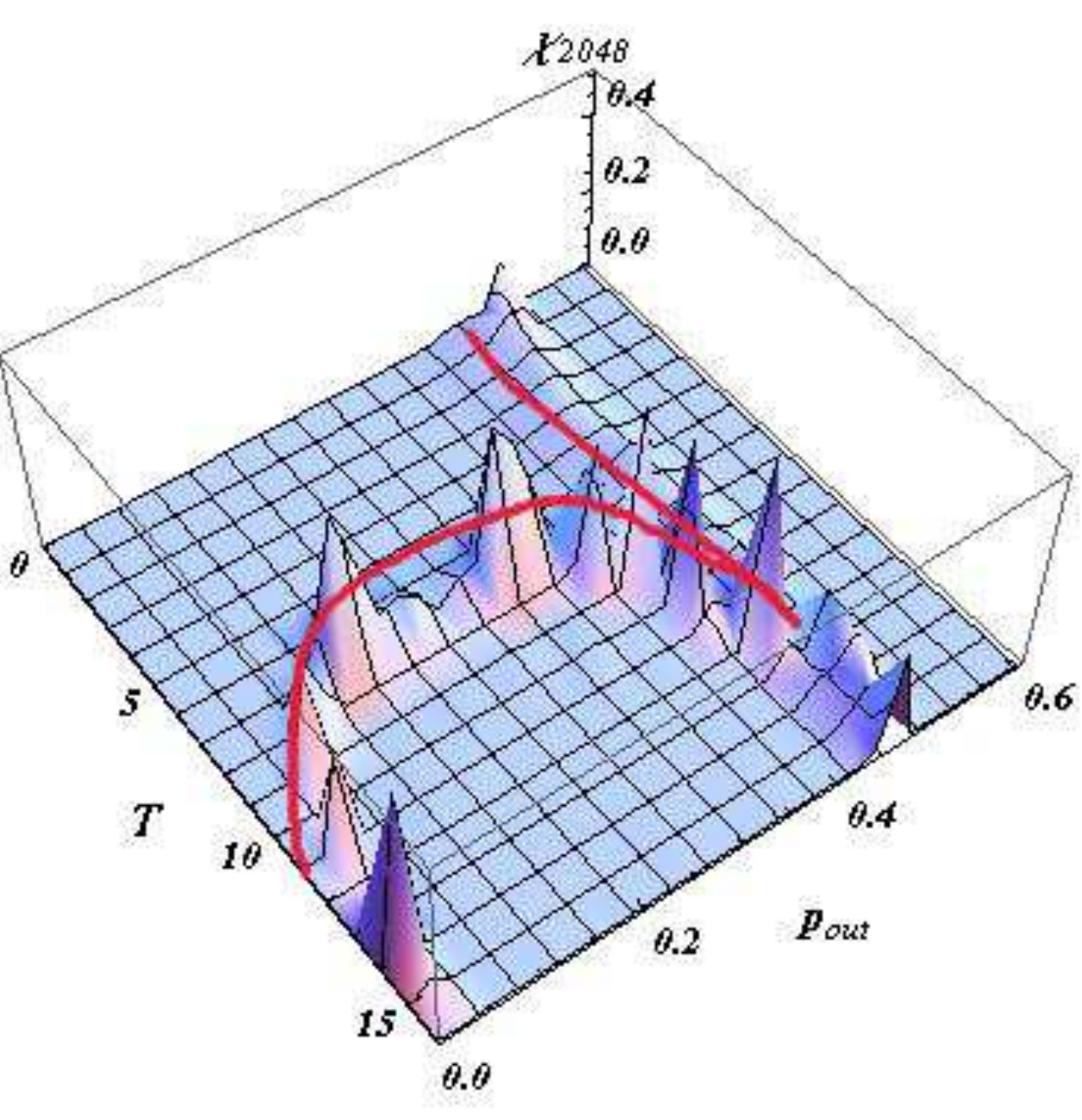}}
\subfigure[\ $I_N(T,p_{out})$ for $q=16$]{\includegraphics[width=\subfigwidth]{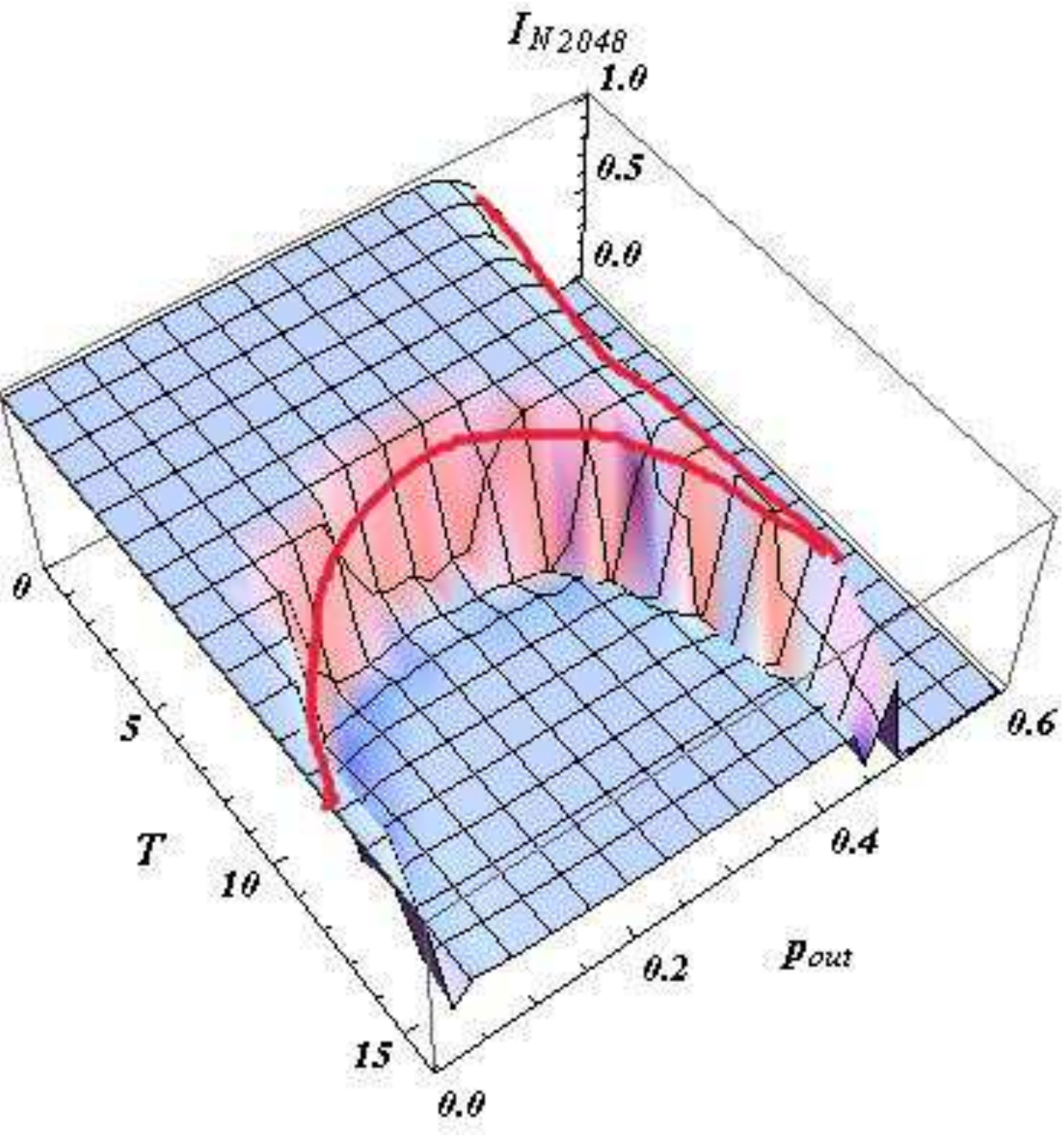}}
\subfigure[\ $H(T,p_{out})$ for $q=16$]{\includegraphics[width=\subfigwidth]{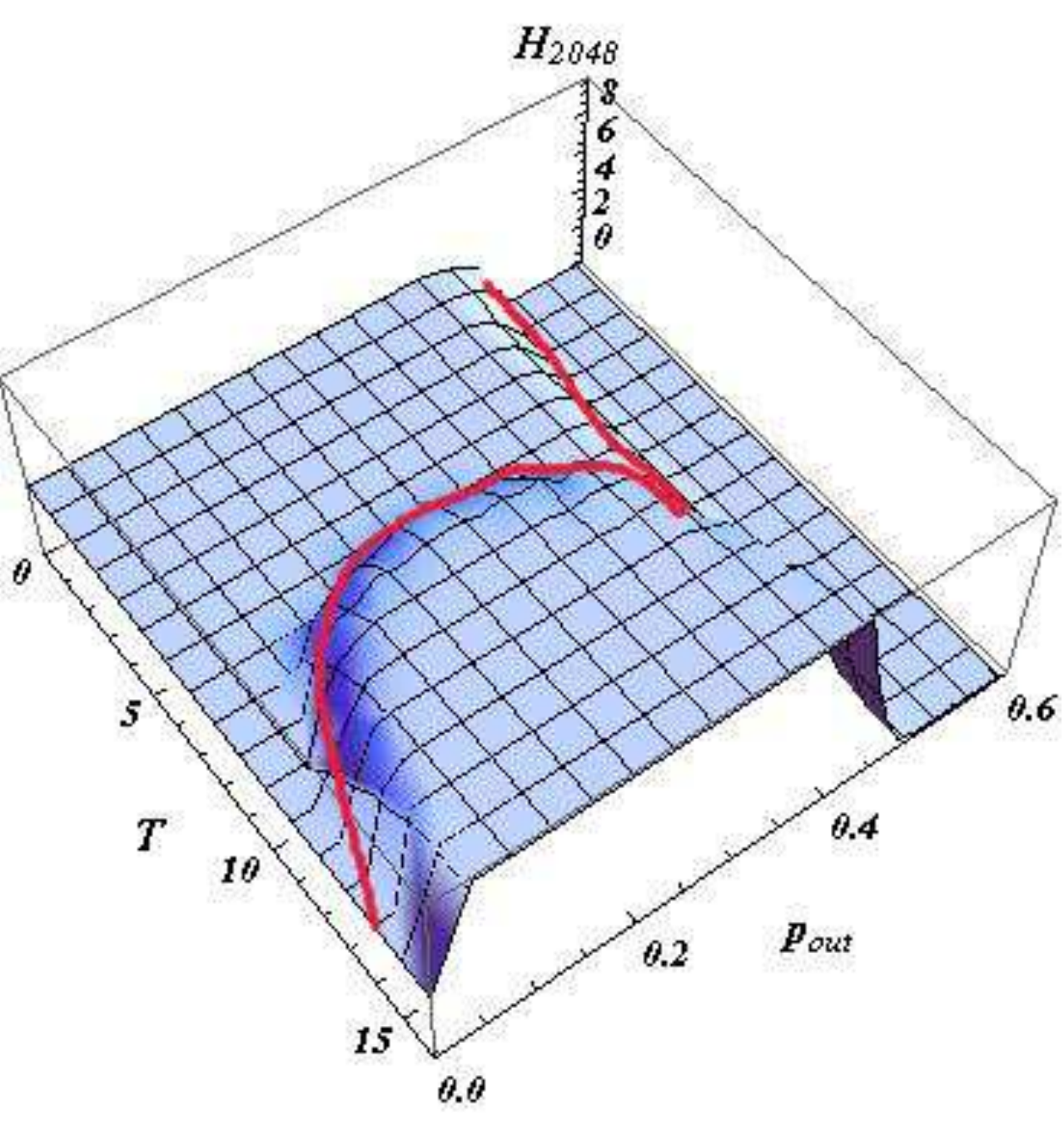}}
\subfigure[\ $E(T,p_{out})$ for $q=16$]{\includegraphics[width=\subfigwidth]{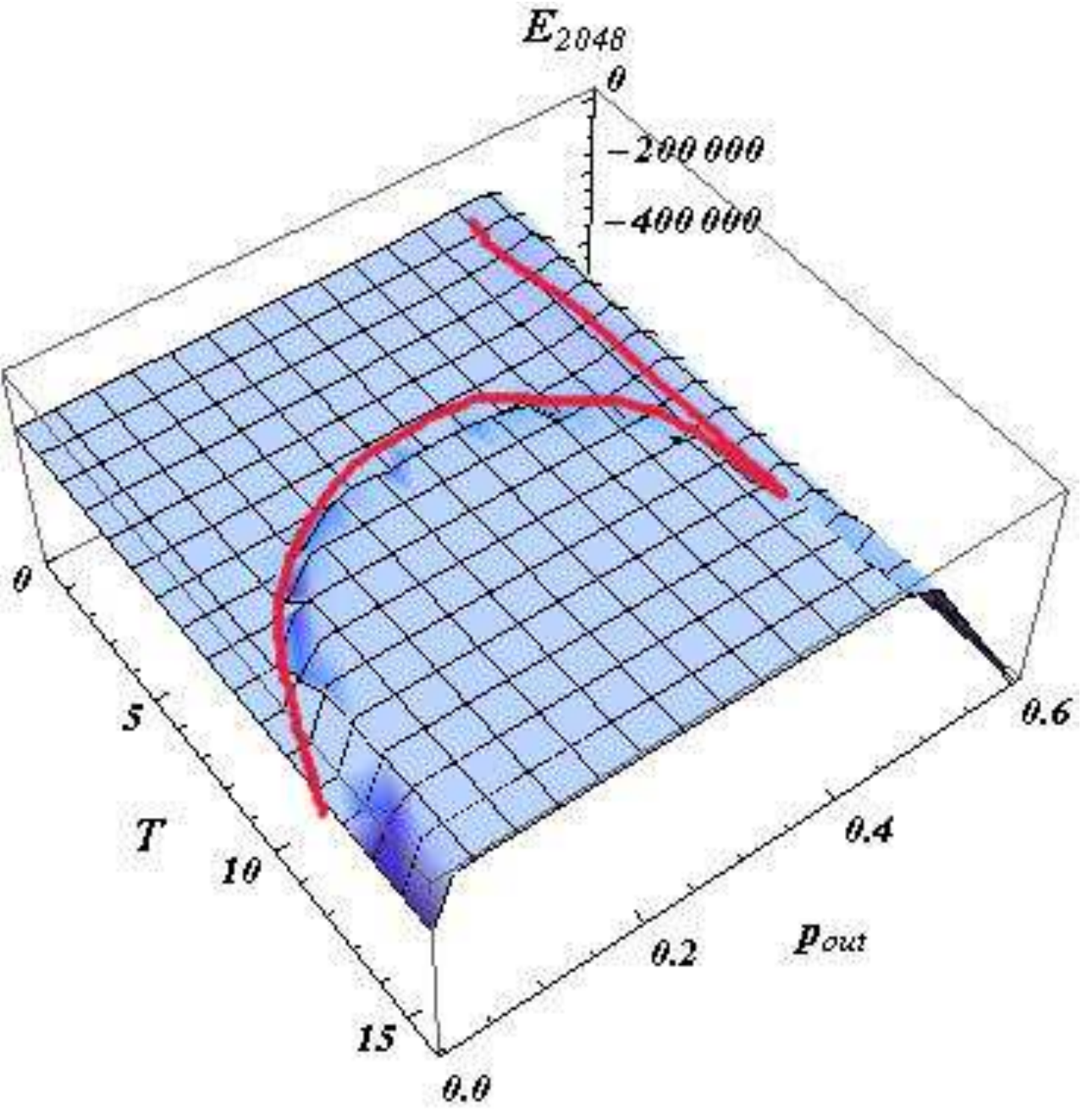}}
\subfigure[\ $\chi(T,p_{out})$ $q=32$]{\includegraphics[width=\subfigwidth]{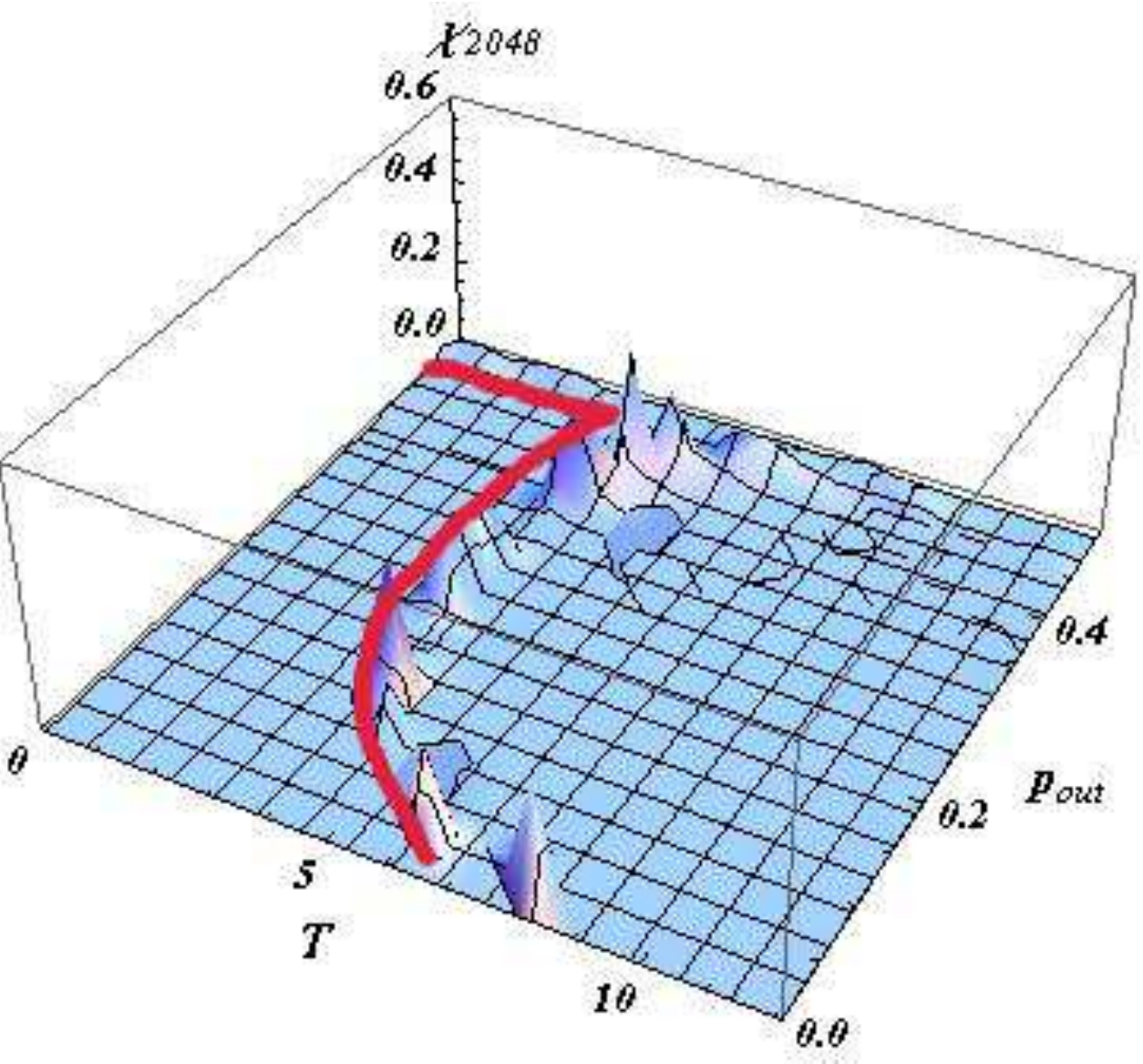}}
\subfigure[\ $I_N(T,p_{out})$ for $q=32$]{\includegraphics[width=\subfigwidth]{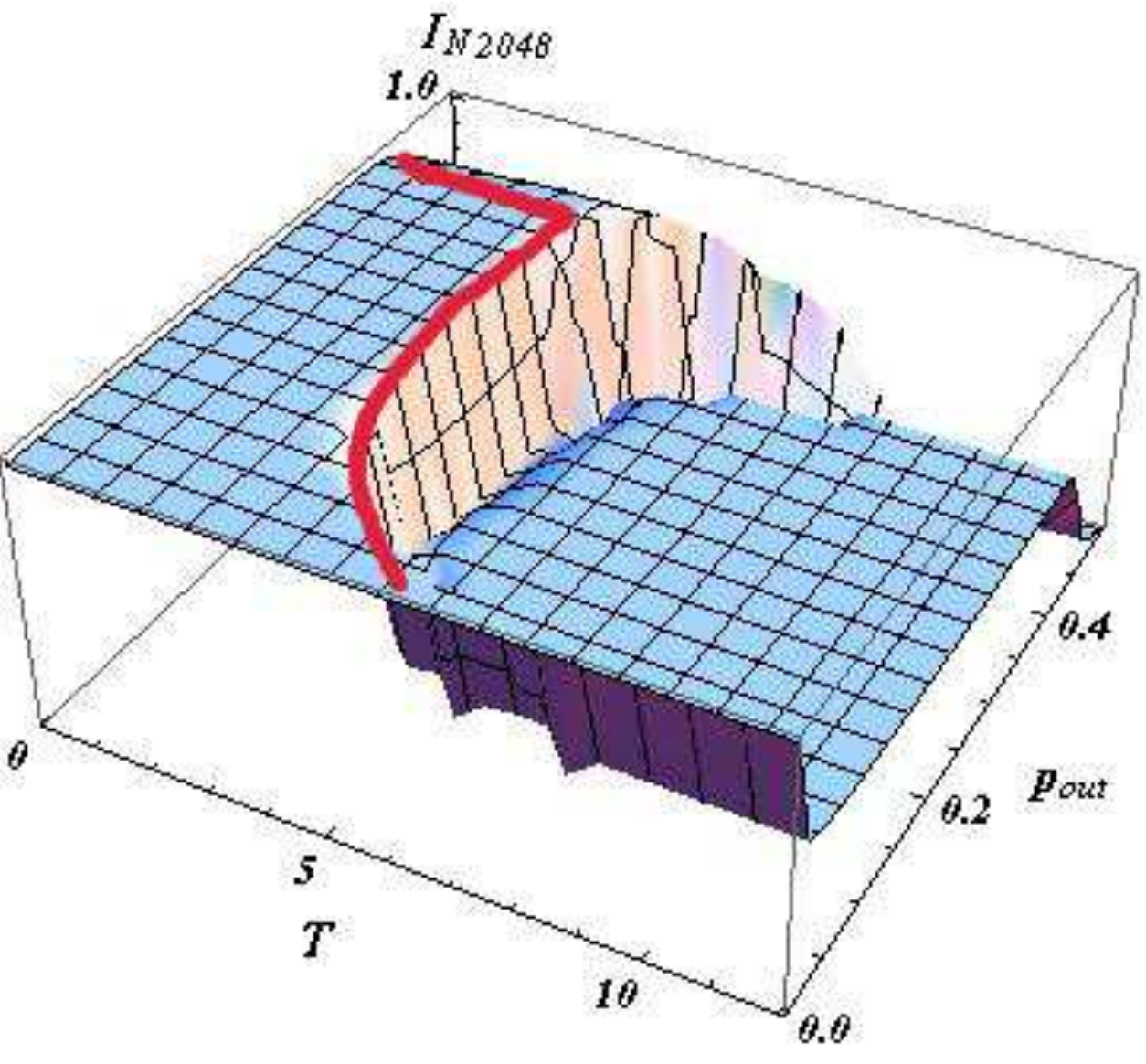}}
\subfigure[\ $H(T,p_{out})$ for $q=32$]{\includegraphics[width=\subfigwidth]{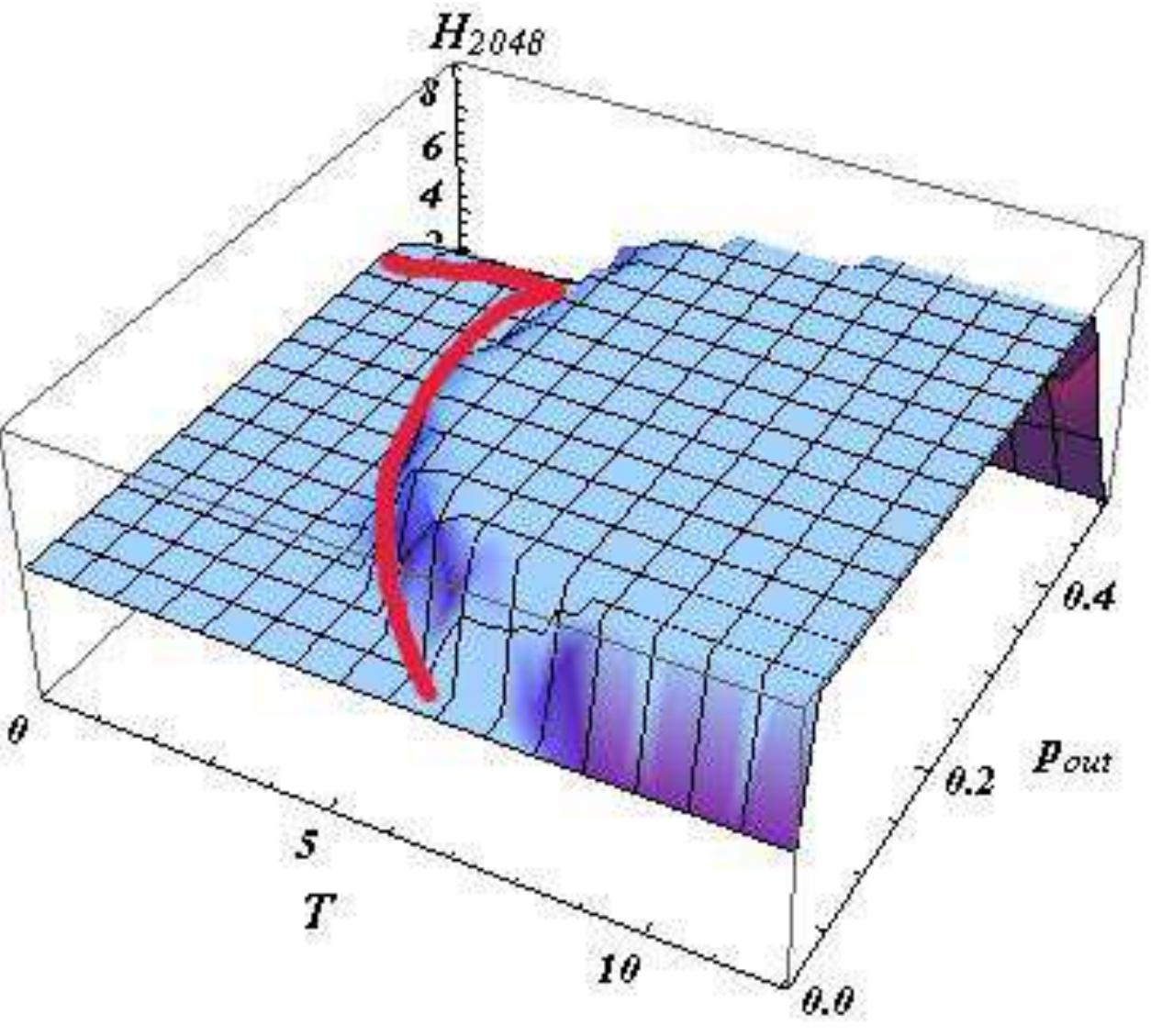}}
\subfigure[\ $E(T,p_{out})$ for $q=32$]{\includegraphics[width=\subfigwidth]{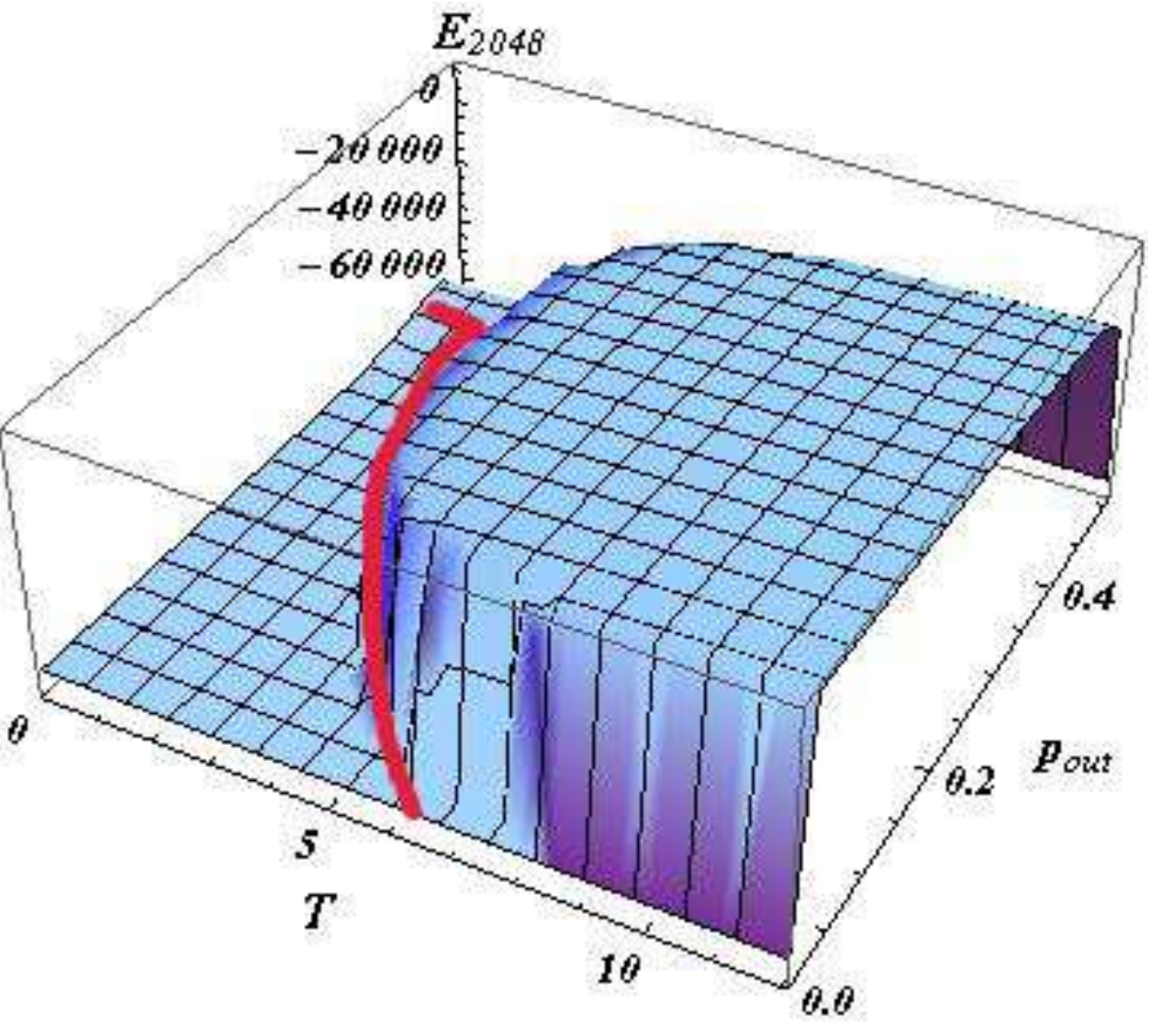}}
\subfigure[\ $\chi(T,p_{out})$ for $q=70$]{\includegraphics[width=\subfigwidth]{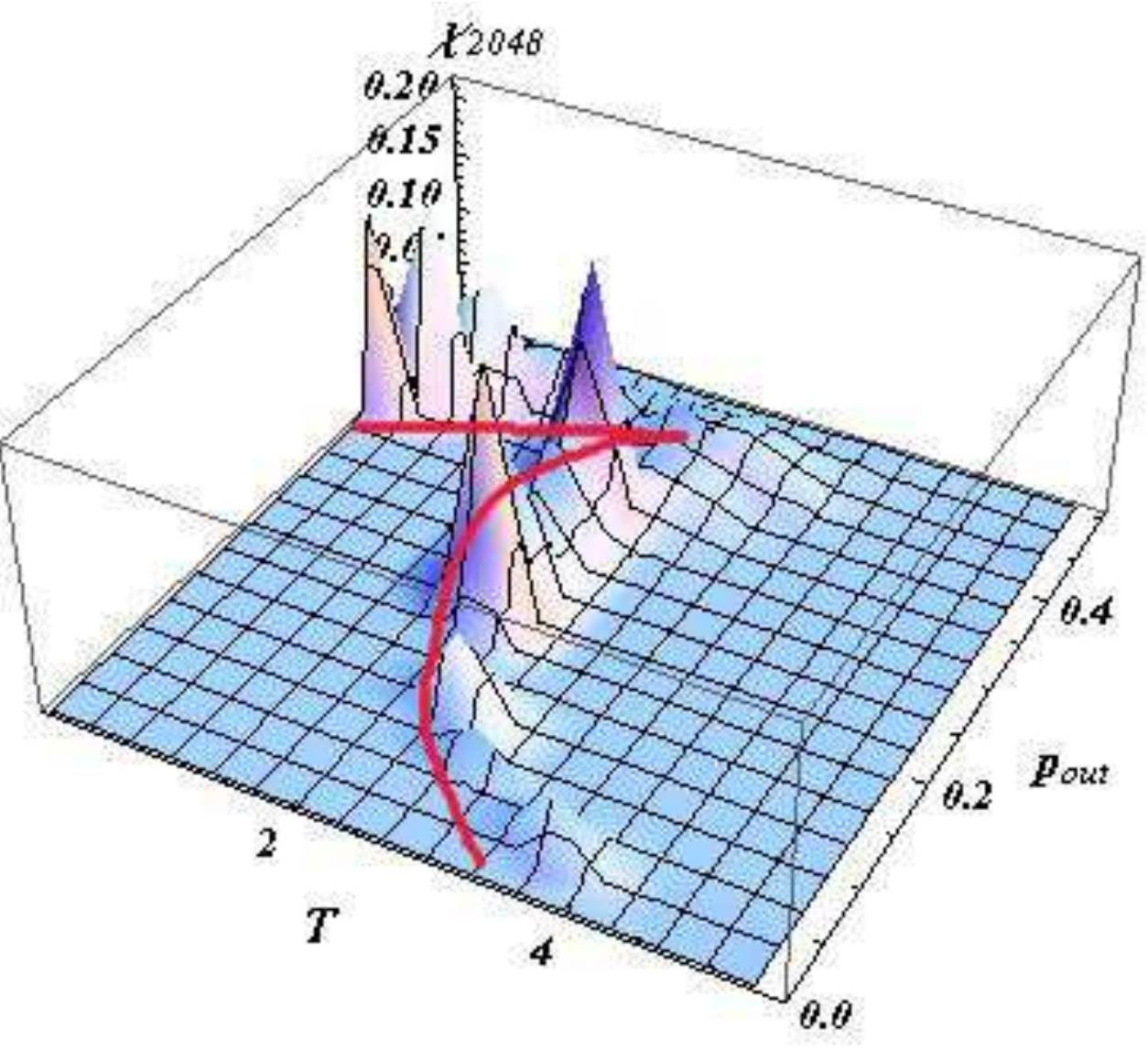}}
\subfigure[\ $I_N(T,p_{out})$ for $q=70$]{\includegraphics[width=\subfigwidth]{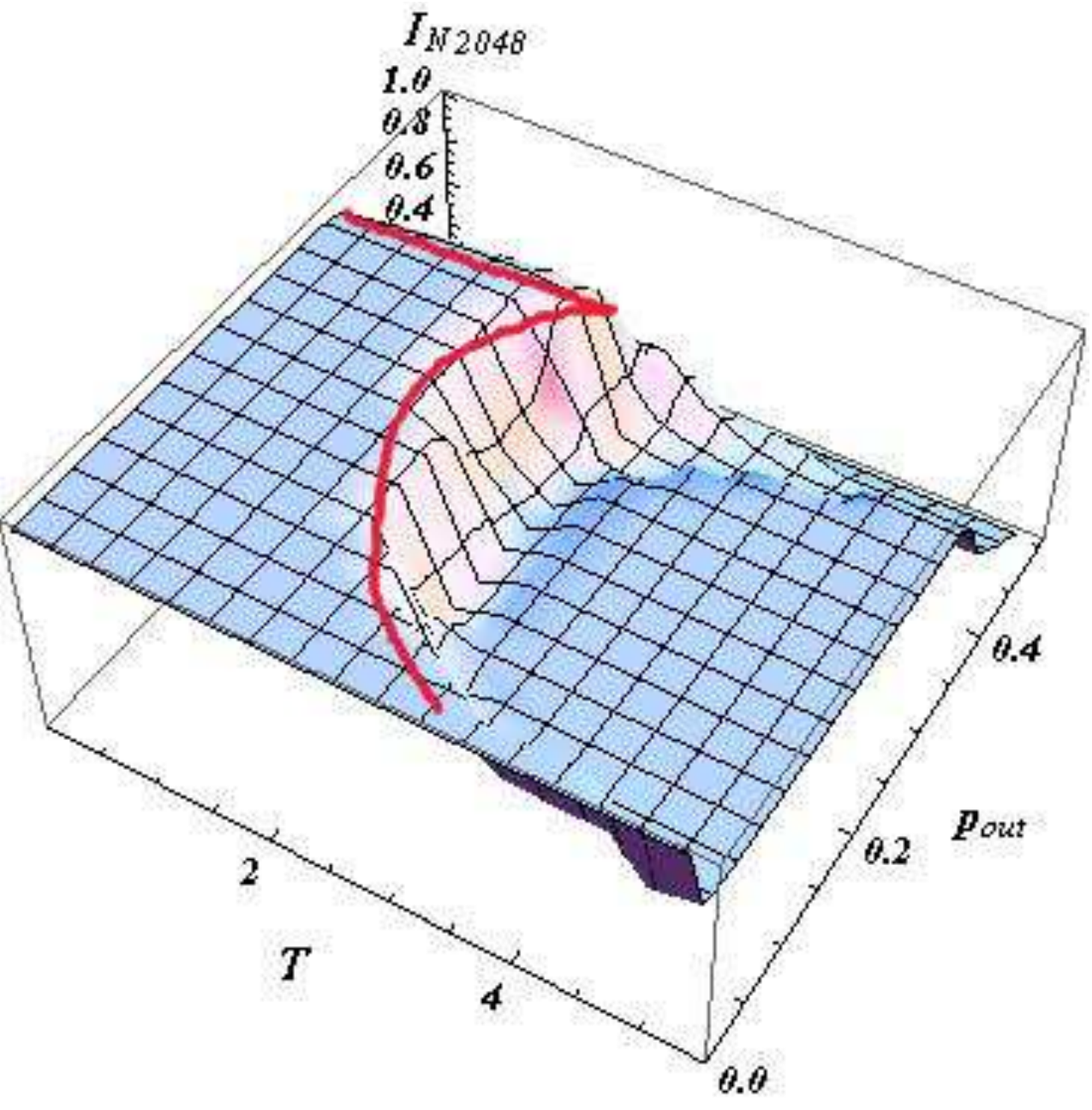}}
\subfigure[\ $H(T,p_{out})$ for $q=70$]{\includegraphics[width=\subfigwidth]{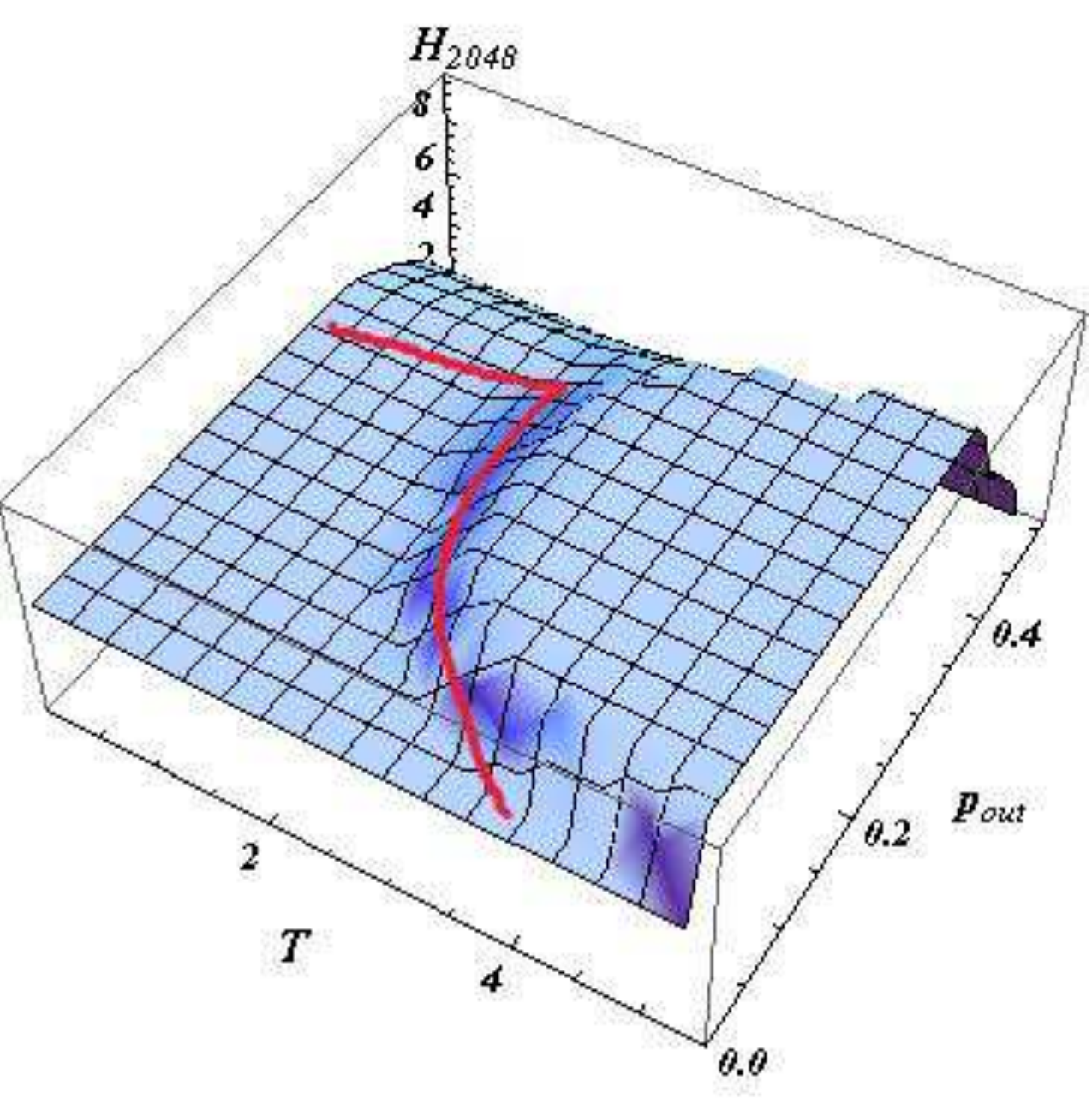}}
\subfigure[\ $E(T,p_{out})$ for $q=70$]{\includegraphics[width=\subfigwidth]{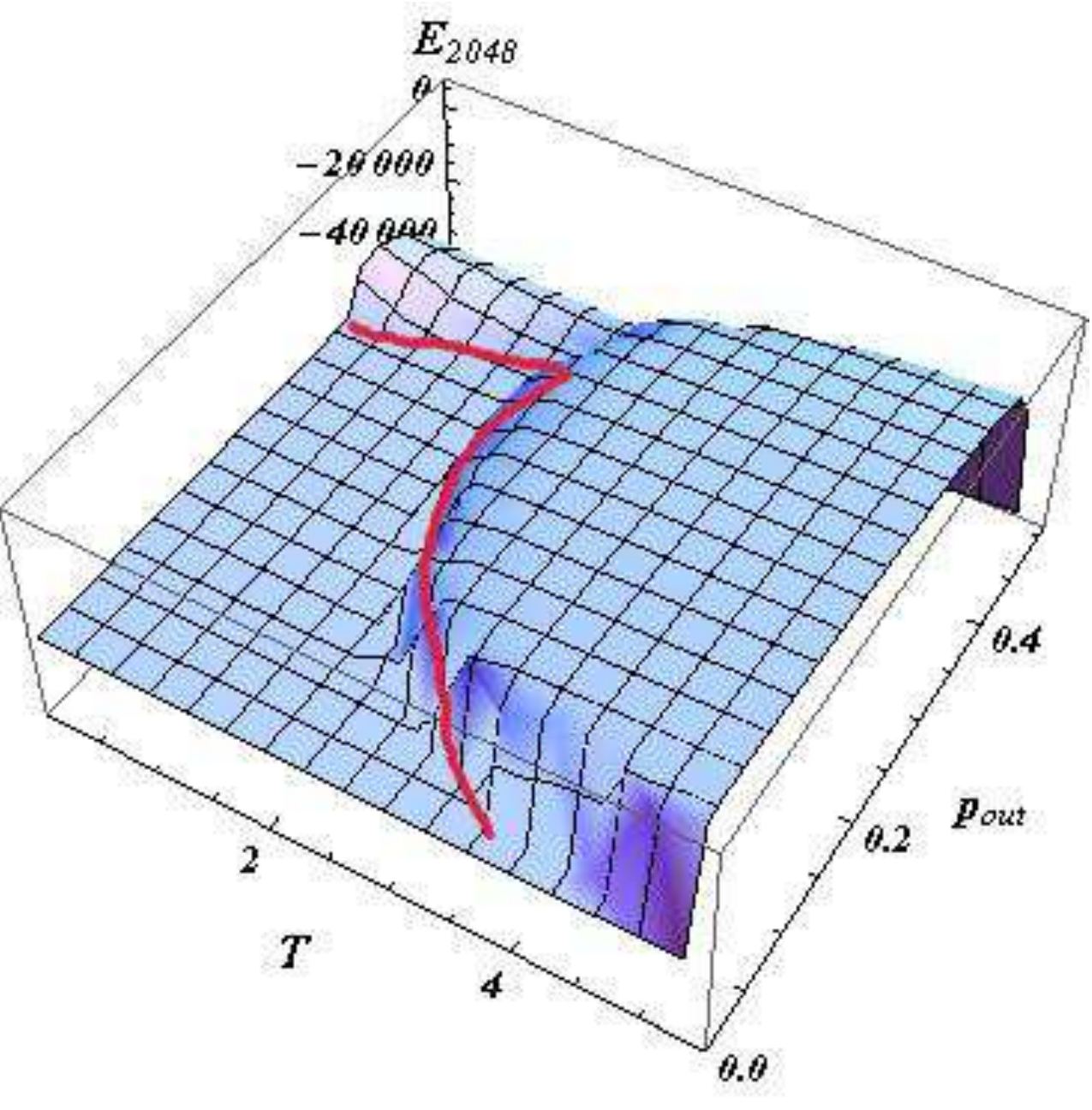}}
\subfigure[\ $\chi(T,p_{out})$ for  $q=140$]{\includegraphics[width=\subfigwidth]{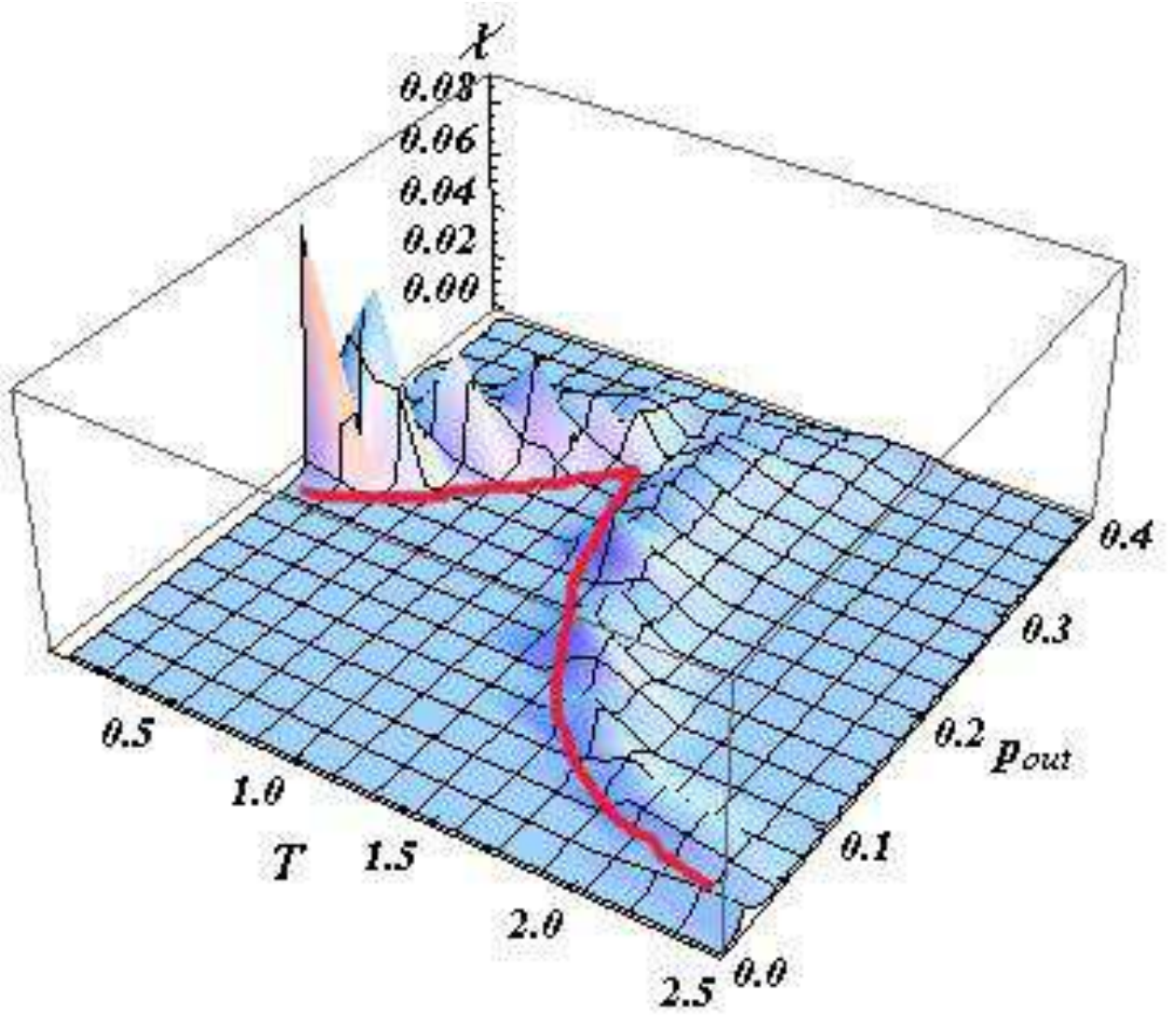}}
\subfigure[\ $I_N(T,p_{out})$ for $q=140$]{\includegraphics[width=\subfigwidth]{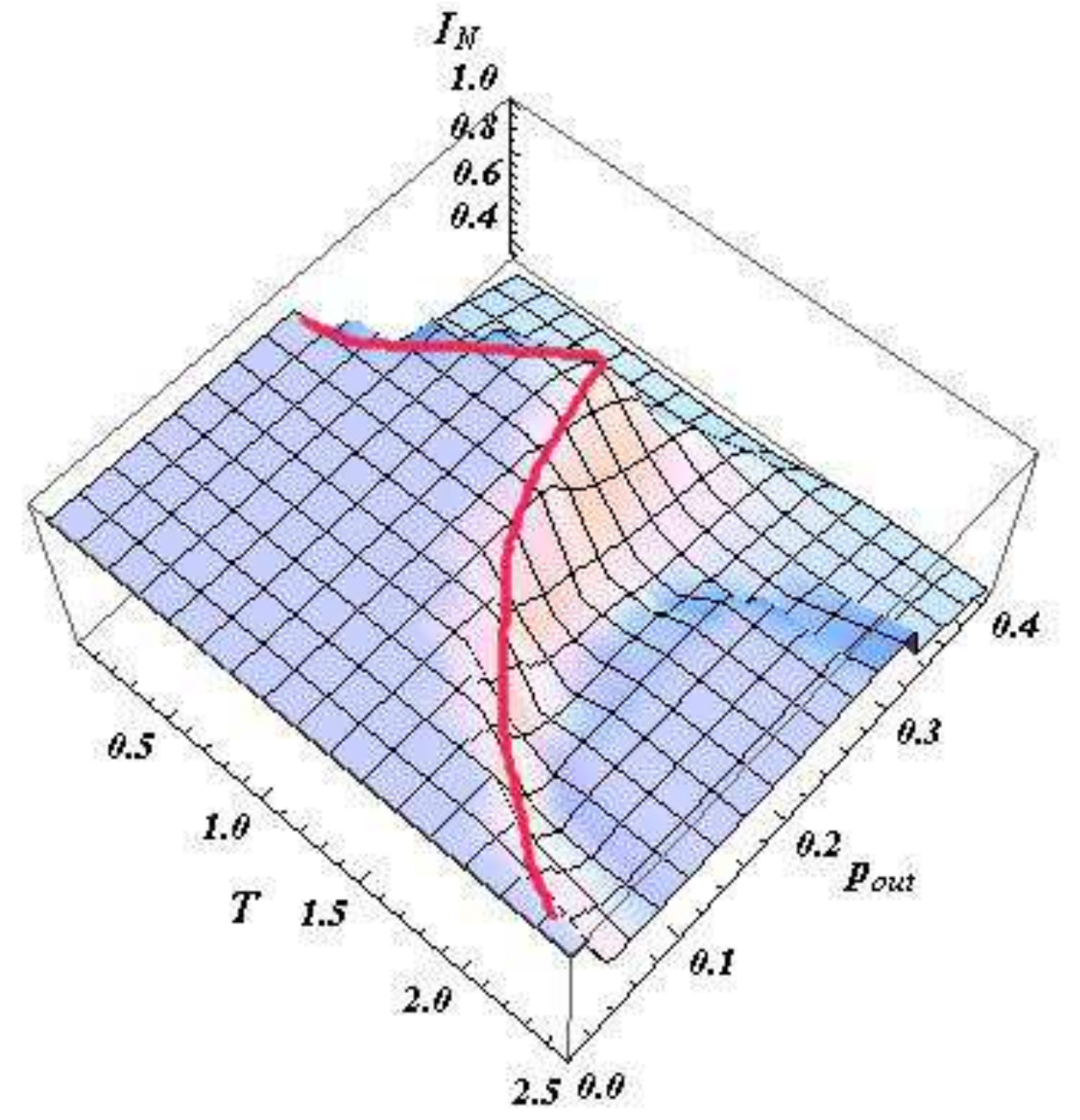}}
\subfigure[\ $H(T,p_{out})$ for $q=140$]{\includegraphics[width=\subfigwidth]{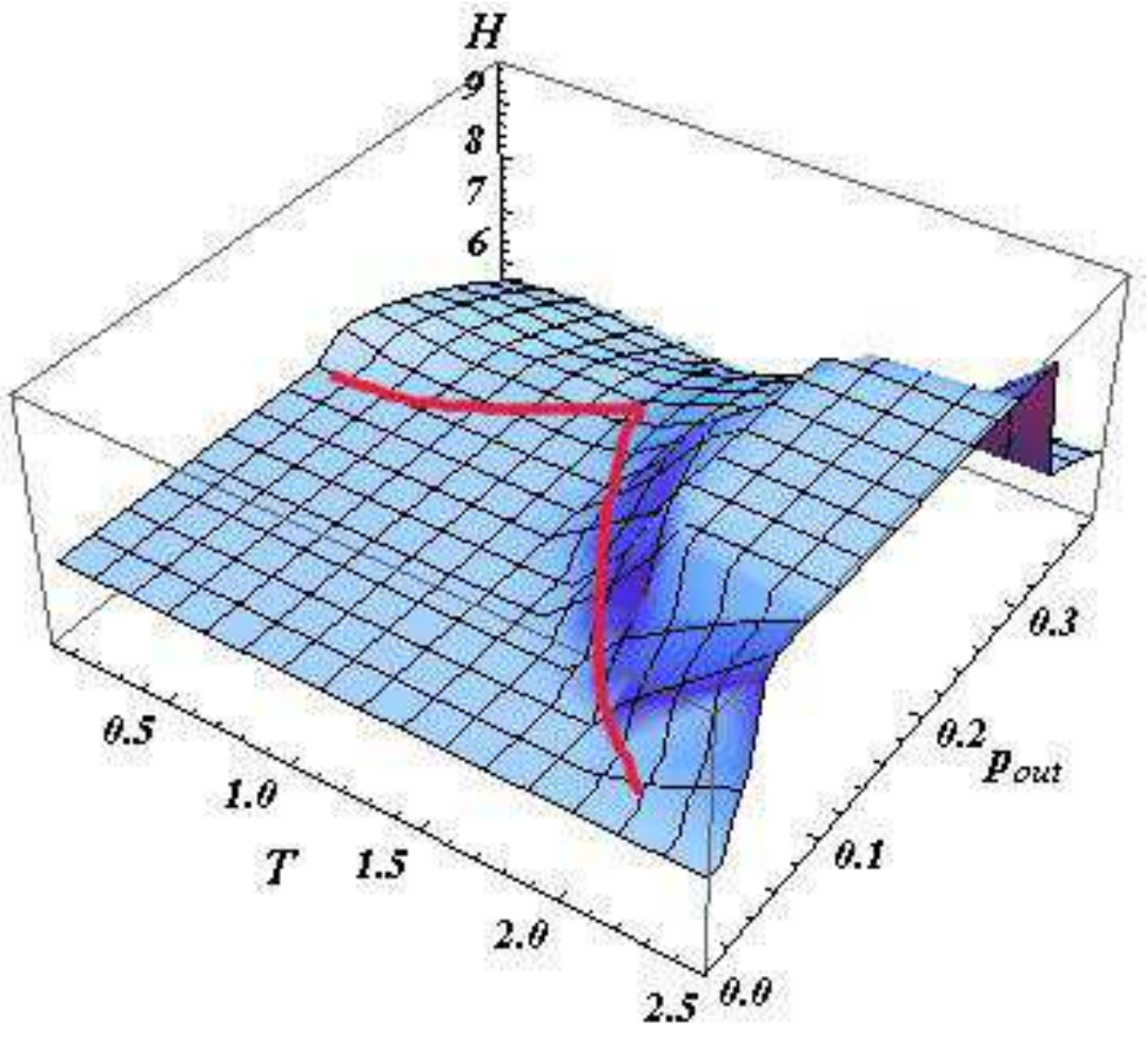}}
\subfigure[\ $E(T,p_{out})$ for $q=140$]{\includegraphics[width=\subfigwidth]{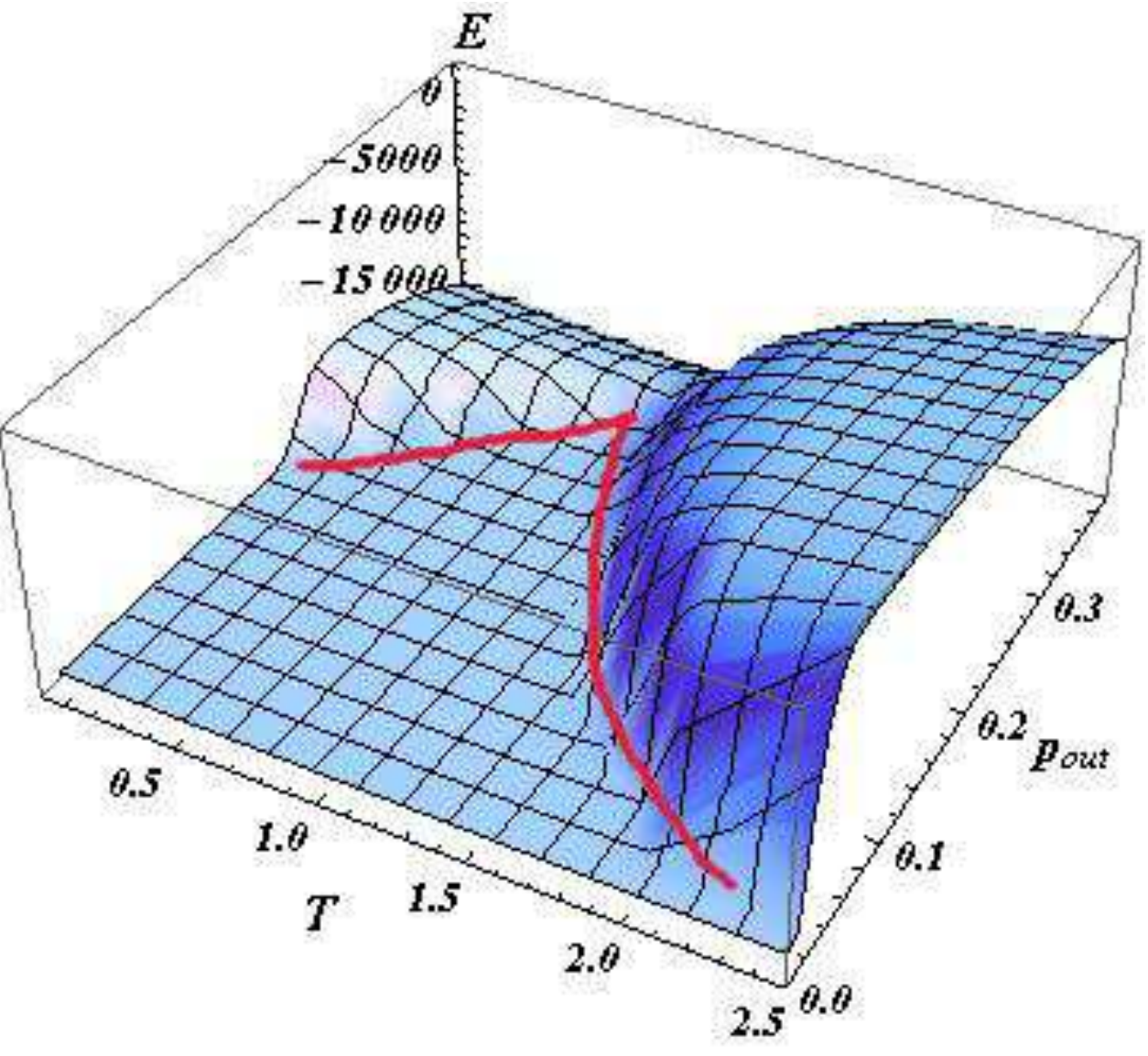}}
\end{center}
\caption{(Color online) Plots of the computational susceptibility $\chi$ (column one),
NMI $I_N$ (column two), Shannon entropy $H$ (column three), and energy $E$
(column four) as functions of temperature $T$ and intercommunity noise $p_{out}$.
System sizes all use $N=2048$, and $q$ varies from $16$ to $140$ in different rows.
All plots show the easy, hard, and unsolvable phases often by rapid shifts
in the respective measures.
The red lines serve as a guide to the eye for emphasizing the manifestation
of the hard phases in each measured quantity where we note that the boundaries
match well across each row.}
\label{fig:EHqAll}
\end{figure*}

\begin{figure}
\centering
\includegraphics[width=0.9\columnwidth]{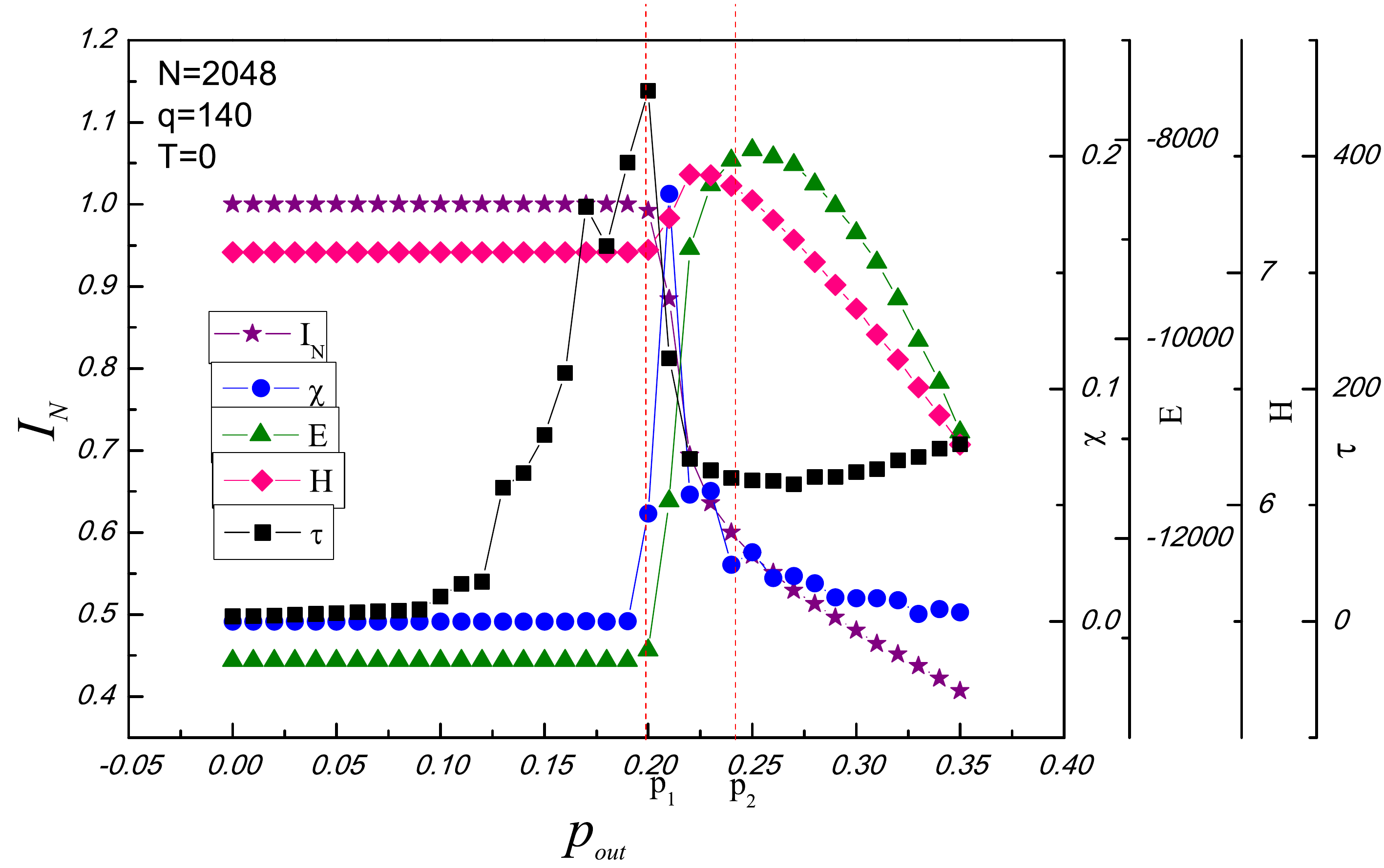}
\caption{(Color online) The plots of susceptibility $\chi$,
convergence time $\tau$, energy $E$, accuracy $I_N$ and the Shannon
entropy $H$ in terms of noise $p_{out}$ for the system $N=2048$ and
$q=140$ at a zero temperature. All the plots show three phases as
noise varies: (1) Below $p_{1}=0.2$, the system can be solved in
this ``easy'' region ( e.g., the accuracy is $I_N=1$ ); (2) When
$0.2<p_{out}<0.24$, where the benefit of extra trials is the
largest, it's ``hard'' to solve the system without misplacing nodes
(e.g., $\chi$, $E$ and $H$ achieve the peak ) ; (3) Above
$p_{2}=0.24$, the system is ``impossible'' to be perfectly solved.
$[p_1,p_2]$ are generous bounds in transition crossover regions.
Note that the two transitions are demonstrated to be of spin-glass-type
by observing the scaling of the correlation function between $[p_1,p_2]$
in \figref{fig:correlation}.}
\label{fig:comparison2048q140}
\end{figure}

\begin{figure*}[t]
\begin{center}
\subfigure[\ $\chi$, $\tau$, $E$, $H$, $I_N$ for the system of size
$N=1024$]{\includegraphics[width=3.5in]{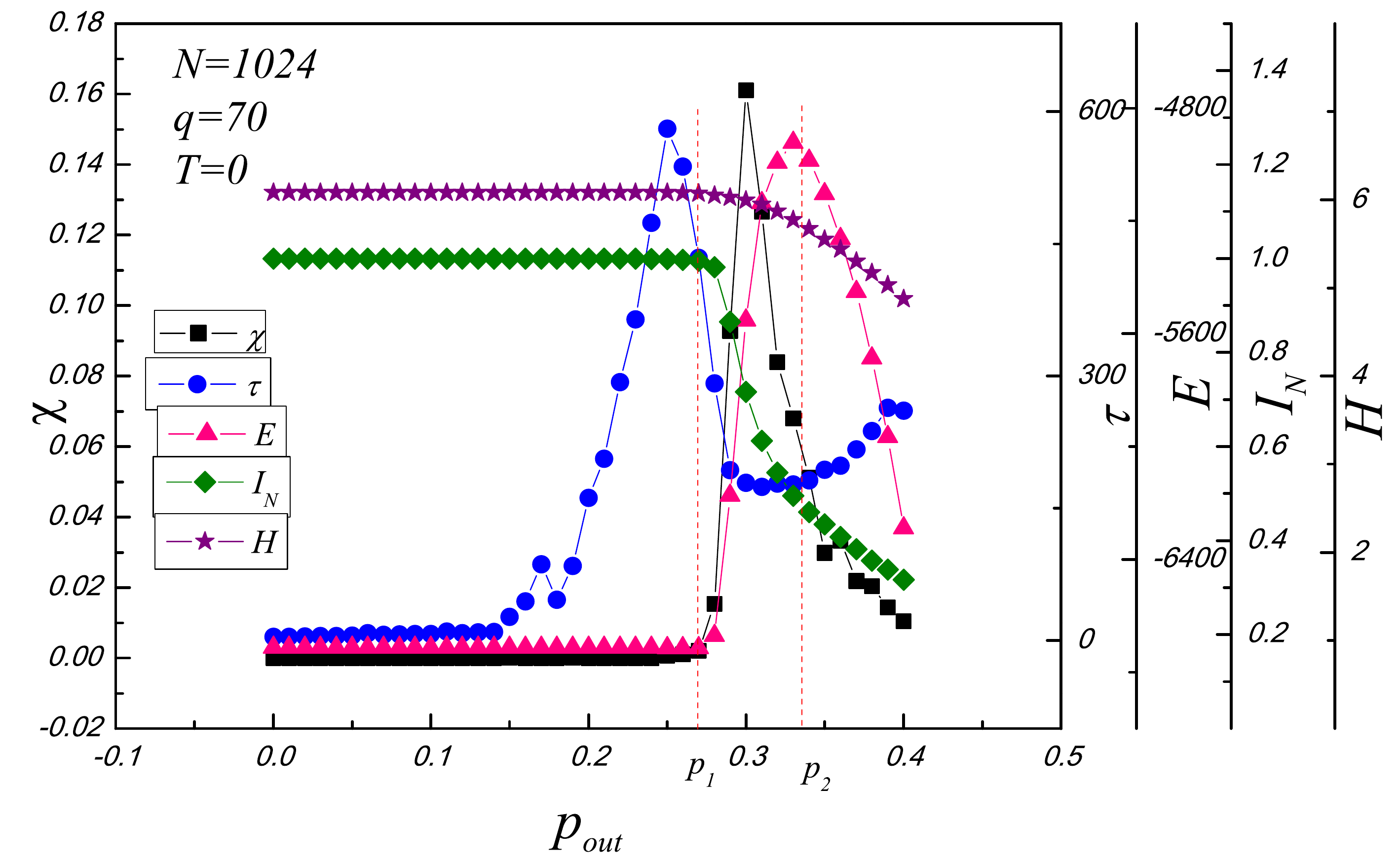}}
\subfigure[\ $I_N$ for the system of $N=1024$ with different
$q$]{\includegraphics[width=3in]{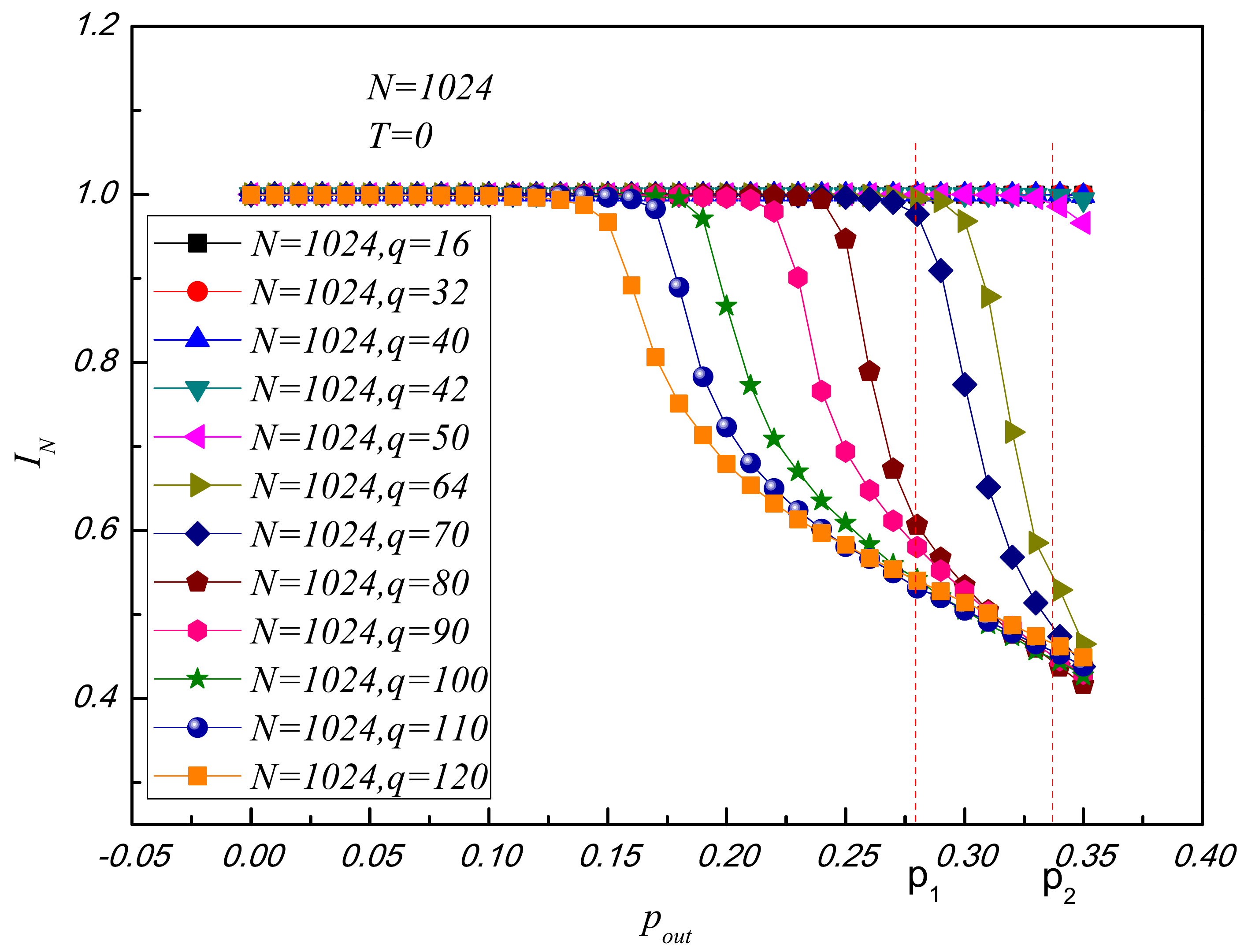}}
\end{center}
\caption{(Color online) (a), The plots of susceptibility $\chi$,
convergence time $\tau$, energy $E$, accuracy $I_N$, Shannon entropy
$H$ in terms of noise $p_{out}$ for the system $N=1024$ and $q=70$
at a zero temperature. (b), The normalized mutual information $I_N$
in terms of noise $p_{out}$ for a series of systems with the size of
$N=1024$ but different number of communities $q$. From both plots,
we are able to detect the first and second transition point $p_1$
and $p_2$. $p_1$ is the point where the $I_N$ drops from $1$, $\chi$
increases from $0$, $\tau$ achieves the peak, $E$ and $H$ increases
from some constant value. $p_2$ is the position where the $I_N$
curves with different number of communities collapse shown in (b).
$p_2$ also corresponds to the peak of energy and entropy as shown in
(a).} \label{fig:comparison1024q70}
\end{figure*}

\begin{figure*}[t]
\begin{center}
\subfigure[\ $p_{out}=0.2$ is within the zero temperature ``hard''
phase, where the collapse is
perfect.]{\includegraphics[width=2in]{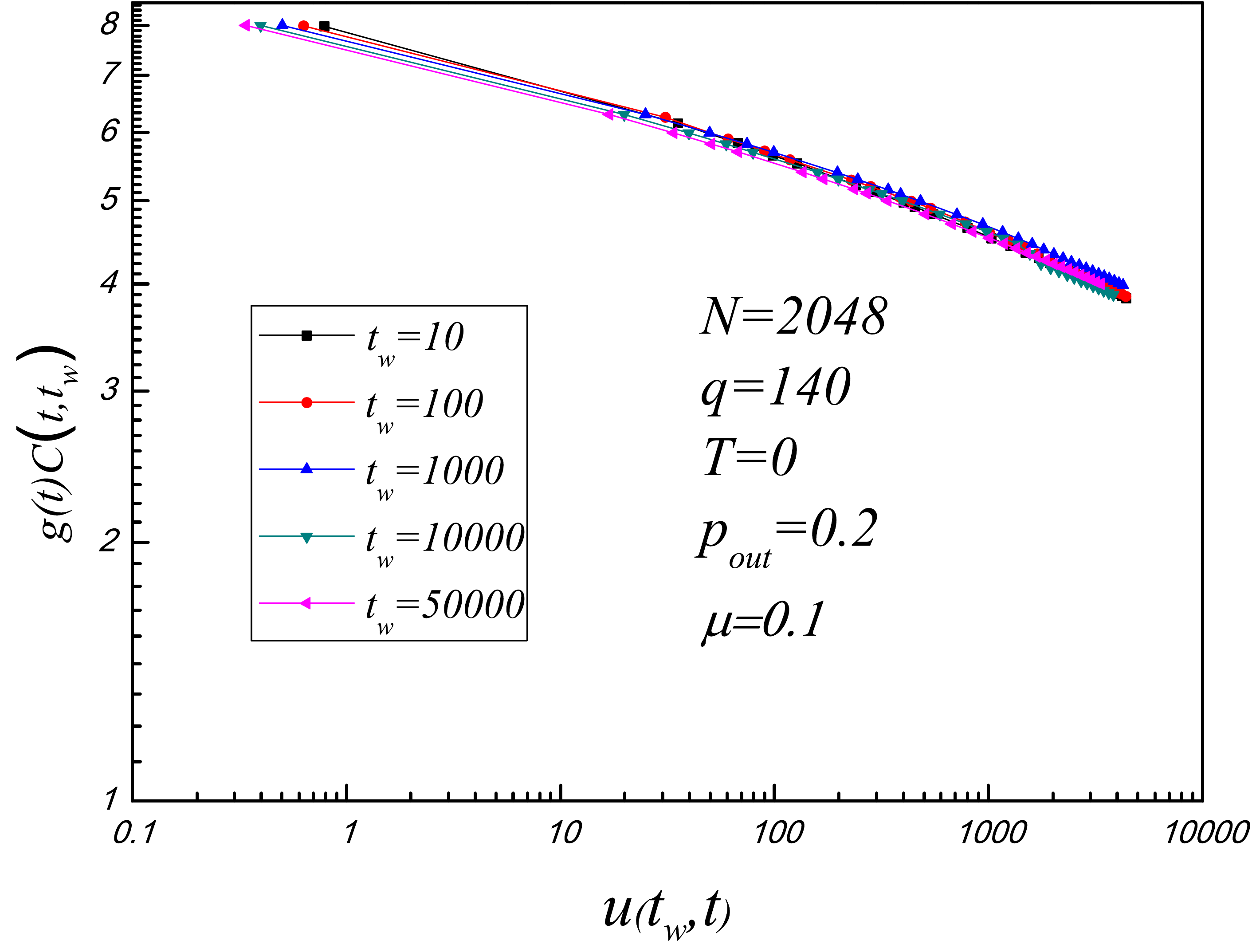}}
\subfigure[\ $p_{out}=0.22$ is within the zero temperature ``hard''
phase, where the collapse is
perfect.]{\includegraphics[width=2in]{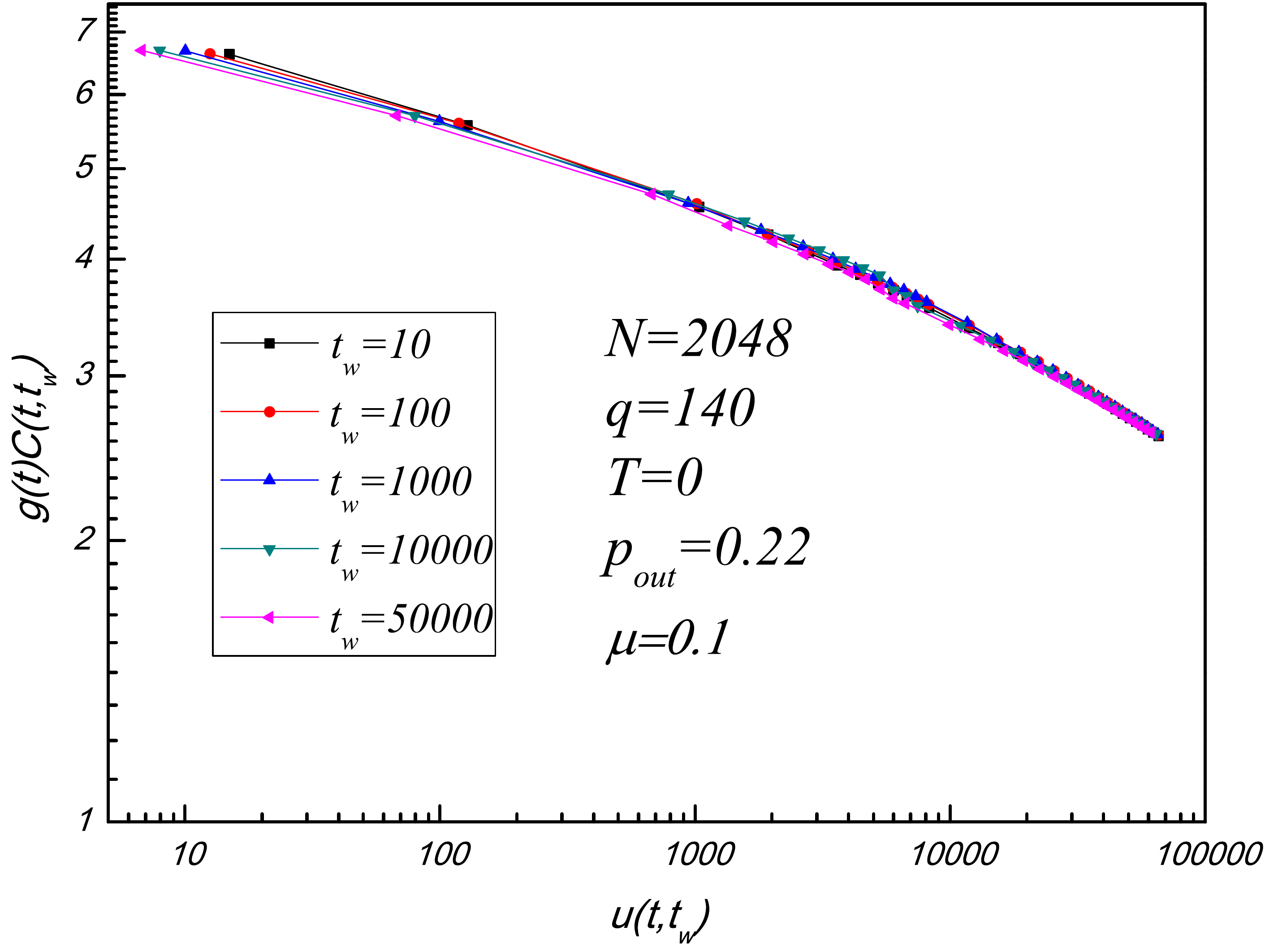}}
\subfigure[\ $p_{out}=0.26$ is within the zero temperature
``unsolvable'' phase, where the collapse is
poor.]{\includegraphics[width=2in]{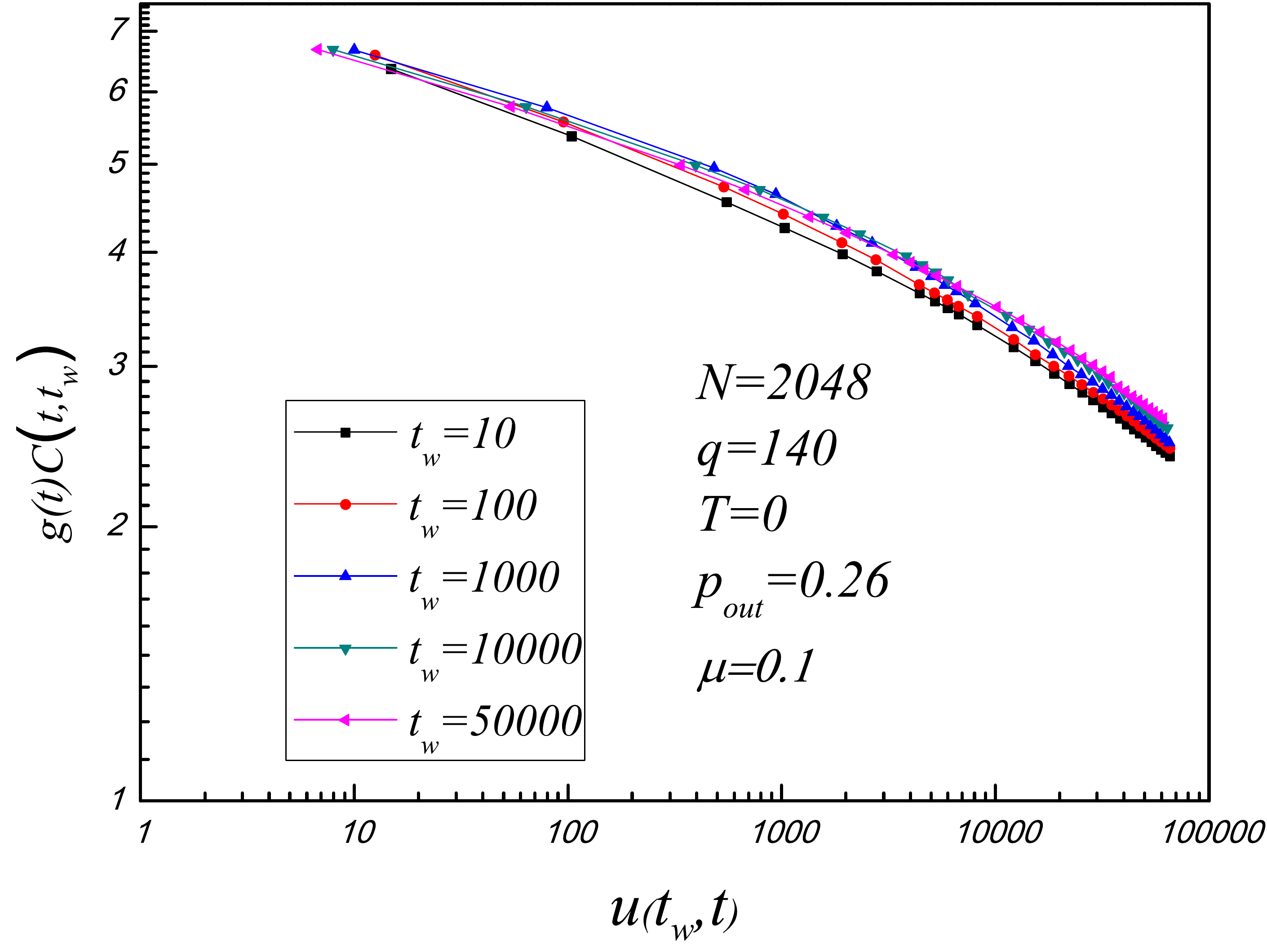}}
\end{center}
\caption{(Color online). We show a collapse of the correlation
curves for different waiting times $t_w$ for a system with $N=2048$
nodes, $q=140$ communities. $p_{out}$ varies from $0.2$ in panel (a)
to $0.26$ in panel (c). The first and second transition points for
this system are $p_1=0.2$ and $p_2=0.24$. The temperature is $T=0$.
The vertical axis is $g(t)C(t_w,t)$ where $g(t)=8-\log_{10}(t)$,
$C(t_w,t)=\frac{1}{N}\sum_{i=1}^N\delta_{\sigma_i(t_w),\sigma_i(t_w+t)}$
is the correlation function. The horizontal axis is
$u(t_w,t)=\frac{1}{1-\mu}[(t+t_w)^{1-\mu}-t_w^{1-\mu}]$ where
$\mu=0.1$. The noise $p_{out}=0.2$ (a) and $p_{out}=0.22$ (b) lie
within the ``hard'' region where the collapse of correlation
function is perfect. The noise $p_{out}=0.26$ (c) is above the
second transition point $p_2$ in the ``unsolvable'' region, where
the collapse becomes poor. That the collapse of the correlation
function starts to degrade right after the second transition point
$p_2$ at zero temperature indicates that this transition is of the
spin-glass type.}
\label{fig:correlation}
\end{figure*}

\subsection{$\chi(T,p_{out})$ at fixed $q$} \label{sec:sec2}



We fix the number of communities at $q=16$, $40$, or $70$ and increase the
system size $N$ from $256$ to $2048$.
The plots of computational susceptibility $\chi(T,p_{out})$ for $q=16$
series of systems are shown in panels (a) through (d) of \figref{fig:susAllq}.
As in \secref{sec:sec1}, the ridges indicate hard phases which become more
prominent as $N$ increases while the ridges at low temperature remain at relatively
low constant values.

The areas of the easy phases on the lower left corner expand as the system size
increases from panel (a) to (d).
This trend of increasing area is the reverse of the behavior in the fixed $\alpha$
systems systems in \secref{sec:sec1}.
This is easy to understand since, $q$ increases with $N$ here, and the high internal
edge density $p_{in}$ causes the larger clusters to be more strongly defined.


We increase the number of communities to $q=40$ for the systems in panels
(e) through (h).
$N$ varies from $256$ to $2048$, and $\alpha=q/N$ decreases as $N$ increases
so that the systems again become less complicated because the communities
become more strongly defined.
The hard and unsolvable phases in the small $N=256$ system in panel (e) are
difficult to distinguish.
Only the easy phase can be easily identified by noting the flat region
on the lower left of each panel.
$\chi(T,p_{out})$ peaks at increasing heights at both the low and high temperatures
from panels (f) to (h) indicating that the phase transitions become more prominent
as the system size increases.


We further increase the number of communities to $q=70$ and study the
phase transitions for the same range of system sizes.
The hard phase at high temperature in panel (i) is difficult to detect.
$\chi(T,p_{out})$ clearly shows the three phases in panels (j) and (k).
The easy phases again become larger as the system size increases.
$\chi(T,p_{out})$ in the hard phase increases as $N$ increases indicating
that the phase transitions at both low and high temperatures are more obvious
from panel (i) to (k).

In \figsref{fig:sus2dq}{fig:p1q}, we also show corresponding 2D
plots for the boundaries of the hard phase and the first transition point
$p_1$ as the function of temperature $T$.
In \figref{fig:sus2dq}, the area of the hard phase becomes narrower
as the system size increases.
At $q=40$, for example, the width of the hard phase for the smallest
system at $N=256$ is about $\Delta T=1.5$.
As $N$ increases, the hard phase width shrinks to $\Delta T=1$ at $N=512$
and down to $\Delta T=0.5$ for $N=2048$ which further indicates that the
phase transition becomes sharper in the thermodynamic limit.
In \figref{fig:p1q}, the first transition point $p_1$ increases over
the entire temperature range as $N$ increases.
This behavior is consistent with the system complexity trend as previously
mentioned.

In \figref{fig:tq}, we further plot the convergence time $\tau$ as
the function of noise $p_{out}$ for a fixed number of communities
$q$ at zero temperature. $p_{out}$ for the first peak of the
convergence time matches the first transition point $p_1$ in
\figref{fig:p1q}. As the system size increases, the peak moves to
the right. This is also consistent with \figref{fig:p1q} where the
system becomes less complicated as $N$ increases.

\subsection{Other information theoretic and thermodynamic quantities} \label{sec:sec3}

We further fortify and provide our results of the phase diagram of
our systems as ascertained via other information theoretic and
thermodynamics quantities. These measures include the average
normalized mutual information $I_N$ between replica pairs, Shannon
entropy $H$, and energy $E$ as shown in \figref{fig:EHqAll}. We
additionally show the corresponding computational susceptibility
$\chi$ from \figref{fig:susAllalpha} or \ref{fig:susAllq} for
comparison. All panels are for a system of size $N=2048$. In panels
(a) through (d), $q=16$ which corresponds to
\subfigref{fig:susAllalpha}{d}. Panels (e) through (h) plot results
for $q=32$ with $\alpha=0.015$ which corresponds to
\subfigref{fig:susAllq}{d}. Panels (i) through (l), display the
results for $q=70$ which corresponds to
\subfigref{fig:susAllalpha}{l}. Finally, Panels (m) through (p)
display results for $q=140$ and $\alpha=0.07$ corresponding to
\subfigref{fig:susAllq}{h}.

All panels consistently display the three different complexity
phases: the ``easy'' (flat region, lower left), ``hard'' (varied
central regions), and ``unsolvable'' phases (far right or top). The
existence of the hard phase is reflected by the ridges at both low
and high temperatures in the susceptibility $\chi$ plot which often
corresponds rapids shifts (up or down) in the other measures. In
each plot, the red line serves as a guide to the eye to emphasize
the boundaries between different phases. The boundaries are
consistent with each other across the respective rows.

In Ref.\ \cite{ref:huCDPTsgd}, we also demonstrated the spin glass character
of the phase transition by observing the exceptional collapse of time
autocorrelation curves (over four orders of magnitude of time at high and
low temperatures) in the vicinity of the hard phase.
We further elucidated on evidence regarding phase transitions \cite{ref:huCDPTsgd}
in identifying community structure via a dynamical approach (some other dynamical
methods include \cite{ref:arenassync,ref:gudkov}) where ``chaotic-type'' transitions
that we speculated upon may extend into the node dynamics for large systems.


\begin{figure}[t!]
\begin{center}
\subfigure[]{\includegraphics[width=\subfigwidth]{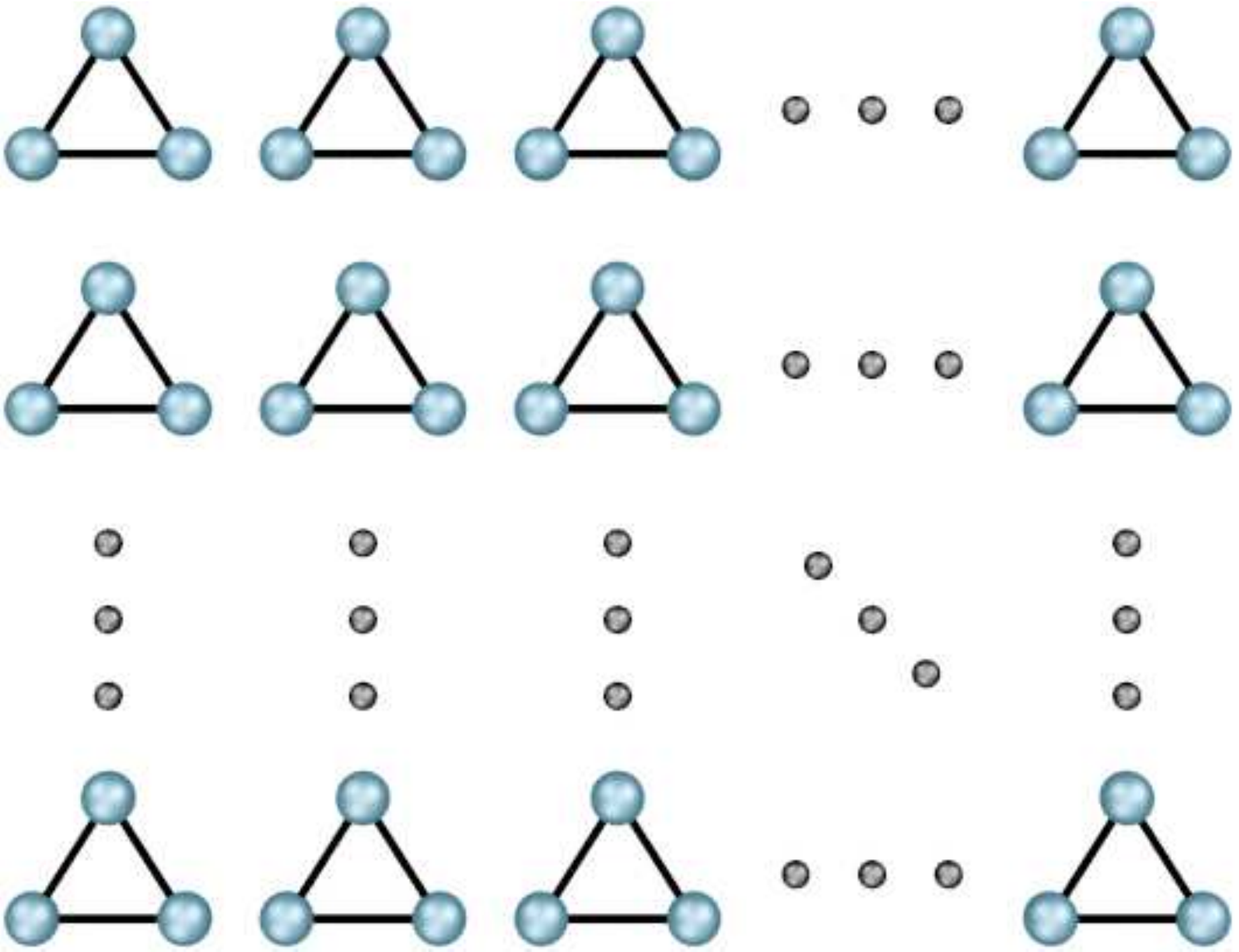}}
\subfigure[]{\includegraphics[width=1.45in]{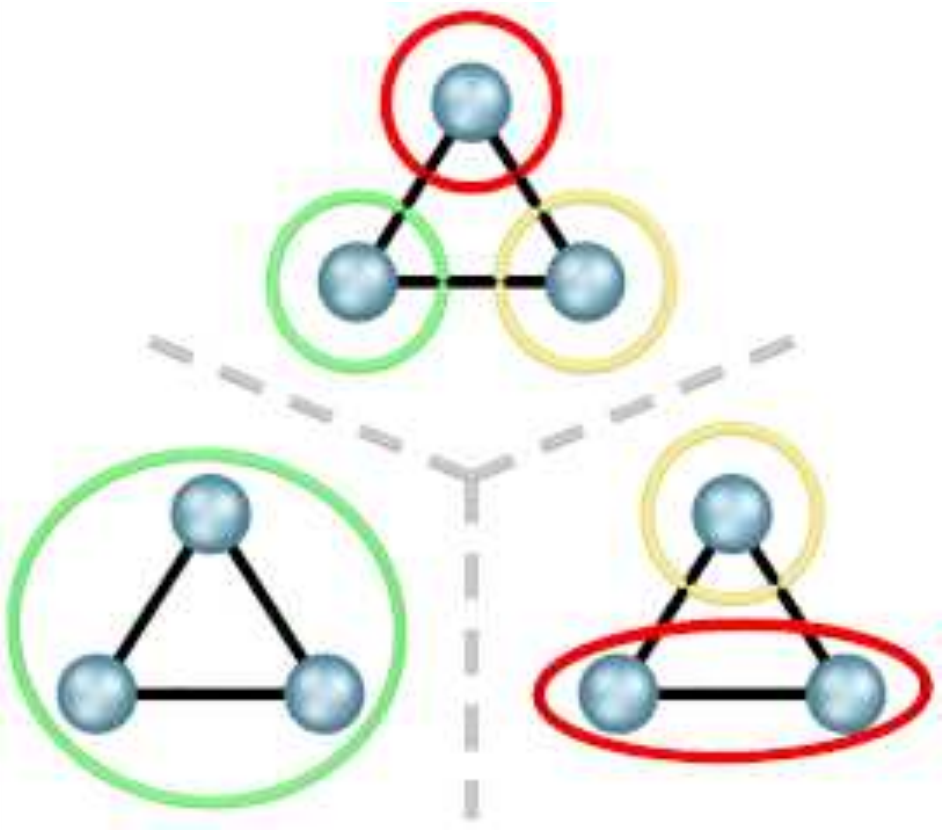}}
\end{center}
\caption{(Color online) Panel (a) depicts $q$ independent cliques (maximally connected clusters).
Panel (b) indicates the different combinations of $l=3$ nodes which must
be summed (including three copies of the $2$-$1$ configuration)
in order to determine the partition function for a single clique.}
\label{fig:cliqueexample}
\end{figure}

\section{non-interacting cliques}
\label{sec:NIcliques}

As depicted in \figref{fig:cliqueexample}, we analytically estimate a minimum
transition temperature by examining a system with $q$ non-interacting cliques.
In panel (a), each of the $q$ communities consists of $l$ nodes which are
maximally connected, but no noise exists between these cliques.
The presence of noise will, in general, lower the temperature $\Tcross$ of the
transition point which manifests as departure from the easy phase
in certain regions of \figsref{fig:susAllalpha}{fig:susAllq}.

Within our algorithm and model, communities do not interact in an explicit sense.
In addition, with this model problem the situation is greatly simplified because
no edges are assigned between cliques,
so we use \eqnref{eq:ourpotts} to calculate the partition function of the system
by counting the energy contribution of all edges within each cluster over the
number of combinations for partitioning the clusters.
As a further simplification, we also set the energy contribution for a single edge
to be $-2$ so that the Hamiltonian gives an energy of $-1$ for each edge.

\subsection{Partition function}  \label{sec:largeqexpansion1}

First, we investigate the smallest non-trivial clique size with $l=3$ nodes.
The partition function for the decoupled cliques is,
\begin{equation}
  \mathbf{Z} = \left(\mathbf{Z}_{l}\right)^q=\sum_{\sigma_i,\sigma_j}e^{-\beta H_{i,j}}
\end{equation}
where $\mathbf{Z}_{l}$ is the partition function for a single clique
and $\beta = 1/T$ is the inverse temperature.
Considering the $l=3$ cluster combinations depicted in \subfigref{fig:cliqueexample}{b},
$\mathbf{Z}_3$ is
\begin{eqnarray}
  \mathbf{Z}_{3} & = & qe^{6\beta}+3q(q-1)e^{2\beta}+3q(q-1)(q-2)\nonumber \\
                 &   & +~q(q-1)(q-2)(q-3).
  \label{eq:Z-clique3}
\end{eqnarray}
The first term represents the optimal local cluster solution,
and the sum of the remaining terms accounts for the remaining sub-optimal
local partitions.
We define $\omega_l$ as the ratio of Boltzmann weights of the sub-optimal
partitions to the optimal solution.
For $\mathbf{Z}_3$, the ratio $\omega_3$ is
\begin{equation}
  \omega_3 = \frac{q(q-1)\left[3e^{2\beta}+3(q-2)+(q-2)(q-3)\right]}{qe^{6\beta}}.
\end{equation}
$\omega_l<1$ indicates that the optimal solution is dominant, while
$\omega_l\rightarrow \infty$ means the system is disordered. We can
define $\omega_l=1$ as the transition point from the ordered phase
to the disordered phase, and the corresponding ``crossover''
temperature $\Tcross$ is found by solving the transcendental
equation
\begin{eqnarray}
  3(q-1)e^{-4/\Tcross}+3(q-1)(q-2)e^{-6/\Tcross}\nonumber \\
  +~(q-1)(q-2)(q-3)e^{-6/\Tcross}=1. \label{eq:Tc1}
\end{eqnarray}
In the limit of large $q$, this equation simplifies to
\begin{equation}
  q^3e^{-6/\Tcross}\simeq 1,
\end{equation}
which yields our estimate for the crossover temperature
\begin{equation}
  \Tcross\simeq \frac{2}{\log q}\label{eq:Tcrossl3}
\end{equation}
for the $l=3$ clique system.

If we generalize to arbitrary clique size $l$, the corresponding partition
function for a single clique becomes 
\begin{eqnarray}
  \mathbf{Z}_{l} & = & qe^{2\beta {l\choose 2}}+l q(q-1)e^{2\beta {l-1\choose 2}}\nonumber \\
        &   & +~\frac{l(l-1)}{2}q(q-1)(q-2)e^{2\beta{l-2\choose 2}}\nonumber \\
        &   & +~\cdots + q(q-1)(q-2)\cdots (q-l).
  \label{eq:Z-general}
\end{eqnarray}
Again, the first term in \eqnref{eq:Z-general} is the Boltzmann weight
of the optimal clique partition, and the other terms sum the weights
of the incorrect partitions.  $\omega_l$ is
\begin{equation}
  \omega_l = \frac{\sum_{k=1}^{l}{l\choose k}{q\choose k+1}(k+1)!e^{2\beta{l-k\choose 2}}}
             {qe^{2\beta{l\choose 2}}},
\end{equation}
and $\omega_l=1$ returns the cross-over temperature $\Tcross$ for
arbitrary cliques of size $l$.
We summarize the crossover temperature relations in column one of
Table \ref{tab:tableqandl} where we express $e^{2/\Tcross}$ in terms
of powers of $q$ for several values of $l$. The general relation is
\begin{equation}
  \Tcross\simeq \frac{l-1}{\log q}.
  \label{eq:crossoverT1}
\end{equation}

\begin{table}
\begin{tabular}{|c|c|c|c|}
  \hline
  $l$    & $e^{2/\Tcross}$ & $e^{2/\Tcrossq}$ & $e^{2/\Tcrossp}$           \\\hline
  $2$    & $q^2$           & $q$              & $pq^2$                     \\\hline
  $3$    & $q$             & $q^{2/3}$        & $p^{1/3}q$                 \\\hline
  $4$    & $q^{2/3}$       & $q^{1/2}$        & $p^{1/6}q^{2/3}$           \\\hline
  $5$    & $q^{1/2}$       & $q^{2/5}$        & $p^{1/10}q^{1/2}$          \\\hline
  $6$    & $q^{2/5}$       & $q^{1/3}$        & $p^{1/15}q^{2/5}$          \\\hline
  \vdots & \vdots          & \vdots           & \vdots                     \\\hline
  $l$    & $q^{2/(l-1)}$   & $q^{2/l}$        & $p^{2/(l^2-l)}q^{2/(l-1)}$ \\\hline
\end{tabular}
\caption{In column one, the crossover temperature $\Tcross$ from an
``ordered'' to a ``disordered'' state is determined by defining the ratio
$\omega_l=1$ of the sum of Boltzmann weights of sub-optimal node assignments
to the weight of the optimal assignment into clique communities as a function
of the cluster size $l$ and the number of communities $q$ in the large $q$ limit.
In column two, we estimate $\Tcrossq\simeq\Tcross$ through different means
by calculating the probability $p=1/q$ that two nodes (in the same clique ideally)
are determined to be in the same cluster.
In the last column, we generalize column two for an arbitrary probability $p$.}
\label{tab:tableqandl}
\end{table}

\subsection{Symmetry Breaking}\label{sec:largeqexpansion2}

We can inquire about the crossover temperature $\Tcross$ from
another perspective. Take two nodes $i$ and $j$ in the same clique.
If the probability that a solution assigns them to the same
community is high, then the system is in the ``ordered'' state. If
this probability is $1/q$, the system is in its ``disordered''
phase. We can define a crossover temperature $\Tcrossq$ at which the
probability of node $i$ and $j$ being in the same cluster exceeds
$1/q$ and thus symmetry between Potts spins is broken. This
probability $P(\sigma_i=\sigma_j) =
\langle\delta_{\sigma_i,\sigma_j}\rangle$ is
\begin{equation}
  P(\sigma_i=\sigma_j) = \frac{\mbox{Tr}_{\sigma_i}\delta_{\sigma_i,\sigma_j}
                         e^{-\beta H}}{\mbox{Tr}_{\sigma}e^{-\beta
                         H}},
  \label{eq:p0}
\end{equation}
where $\sigma_i$ and $\sigma_j$ denote the cluster memberships for nodes $i$ and $j$,
respectively.
Expressing the numerator and in terms of $l$ and $q$, \eqnref{eq:p0} becomes,
\begin{eqnarray}
  P(\sigma_i=\sigma_j) & = & \Big\{qe^{2\beta{l\choose 2}}
                            +(l-2)q(q-1)e^{2\beta{l-1 \choose2}}         \nonumber\\
                       &   & +~\cdots +q(q-1)\cdots (q-l-2)\Big\}                 \nonumber\\
                       &   & \Big/\Big\{qe^{2\beta {l\choose2}}+lq(q-1)
                             e^{2\beta{l-1\choose 2}}                    \nonumber\\
                       &   & +~\cdots +q(q-1)(q-2)\cdots (q-l)\Big\}.
  \label{eq:p1}
\end{eqnarray}
In the limit of large $q$, \eqnref{eq:p1} simplifies to
\begin{equation}
  P(\sigma_i=\sigma_j) \simeq \frac{qe^{2\beta{l\choose 2}}+\sum_{k=1}^{l-2}
                                   {l-2\choose k}q^{k+1}e^{2\beta{l-k\choose2}}}
                              {qe^{2\beta{l\choose 2}}+\sum_{k=1}^{l}{l\choose k}
                                  q^{k+1}e^{2\beta{l-k\choose 2}}}.
  \label{eq:p1s}
\end{equation}
Choosing $P(\sigma_i,\sigma_j)=1/q$ yields in a crossover
temperature $\Tcrossq$ at which the system goes from being unbroken
q-state symmetry to ordered. When $l=3$, \eqnref{eq:p1s} becomes,
\begin{equation}
  q^2e^{6\beta}+q^3e^{2\beta} \simeq qe^{6\beta}+3q^2e^{2\beta}+3q^3+q^4.
  \label{eq:p1l3}
\end{equation}
In the large $q$ limit, $e^{2\beta}\simeq q^{2/3}$, and the crossover temperature
is $\Tcrossq=3/\log q$.
The asymptotic expressions for several values of $q$ and $l$ are summarized
in column two of Table \ref{tab:tableqandl}.
For general $q$ and $l$, the relation is
\begin{equation}
  \Tcrossq\simeq \frac{l}{\log q}.
  \label{eq:crossoverT2}
\end{equation}

For a general crossover probability $P(\sigma_i,\sigma_j)=p$ with $l=3$,
the crossover temperature $\Tcrossp$ is determined by solving
\begin{equation}
  e^{6\beta}+qe^{2\beta}\simeq pe^{6\beta}+3qpe^{2\beta}+3q^2p+q^3p.
  \label{eq:p1l3_2}
\end{equation}
In the large $q$ limit, \eqnref{eq:p1l3_2} is $e^{2\beta}\simeq p^{1/3}q$,
where $\Tcrossp\simeq 2/(\log q+1/3\log p)$.
Results for $\Tcrossp$ for several values of $q$ and $l$ are
shown column three of Table \ref{tab:tableqandl}.
For general $q$ and $l$, the relation is
\begin{equation}
  \Tcrossp\simeq\frac{1}{p^{1/l}q}.
\end{equation}

\begin{figure}[t]
\begin{center}
\includegraphics[width=0.9\linewidth]{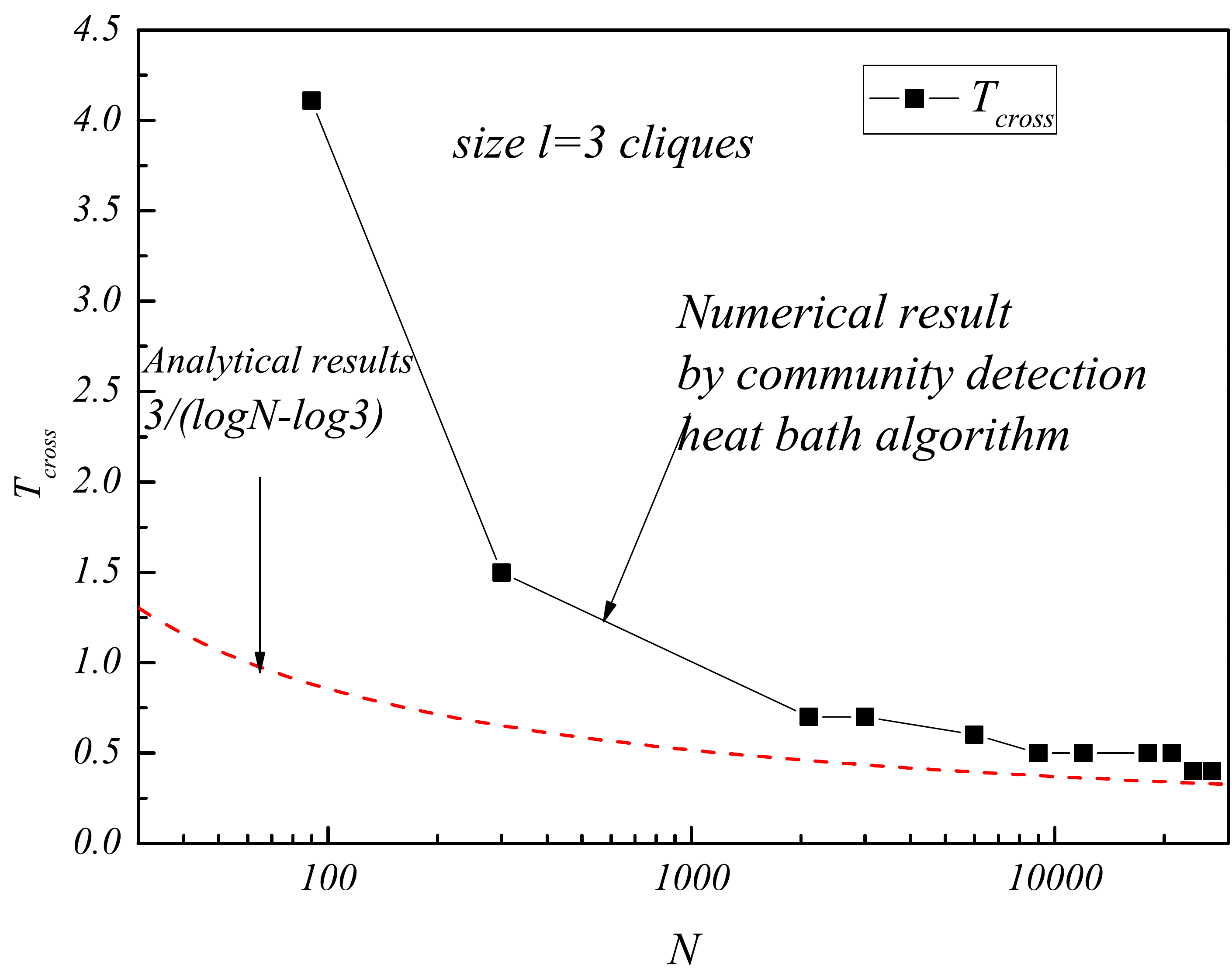}
\end{center}
\caption{(Color online) The crossover temperature at which the system cannot be perfectly
solved as the function of the system size $N$.
The data here uses cliques of size $l=3$.
The dashed line is the analytical result and the solid line is determined
by the heat bath community detection algorithm optimizing the Hamiltonian
of \eqnref{eq:ourpotts}.}
\label{fig:TcN}
\end{figure}

\subsection{Simulated crossover temperature}

We can also simulate the crossover temperature $\Tcross$ or $\Tcrossp$
as a function of system size $N$ by solving the non-interacting clique problem
using our heat bath community detection algorithm (see \Appref{app:HBA}).
As seen in \figref{fig:TcN}, the simulated and analytic asymptotic behaviors
agree well in the large $N$ limit, so the crossover temperature for this trivial
system is $\Tcross = 0$.

The crossover temperature derived in this section deals with a
\emph{heat-induced} disorder. That is, it marks the onset of a
``liquid'' phase that transitions at a lower heat bath temperature
as the system size grows. In practice, one uses a SA algorithm that
applies a cooling scheme (as opposed to a constant temperature HBA)
to improve the attempt at locating the ground state of the system.
That is, it applies a high temperature exploration of the general
landscape finished by low temperature ``fine tuning'' of the solution.
For the non-interacting cliques in this section, SA would obviously still
identify the ground state because the energetic fluctuations would trivially
diminish as the system is cooled toward $T=0$.

With increasing $p_{out}$ at low $T$, disorder imposed by the
glass-type transition is induced by the complexity of the energy
landscape,
but the transition is qualitatively comparable in the sense of the
induced disorder in the solutions found by the HBA. The glass phase
also experiences a transition to a liquid-like disordered state at a
temperate that increases slowly with the level of noise, but here, a
SA solver will not necessarily transition readily to the ideal
solution as the system is cooled because of the inherent complexity
of the energy landscape. The greedy algorithm used in
\cite{ref:rzlocal} (equivalent to the HBA at $T=0$) applied to the
Potts model of \eqnref{eq:ourpotts} is already very accurate
\cite{ref:rzmultires,ref:rzlocal,ref:lancLFRcompare}, so we expect
that the greatest benefit of SA over a greedy-oriented solver using
\eqnref{eq:ourpotts} will manifest in the hard region near the onset
of the ``glassy'' transition.

\begin{figure}
\centering
\subfigure[\ Original]{\includegraphics[width=0.45\columnwidth]{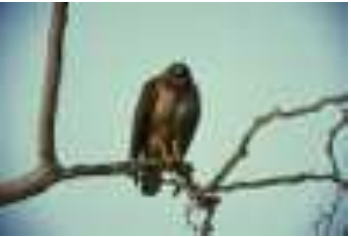}}
\subfigure[\ Easy]{\includegraphics[width=0.45\columnwidth]{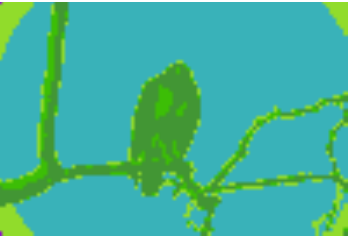}}
\subfigure[\ Hard]{\includegraphics[width=0.45\columnwidth]{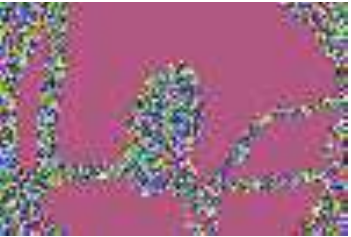}}
\subfigure[\ Unsolvable]{\includegraphics[width=0.45\columnwidth]{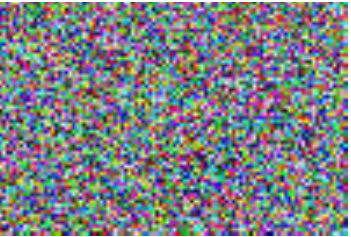}}
\caption{(Reproduced from Ref.\ \cite{ref:HRNimages})
We show an image where we apply our community detection algorithm
to detect the relevant structures.
This case seeks to identify a bird and tree against a sky background.
The original images is in panel (a), and the segmentation results are shown
in panels (b--d) corresponding to the easy, hard, and unsolvable regions
of the community detection problem, respectively.
\Figref{fig:birdplot} shows the phase diagram identifying these respective regions.}
\label{fig:birdimages}
\end{figure}

\begin{figure}
\centering
\includegraphics[width=0.9\columnwidth]{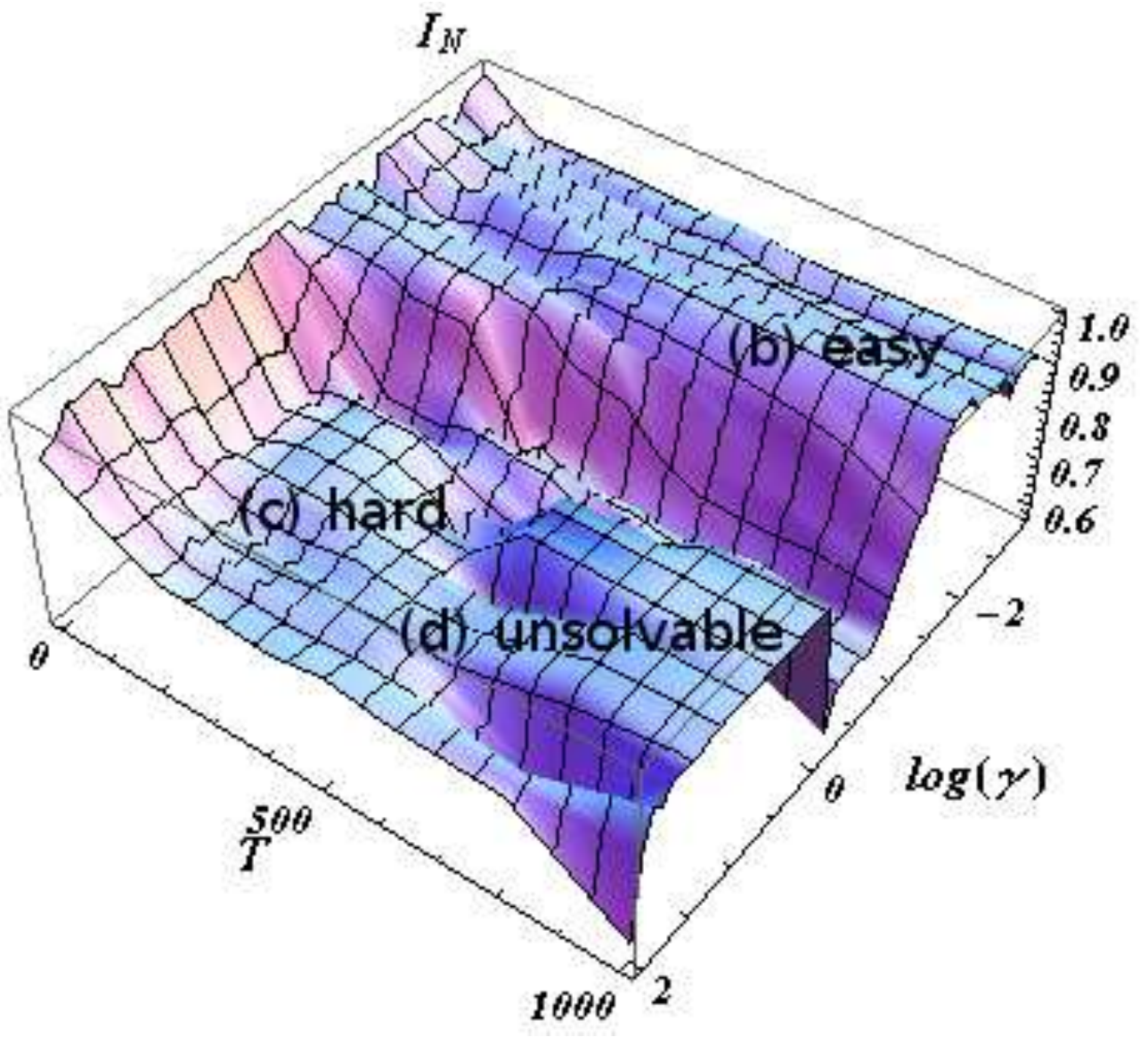}
\caption{(Reproduced from Ref.\ \cite{ref:HRNimages})
We show a three-dimensional phase diagram of NMI ($I_N$) versus $\log(\gamma)$
and $\log(T)$ for the image segmentation of the bird in \figref{fig:birdimages}.
$T$ is heat bath temperature for a stochastic community detection solver
(see \Appref{app:HBA}), and $\gamma$ is the model weight in \eqnref{eq:ourpotts}.
We note that the optimal values in the easy and hard regions correspond to
the ``physical'' segmentations of the bird and tree against the background,
but the bird is undetectable in the unsolvable region.}
\label{fig:birdplot}
\end{figure}

\subsection{A discussion of the crossover temperature} \label{sec:crossoverdisc}

For a spin system with fixed size $N$, a larger number of spin states $q$
corresponds to a more disordered system.
If we expand the partition function of the Potts model in terms of $1/q$,
it is explicitly represented as a sum over configurations with progressively
larger clusters of identical spins \cite{ref:kasteleyn}.
That is, two spins with the same index $\sigma_i=\sigma_j$ are connected.
Then three spins $\sigma_i=\sigma_j=\sigma_k$ are connected, etc.
The resulting terms illustrate that increasing $q$ emulates increasing
temperature $T$.

Our analysis in this section applies to general graphs with ferromagnetic interactions
(equivalent to the ``label propagation'' community detection algorithm \cite{ref:LPA})
on regular, fixed-coordinate lattices \cite{ref:mercaldo,ref:preissmann,ref:juhasz}.
Increasing the number of system states $q$ causes the system to be increasingly
disordered.
Thus, in the community detection problem, increasing number of communities $q$
linearly with the system size $N$ (such that the average community size remains
constant), the solvable (easy) phase shrinks to a ``small'' region as $N\to\infty$.

Figures \ref{fig:EHqAll}(m--p) illustrate the distinction in the different
regions or types of disorder: entropic (high complexity) and energetic (high $T$).
Interestingly, in some cases, additional noise emulates a higher temperature
solution process in the sense that it provides additional avenues to explore
different configurations.
Such an effect may occur in \subfigref{fig:EHqAll}{a-d} where the accuracy [$I_N$
in panel (b) \emph{increases} for a short time with increasing noise $p_{out}$].

\subfigref{fig:EHqAll}{n} further shows a crossover region $0.24
\lesssim p_{out}\lesssim 0.32$ where mid-range temperatures improve
the solution accuracy (higher $I_N$). Although this data uses a
constant temperature heat bath (no cooling schedule), this is the
effect of a stochastic solver (see \Appref{app:HBA}), allowing it to
navigate the difficult energy landscape more accurately than a
greedy solver. On the left (lower $T$), the more greedy nature of
the solver prevents an accurate solution in the presence of high
noise. On the right, the higher temperature of the heat bath itself
hinders an accurate solution. In effect, the HBA ``wanders'' at
energies above the meaningful, but locally complex, features of the
energy landscape resulting in more random solutions.

The results here incorporate a ``global'' model parameter $\gamma$ in \eqnref{eq:ourpotts}.
That is, the model asserts globally optimal $\gamma$(s) for the entire graph.
For large graphs, this condition is less likely to be true across the full scope
of the network, but one can explore methods to obtain locally optimal $\gamma_\ell$
(in time or space) for each region or cluster $\ell$ \cite{ref:rzLMRO}.
Utilizing locally optimal $\gamma_\ell$s will likely work to circumvent the temperature
transition at low levels of noise.
The successful selection of a local $\gamma_\ell$ in the glassy (high noise) region
is more difficult because of the complex nature of the local energy landscape.

In the following section, we study the free energy of several systems for ferromagnetic
Potts models and then generalize to \emph{arbitrary} weighted Potts models, including
antiferromagnetic interactions, on \emph{arbitrary} graphs \cite{ref:rhzglobaldisorder}.

\section{An example of a Phase transition in an image segmentation problem} \label{sec:birdexample}

We illustrate the phase transition effect with an realistic image segmentation
example \cite{ref:HRNimages}.
In \figref{fig:birdimages}, we apply our community detection algorithm to
detect a bird and tree against a sky background.
We display the results in \figref{fig:birdplot} where we plot NMI ($I_N$)
versus $\log(\gamma)$ in \eqnref{eq:ourpotts} and $\log(T)$ where $T$ is
the temperature for our stochastic community detection solver (see \Appref{app:HBA}).
For this problem, we apply edge weights by replacing the $A_{ij}$ elements
in \eqnref{eq:ourpotts} with ``attractive'' and ``repulsive'' weights $w_{ij}$
which are defined by regional intensity differences within the image
\cite{ref:HRNimages}.

We label the easy (b), hard (c), and unsolvable (d) regions in the phase plot
for the bird image in panel (a).
Panel (b) shows that our algorithm clearly detects the bird and tree against the
background, meaning that the NMI information measure identifies the physically
relevant clusters in the problem.  In panel (c), the background is segmented
separately, but the bird and tree are composed of many small clusters.
Panel (d) shows that the bird is undetectable in the unsolvable region.

\section{free energy: Simple results} \label{sec:f-analytic}

In the following analysis, we explicitly show the large $q$ and large $T$ expansions
for the free energy per site in three example systems (a non-interacting clique system,
simple interacting clique system, and a random graph) before generalizing the analysis
to arbitrary unweighted and weighted graphs.
Previous works examined disorder transitions for random-bond Potts models
\cite{ref:juhaszrbPMlargeq,ref:mercaldorbPMlargeq} and Ref.\ \cite{ref:changlargeqzeroes}
studied zeros of the partition function in the large $q$ limit.
Large $q$ behavior was shown to approach mean-field theoretical results
on fixed lattices \cite{ref:pottslargeqmean,ref:pearcelargeqmean}.
For the unweighted systems, we use a binary distribution for the interaction strength
$J=1$ or $0$ (\ie{}, the energy contribution of an edge is either ``on'' or ``off'').

\subsection{Free energy of a non-interacting clique system under a large $q$ expansion} \label{sec:f-q}

If we generalize the non-interacting clique system in \figref{fig:cliqueexample} to cliques
of size $l$, the partition function is
\begin{eqnarray}
  \mathbf{Z} & = & \bigg[ qe^{\beta J {l\choose 2}}+lq(q-1)e^{\beta J {l-1\choose 2}} \nonumber\\
    &   &   +~\frac{l(l-1)}{2}q(q-1)(q-2)e^{\beta J {l-2\choose 2}} \nonumber\\
    &   &   +~\cdots + q(q-1)(q-2)\cdots(q-l) \bigg]^q.
\end{eqnarray}
When $q\to\infty$,
\begin{eqnarray}
  \mathbf{Z} & \approx & \bigg[ qe^{\beta J {l\choose 2}}+lq^2e^{\beta J{l-1\choose 2}}
                  + \frac{l(l-1)}{2}q^3e^{\beta J {l-2\choose 2}} \nonumber\\
    &         & +~\cdots  + q^{l+1} \bigg]^q.
\end{eqnarray}
The free energy per site, $f=-\frac{k_B T}{N}\log \mathbf{Z}$, (with
the Boltzmann constant set to $k_B=1$) is
\begin{equation}
  f \approx-T\log q -T\sum_{k=0}^{l-2} \left[
                  {l-1\choose k}\frac{1}{k+1}  e^{\beta J {l-1\choose k-1}}
                \right] q^{-(k+1)}.
  \label{eq:f-q}
\end{equation}
From \eqnref{eq:f-q}, we further simply the free energy per site
\begin{eqnarray}
  f & \approx & -T\log q - T\sum_{k=0}^{l-2}a(k) e^{\beta J {l-1\choose k-1}} q^{-(k+1)} \nonumber\\
  f & \approx & -T\log q - T a(0) \frac{e^{\beta J}}{q}
  \label{eq:f-qNIfinal}
\end{eqnarray}
where
 $ a(k)={l-1\choose k}\frac{1}{k+1}$.
We will compare \eqnref{eq:f-qNIfinal} with the high $T$ expansion
in the next section. Despite the functional dependence of
$\exp(\beta J)$, the large $q$ limit dominates the expansion,
forcing the system to be approximately equivalent to a large
temperature limit.

\subsection{Free energy of a non-interacting clique system as
ascertained from a high temperature expansion} \label{sec:f-T}

Note that the \emph{most} ordered Potts graph is a system of non-interacting
cliques (maximally connected sub-graphs).
That is, the presence of noise (extraneous intercommunity edges) will only
serve to \emph{increase} the overall disorder in the system.
One exception is that increased disorder can emulate increased temperature
$T$ for both greedy and stochastic community detection solvers
(see also \secref{sec:crossoverdisc}).

We can construct the high $T$ expansion easily by means of Tutte polynomials
\cite{ref:pottstutte} (see \Appref{app:tuttepolyUW}) where we again solve a
system of $q$ cliques of size $l$.
\Eqnref{eq:ourpotts} and a ferromagnetic Potts model have the same
\emph{ground state} energy for this clique system (see also Secs.\
\ref{sec:f-arbitraryZ}, \ref{sec:f-wPottsT}, and \ref{sec:f-wPottsq}
for more general derivations), so the partition function in terms
of the Tutte polynomial $t(G;x,y)$ for a graph $G$ is
\begin{equation}
  \mathbf{Z} = q^{k(G)}v^{|V|-k(G)}t(G;x,y)
  \label{eq:ZarbitraryG}
\end{equation}
where $q$ is the number of clusters or states,
$v=\exp(\beta J)-1$, $G$ denotes the graph, $k(G)$ is the number of connected
components in $G$, $|V|=N$ is the number of vertices, $x=(q+v)/v$ and $y=v+1$.
For the non-interacting clique system, $k(G)=q$ and $N=lq$.
We denote the Tutte polynomial of a single clique of size $l$ as $K_l(G;x,y)$.

$K_2(G;x,y) = x$, so the partition function is
\begin{eqnarray}
  \mathbf{Z} & = & q^{q} v^{q} x^q , \nonumber\\
  \mathbf{Z} & = & q^{q} v^{q} \left( \frac{q}{v} + 1\right)^q ,
  \label{eq:ZNICfinalHighTl_two}
\end{eqnarray}
where we used $N=2q$.
In a high $T$ approximation, $x\approx q/v\gg 1$, so the partition
function becomes $Z\approx q^{2q}$, and the free energy is
\begin{equation}
  f \approx -T \log q,
  \label{eq:fNICaltl_two}
\end{equation}
which simply states that the system is completely random
in the large $T$ limit.

For triangle cliques, $K_3(G;x,y)=x^2+x+y$.
The graph $G$ is composed of disjoint triangles, so the Tutte
polynomial is $t(G;x,y)=\left(x^2+x+y\right)^{q}$, and
the partition function becomes
\begin{equation}
  \mathbf{Z} \approx q^q v^{2q} \left( x^2+x+y \right)^q.
  \label{eq:ZNICfinal}
\end{equation}
In a high $T$ approximation $y\approx 1$, but $x\approx q/v\gg 1$ in
either the large $q$ or large $T$ limits, so we make a further
approximation of $y\approx 0$. Then, $K_3(G;x,y=0) = x^{q}
(x+1)^{q}\approx x^{2q}$. The partition function simplifies to
$\mathbf{Z}\approx q^{3q}$, so the free energy per site for $l=3$ is
again
\begin{equation}
  f \approx -T \log q
  \label{eq:fNICalt}
\end{equation}
which is identical to the $l=2$ result because we consistently applied the
approximation $q/v \gg 1$ to $x=(q/v+1)\approx q/v$ and $(x+1)=(q/v+2)\approx q/v$.

Generalizing to an arbitrary clique size $l$ in the high $T$ approximation,
the Tutte polynomial
$K_l(G;x,y=0)$ is
\begin{equation}
  \Kl{l}(G;x,y=0) = \Pochhammer{x+l-1}{x},
  \label{eq:cliquetofKhighT}
\end{equation}
The partition function is
\begin{equation}
  \mathbf{Z}\approx q^{lq} \left( \frac{v}{q}\right)^{(l-1)q}
              \Pochhammer{\frac{q}{v}+l-1}{\frac{q}{v}},
  \label{eq:ZNICalt}
\end{equation}
and $v=e^{\beta J} - 1 \approx \beta J$, so the free energy per site yields
\begin{equation}
  f \approx -T\log q - \frac{l-1}{l}T \log\left( \frac{\beta J}{q} \right)
        - \frac{T}{lq}\log\left[
              \Pochhammer{\frac{q}{\beta J} + l -1}{\frac{q}{\beta J}}
          \right].
  \label{eq:fNIClalt}
\end{equation}
The leading $\log q$ term represents the infinite $T$ limit which is
approximately constant in large systems for any clique size $l$.
That is, the partition function $\mathbf{Z}_{T\to\infty}\approx q^N$
for every system. The $l=2$ and $3$ results above illustrate that
when $l\ll q$, the ratio of gamma functions in \eqnref{eq:fNIClalt}
simplifies to $x^{lq}$, and the free energy for the non-interacting
clique system is approximately $f \approx \log q$
in the large $T$ limit. 

The second term in \eqnref{eq:fNIClalt} gives the leading order correction
for high $T$.
It is absent in the explicit $l=2$ and $3$ results above because we applied
the approximation $q/v \gg 1$.
Together, the last two terms imply that increasing the temperature $T$
(decreasing $\beta$) emulates increasing the number of communities $q$
for a ferromagnetic Potts model.

\begin{figure}[t]
\begin{center}
\includegraphics[width=\subfigwidthnew]{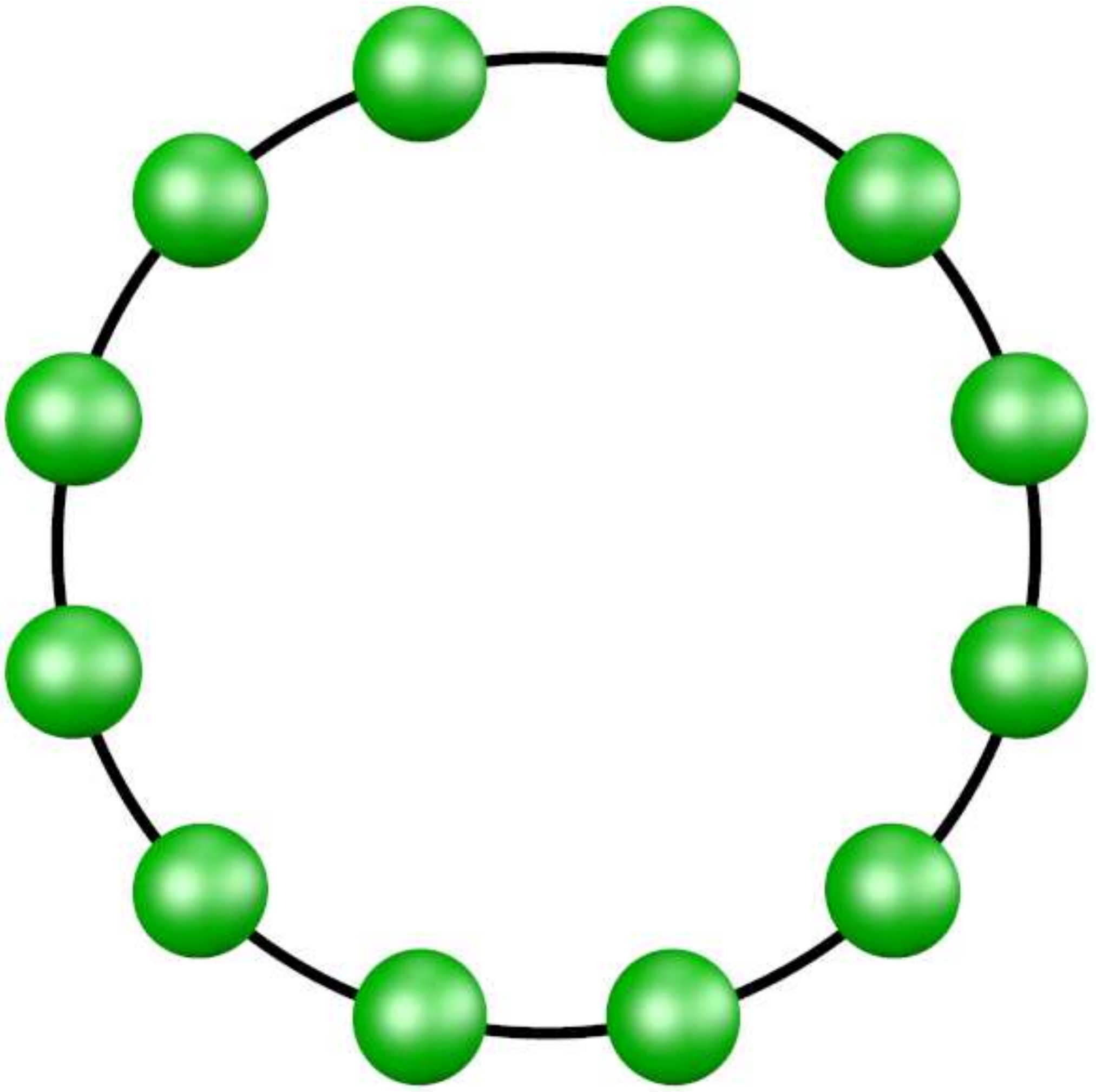}
\end{center}
\caption{(Color online) A depiction of a circle of cliques (maximally connected clusters)
of size $l$ connected by single edges.
In contrast to \figref{fig:cliqueexample}, this system adds a simple interaction
between cliques. We analyze the configuration in \secref{sec:f-qandT1} and show
that a ferromagnetic Potts model behaves the same in the large $q$ and large $T$
limits.}
\label{fig:cliquecircle}
\end{figure}

\subsection{Free energy for the ``circle of cliques'' in the high $q$
or the high $T$ expansion} \label{sec:f-qandT1}

We now investigate the slightly more complicated system depicted
in \figref{fig:cliquecircle}: a ``circle of cliques''  where each
complete sub-graph cluster is connected to its neighbors by a single edge.
We construct $q$ cliques of size $l=3$ and apply the Tutte polynomial
method \cite{ref:pottstutte} to solve the system.
As in the previous sub-section, the ground state of \eqnref{eq:ourpotts}
and a ferromagnetic Potts model have the same energy, so we use a ferromagnetic
model.
In terms of the Tutte polynomial $t(G;x,y)$ for a graph $G$, the partition
function is given by \eqnref{eq:ZarbitraryG}.

\Eqnref{eq:Cqexpanded} in \Appref{app:cliquecircle} derives the exact Tutte
polynomial for \figref{fig:cliquecircle} with $l=3$,
and \eqnref{eq:CqexpandedhighTmore}
gives the high $T$ expansion $t(G;x,y=0)=\left(1+x\right)^{q+1} x^{2q-3}$.
%
Substituting $N=3q$ and the approximation $x\approx q/v$ (in either
the large $q$ or large $T$ limits), the partition function becomes
\begin{equation}
  \mathbf{Z} \approx q^{2q-2}v^{q+2} \left(1+x\right)^{q+1} x^{2q-3}
  \label{eq:Z3loop3}
\end{equation}
We factor out $q^{3q}$, and then apply the approximations:
$v=\exp(\beta J)-1\approx \beta J$, $x\approx q/v \approx q/(\beta J)\gg 1$,
and $q\gg 1$.
The free energy per site is then
\begin{equation}
  f \approx -T\log q - \frac{2T}{3} \log\left( \frac{q}{\beta J} \right)
  \label{eq:Z3loop7}
\end{equation}
As in the previous sub-section, the leading $\log q$ term represents
the infinite $T$ limit.
\Eqnref{eq:Z3loop7} affirms the implication of \eqnref{eq:fNIClalt}
regarding the corresponding behavior of large $q$ or $T$.
Specifically, increasing the temperature (decreasing $\beta$) emulates
increasing the number of communities $q$ for a ferromagnetic Potts model.

\subsection{Free energy of a random graph in a large $q$ or a large $T$ expansion}
\label{sec:f-qandTrandom}

\begin{figure}[t]
\begin{center}
\includegraphics[width=\subfigwidthnew]{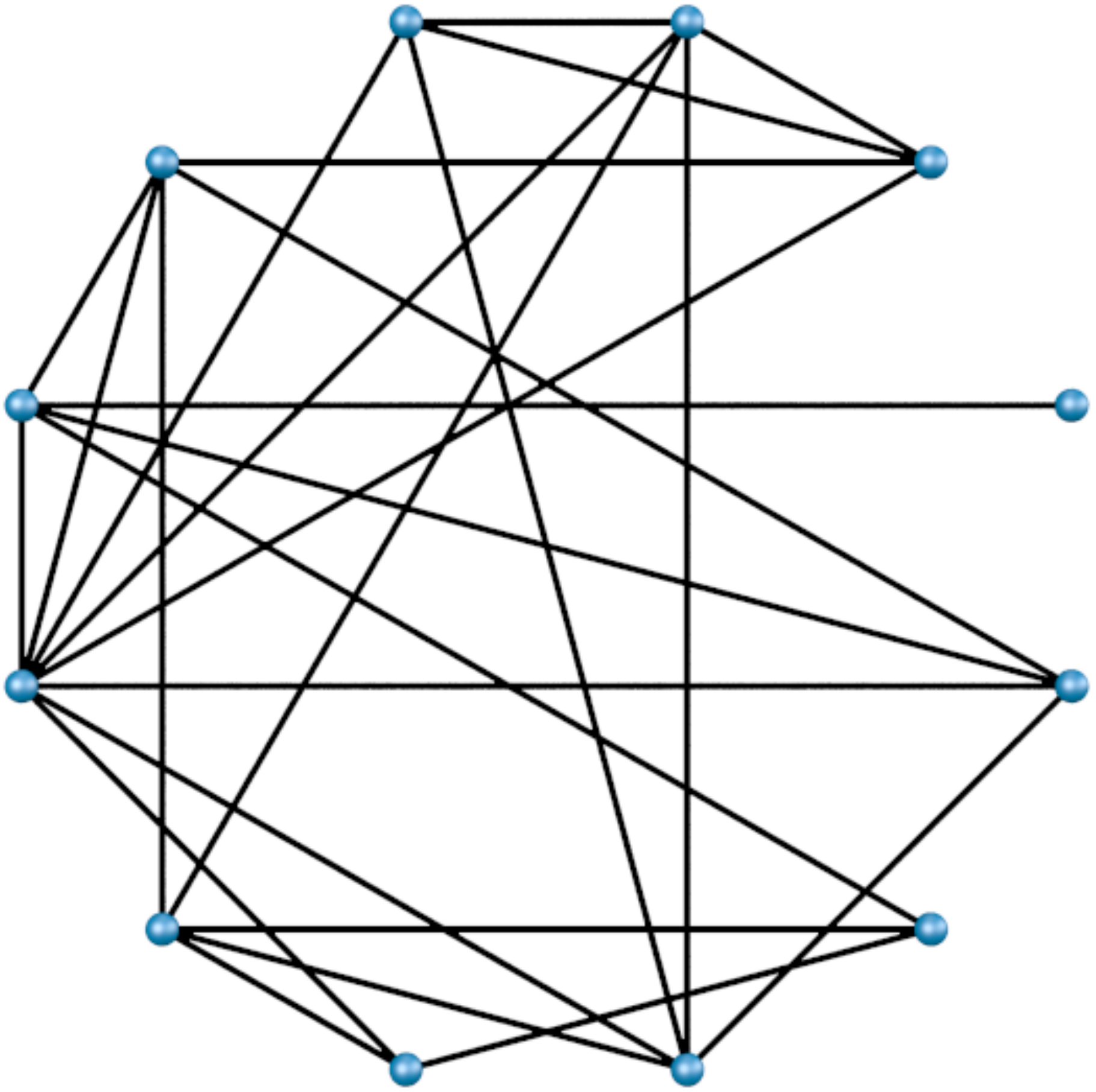}
\end{center}
\caption{(Color online) A sample depiction of a random graph with $N$ nodes.
In \secref{sec:f-qandTrandom}, we analyze such a system by randomly removing
edges from a clique configuration of $N$ nodes under the assumption that we
maintain a connected graph.
We show that a ferromagnetic Potts model on a random graph behaves the same
in the large $q$ and large $T$ limits.}
\label{fig:randomgraph}
\end{figure}

We apply the Tutte polynomial method of \Appref{app:tuttepolyUW} to determine
the high $T$ and high $q$ partition function for a random graph.
For calculation purposes, we begin with a complete graph of size $N$.
Then we randomly remove edges to construct a random graph such that any two
nodes are connected by and edge with a probability $p$.
The derivation repeatedly applies lemma \ref{lem:lemmaone} stated in
\Appref{app:tuttepolyUW}.

We denote the Tutte polynomial of a complete graph (clique) of size $l$
as $\Kl{l}$.
$t(G)$ for a clique with $d$ duplicated edges (multiply defined edges between
two nodes) or loops (self-edges) is defined as $\Kld{l}{d}$.
For economy of notation, we also define $\Klm{l}{m}$ as the Tutte
polynomial of a graph with $m$ \emph{missing edges} (\ie{}, not a clique).
Note that $\Kld{l}{0}\equiv\Klm{l}{0}\equiv\Kl{l}$.
For the following derivation, we work under the \emph{assumption} that
when we delete or contract any edge, the random graph remains connected.


Under the high temperature $T$ or high number of clusters $q$ approximations,
$y\ll x$ and $y\simeq 0$.
\Eqnref{eq:cliquetofKhighT} gives the exact expression of the Tutte polynomial
$K_l(G;x,y=0)$ for a clique at $y=0$.
If we cut one edge from the complete graph $\Kl{N}$, we obtain the recursion
formula
\begin{eqnarray}
  \Kl{N} & = & \Klm{N}{1} + \Kld{N-1}{N-1}, \nonumber\\
  \Kl{N} & = & \Klm{N}{1} + \Kl{N-1}.  \label{eq:KlNone}
\end{eqnarray}
where we applied lemma \ref{lem:lemmaone} to obtain \eqnref{eq:KlNone}.
From henceforth, we assume the application of lemma \ref{lem:lemmaone}.
We are interested in the graph with missing edges, so we solve \eqnref{eq:KlNone}
for $\Klm{N}{1}$.
\begin{equation}
  \Klm{N}{1} = \Kl{N} - \Kl{N-1}.
  \label{eq:Klmone}
\end{equation}
Note that the reduced graph is represented as a summation over complete graphs.

Now we apply the Tutte recursion formula to \emph{both sides} of \eqnref{eq:Klmone}.
\begin{equation}
  \Klm{N}{2} + \Klm{N-1}{1} = \Klm{N}{1} + \Kl{N-1} - \Klm{N}{1} -
  \Kl{N-1}.
  \label{eq:Klmtwoworking}
\end{equation}
We can choose the deleted and contracted edges in the corresponding terms to
be identical because the resulting Tutte polynomial is in general
independent of the operation order.
After collecting terms and substituting the previous $\Klm{N}{1}$ result,
we solve for $\Klm{N}{2}$ to obtain
\begin{equation}
  \Klm{N}{2} = \Kl{N} - 2\Kl{N-1} + \Kl{N-2},
  \label{eq:Klmtwo}
\end{equation}
for this particular random graph.
Again, the right-hand-side of \eqnref{eq:Klmtwo} is a summation over complete
graphs.  This recursive relation for $\Klm{N}{k}$ continues until we obtain
\begin{equation}
  \Klm{N}{k} = \sum_{i=0}^k (-1)^i {k\choose i} \Kl{N-i}.
  \label{eq:KlmRR}
\end{equation}
We insert this into \eqnref{eq:cliquetofKhighT} with the pre-factor $qv^{N-1}$
to generate the partition function at high $T$
\begin{equation}
  \mathbf{Z} = q^N \left(\frac{v}{q}\right)^{N-1}
      \left[ \sum_{i=0}^{k} (-1)^i {N\choose i} \Pochhammer{x+N-i-1}{x}
      \right].
  \label{eq:ZrandomHighT}
\end{equation}
We substitute $x=(q+v)/v\approx q/v$ when $v\ll q$ (high $T$ or high $q$
approximations) and again utilize $v=e^{\beta J}-1\approx \beta J$ in the
high $T$ approximation to obtain the free energy per site
\begin{eqnarray}
  f & = &  -T\log q - \frac{N-1}{N} T\log\left(\frac{\beta J}{q}\right) \nonumber\\
    &   &   - \frac{T}{N}\log \left[ \sum_{i=0}^{k} (-1)^i {N\choose i}
            \Pochhammer{\frac{q}{\beta J}+N-i-1}{\frac{q}{\beta J}} \right].~~~~
  \label{eq:randomf}
\end{eqnarray}
Note that the first two terms become $\log(q)/N \log(\beta J)$ as $N\to\infty$.
From \eqnref{eq:randomf}, we obtain the same conclusion for this random graph
as for the previously analyzed clique systems.
While Secs.\ \ref{sec:f-q}, \ref{sec:f-T}, and \ref{sec:f-qandT1} result
in free energies with different functional forms, in each case, $q$ and $T$
have the same functional form in the arguments of the functions
in the high $T$ limit.

\subsection{Free energy of an arbitrary graph $G$ in the large $T$ expansion}
\label{sec:f-arbitraryZ}

We can construct the explicit high $T$ expansion for an arbitrary (unweighted)
graph $G$ by means of the Tutte polynomial method \cite{ref:pottstutte}.
Factoring out $q^N$ and substituting $|V|=N$, $x=q/v + 1$, and $y=v+1$
in \eqnref{eq:ZarbitraryG}, we write a trivially modified form
of the partition function
\begin{equation}
  \mathbf{Z} = q^N\left[ \left(\frac{v}{q}\right)^{N-k(G)} t\left(G;\frac{q}{v}+1, v+1\right) \right].
  \label{eq:ZarbitraryGworking}
\end{equation}
At this point, the equation is completely general, but the corresponding
behavior for temperature $T$ and number of clusters $q$ is almost apparent
in the reciprocal relationship of $q$ and $v$.

Again, $x\approx q/v$ in either the large $q$ or large $T$ limits.
In a high $T$ approximation, $v\approx \beta J = T/J$ and $y\approx 0$ or $1$
($y=0$ is a common approximation since $x\gg y$ in the same limit).
\begin{equation}
  \mathbf{Z} \approx q^N\left[ \left(\frac{J}{qT}\right)^{N-k(G)}
        t\left(G;\frac{qT}{J}, y_{\phantom{}_{T'}}\right) \right]
  \label{eq:ZarbitraryGHighT}
\end{equation}
where $y_{\phantom{}_{T'}} = 0$ or $1$.
The free energy per site is then
\begin{equation}
  f \approx -T\log q -\frac{N-k(G)}{N}T\log\left(\frac{J}{qT}\right)
      - \frac{T}{N}\log\left[ t\left( \frac{qT}{J} , y_{\phantom{}_{T'}} \right) \right].
  \label{eq:f-TarbitaryG}
\end{equation}
The leading $\log q$ term appears in our previous calculations.
Again, it represents the infinite $T$ limit for an arbitrary system
which is approximately constant in large systems.

From the perspective of increasing $q$, the similarity to the large $T$
behavior is more apparent if we fix the temperature $T=T'$ and define
an effective interaction constant $J_q\equiv e^{J/T'}-1$.
We then rewrite \eqnref{eq:f-TarbitaryG} as
\begin{equation}
  f \approx -T\log q -\frac{N-k(G)}{N}T\log\left(\frac{J_q}{q}\right)
      - \frac{T}{N}\log\left[ t\left( \frac{q}{J_q} , y_q \right) \right].
  \label{eq:f-JqarbitraryG}
\end{equation}
where $y_q\equiv e^{J/T'}$ is a constant.
When $N\to\infty$ and $k(G)\ll N$, the first two terms become
$T\log(q)/N \log(\beta J)$.
Comparing \eqnsref{eq:f-TarbitaryG}{eq:f-JqarbitraryG} shows the close
correspondence between increasing $q$ (at fixed $T'$) and increasing $T$.
$J_q$ grows exponentially faster than $q$ with decreasing $T'$,
so a finite (perhaps small) stable or solvable region is likely
except in the presence of high noise.

\subsection{Annealed versus quenched averages} \label{sec:f-annealedquenched}

The above proofs apply to quenched averages because the binary distribution
is constant with respect to the distribution integration.  That is, using
\eqnref{eq:f-JqarbitraryG}, we assume a probability distribution
$P\left(\{J_{ij}\}\right)$ and integrate over it to obtain the quenched
average free energy per site
\begin{eqnarray}
  f\left[ \{J_{ij}\} \right] & = & \int DJ_{ij} \prod_{i\neq j} P\left(\{J_{ij}\}\right)
          \Bigg\{ \log q \nonumber\\
        & &  +\frac{N-k(G)}{N} \log\left(\frac{J}{qT}\right) \nonumber\\
        & &   + \frac{1}{N}\log\left[ t\left( \frac{qT}{J} , y_{\phantom{}_{T'}} \right) \right] \Bigg\},
  \label{eq:f-Jquenchedavg}
\end{eqnarray}
but the integrand ($f_0$) is a constant because $J$ is independent of $\{J_{ij}\}$,
so the integral trivially simplifies to
\begin{equation}
  f\left[ \{J_{ij}\} \right] =
         f_0 \int DJ_{ij} \prod_{i\neq j} P\left(\{J_{ij}\}\right).
\end{equation}
where the integral is unity.
In a more general model with a defined $\{J_{ij}\}$ probability distribution,
the leading order $\log q$ contribution would remain unchanged, but we would
obtain correction terms from the integration over the quenched interaction
distribution $\{J_{ij}\}$.

\subsection{Free energy of non-interacting cliques for an \emph{arbitrary}
weighted Potts model under a large $T$ expansion}  \label{sec:f-wPottsT}

We can represent an \emph{arbitrary} weighted Potts model with ferromagnetic
and antiferromagnetic interactions.
That is, we can generally write
\begin{equation}
  H(\{\sigma\}) = -\frac{1}{2}\sum_{i\neq j}\left[ a_{ij}A_{ij}
                  - b_{ij}\left(1-A_{ij}\right)\right]
                  \delta(\sigma_i,\sigma_j).
  \label{eq:genpotts}
\end{equation}
where $a_{ij}$ and $b_{ij}$ are arbitrary ``attractive'' and ``repulsive''
edge weights.
This summarization includes modularity \cite{ref:gn},
a Potts model incorporating a ``configuration null model'' (CMPM) comparison
\cite{ref:smcd} (the most common variation in \cite{ref:smcd} is effectively
generalizes modularity),
CMPM allowing antiferromagnetic relations \cite{ref:traagPRE},
``label propagation'' \cite{ref:LPA,ref:barberLPA},
an Erd{\H o}s-R{\' e}nyi Potts model \cite{ref:reichardt,ref:smcd},
a ``constant Potts model'' \cite{ref:traaglocalscope},
the weighted form of the APM \cite{ref:rzmultires,ref:rzlocal},
or a ``variable topology Potts model'' suggested in \cite{ref:rzmultires}.

Note that the repulsive weights $b_{ij}$ are important in that they provide
a ``penalty function'' which enables a well-defined ground state for the
Hamiltonian for an arbitrary graph.
That is, the ground state of a purely ferromagnetic Potts model in an arbitrary
graph is trivially a fully collapsed system (perhaps with disjoint sub-graphs).
Several of the above models incorporate a weighting factor $\gamma$ of some
type on the penalty term which allows the model to span different scales of the
network in qualitatively similar ways.

We denote a the partition function of a graph $G^*$ with $l$ nodes and weighted
edges $\{e\}$ by $Z(G^*;q,\mathbf{v})\equiv\mathcal{K}_l$.
We assume that $J_e\ll T$ for all edges $e$,
and all pairs of nodes in $G^*$ are connected by a weighted edge $J_e$
(either ferromagnetic or antiferromagnetic).
From \Appref{app:tuttepolyW}, a recurrence relation for the multivariate
Tutte polynomial of a general weighted clique is
\begin{equation}
  \mathcal{K}_l \approx \left(q + \sum_{k=1}^{l-1} v_k \right)
                         \mathcal{K}_{l-1} + O(y_e),
  \label{eq:KlcliqueWrr}
\end{equation}
The partition function for $\mathcal{K}_l$ at high $T$ is
\begin{equation}
  \mathcal{K}_l \approx q^N \prod_{j=2}^{l} \left( 1+\sum_{k=1}^{j-1} \frac{v_k}{q}
  \right),
  \label{eq:KlcliqueW}
\end{equation}
Now, we generate a graph consisting of a set of $q$ non-interacting cliques
of size $l_i$ where $i=1,2,\ldots,q$.
\begin{equation}
  \mathcal{K}_l \approx q^N \prod_{i=1}^{q} \prod_{j=2}^{l_i}
                          \left(1 + \sum_{k=1}^{j-1} \frac{\beta J_k}{q}
                          \right).
  \label{eq:KlcliquesystemW}
\end{equation}
where we used $v_e\approx\beta J_e$ at high $T$ for general edge weights $J_e$
(even if $J_e<0$ as long as $J_e\ll T$).

The free energy is
\begin{eqnarray}
  f & \approx & -T\log q - \frac{T}{N}\sum_{i=1}^{q} \sum_{j=2}^{l_i} \sum_{k=1}^{j-1}
                  \frac{\beta J_k}{q} \label{eq:farbitraryGWsssumT}         \nonumber\\
    & \approx & -T\log q - \frac{1}{N}\sum_{i=1}^{q}\frac{E_i}{q}         \nonumber\\
   & = & -T\log q - \frac{E}{qN} \label{eq:farbitraryGWET}
\end{eqnarray}
where we invoked $\log(1+x) \approx x$ for $x\ll 1$ there. $E_i$ is
the energy of cluster $i$ according to the weighted Potts model of
\eqnref{eq:genpotts}, and $E$ is the total energy of the graph.
\Eqnsref{eq:KlcliquesystemW}{eq:farbitraryGWET} both imply that
large $q$ emulates large $T$ for an \emph{arbitrary} Potts model on
a weighted graph $G$. That is, if a community detection quality
function can be expressed in terms of the general Potts model in
\eqnref{eq:genpotts}, then large $q$ and large $T$ are essentially
equivalent.

\subsection{Free energy of non-interacting cliques for an \emph{arbitrary}
weighted Potts model under a large $q$ expansion}  \label{sec:f-wPottsq}

The multivariate Tutte polynomial \cite{ref:jacksonmvtutte} (see
also \Appref{app:tuttepolyW} and Ref.\ \cite{ref:rhzglobaldisorder})
appears in a subgraph expansion over the subset of edges
$\mathcal{A}\subseteq \mathcal{E}$ in a graph $G = (V,\mathcal{E})$
with a set of $V$ vertices and $\mathcal{E}$ edges
\begin{equation}
  Z(G;q,\mathbf{v}) =
          q^N \left[ \left(1 + \sum_{e'=1}^{|\mathcal{E}|} \frac{v_{e'}}{q} \right)
           + \cdots + q^{k(G)-N} \prod_{f'=1}^{|\mathcal{E}|} v_{f'}
          \right]
  \label{eq:ZarbitraryGq}
\end{equation}
$k(\mathcal{A})$ is the number of connected components of $G_A =
(V,\mathcal{A})$ and $v_e = \exp(\beta J_e) -1$. For our purposes,
\eqnref{eq:ZarbitraryGq} serves as an alternate representation of
$Z_G$ to facilitate the calculation of the large $q$ expansion.

For large $q$, when $q^N\gg |v_{e}|^L$, the last term may neglect,
and for a system of non-interacting cliques of size $l_i$ with
$i=1,2,\ldots,q$, the leading order terms in large $q$ are
\begin{equation}
  Z(G;q,\mathbf{v}) \approx q^N \prod_{i=1}^q
       \prod_{j=2}^{l_i} \left(1 + \sum_{k=1}^{j-1} \frac{v_k}{q} \right).
  \label{eq:ZarbitraryGlargeq}
\end{equation}
The approximation is identical to \eqnref{eq:KlcliqueW} at high $T$.
Ref.\ \cite{ref:rhzglobaldisorder} calculates an explicit crossover
temperature including the last subgraph $\mathcal{A}=\mathcal{E}$
that competes with the large $q$ terms as $T\to 0$.
The free energy corresponding to \eqnref{eq:ZarbitraryGlargeq} becomes
\begin{eqnarray}
  f & \approx &  -T\log q - \frac{T}{N}\sum_{i=1}^{q} \sum_{j=2}^{l_i} \sum_{k=1}^{j-1}
                     \frac{v_k}{q} \label{eq:farbitraryGWsssumq}
\end{eqnarray}
where we applied the small $x$ approximation $\log(1+x)\approx x$.

In order to illustrate the correspondence in large $q$ and $T$, we fix $T=T'$,
define $J^{(q)}_e\equiv\exp(\beta' J_e)-1$, and rewrite the free energy per site
\begin{eqnarray}
  f & \approx &  -T'\log q - \frac{T'}{N}\sum_{i=1}^{q} \sum_{j=2}^{l_i} \sum_{k=1}^{j-1}
                     \frac{J_k^{(q)}}{q} \label{eq:farbitraryGWsssumqJdef}.
\end{eqnarray}
Large $q$ in \eqnref{eq:ZarbitraryGq} emulates large $T$
in \eqnref{eq:KlcliquesystemW}.
As with the unweighted case in \eqnref{eq:f-JqarbitraryG} in \secref{sec:f-arbitraryZ},
$J_e^{(q)}$ is exponentially weighted in $\beta' = 1/T'$, so a non-zero
(perhaps small) region of stability is essentially ensured except in the
presence of high noise \cite{ref:rhzglobaldisorder}.
We can additionally determine a rigorous bound using methods
in \cite{ref:batistaDreduc,ref:nussinovholographies,ref:rhzglobaldisorder}
\begin{equation}
  T_\times^\mathrm{UB} = \frac{\bar{J_0}}{\log\left[ \frac{p(q-1)}{(1-p)}
  \right]},
\end{equation}
where $\bar{J_0}=\frac{1}{2}\sum_j J_{j0}\left[ 1+\mbox{sgn}(J_{j0}) \right]$
is a generous upper bound summing only positive energy contributions
and $p$ is the probability for finding a given spin $\sigma_0$ in a specific
spin state $\bar{\sigma}$.
This result further agrees with our conclusions.
Note that as $p\to 1/q$, the system is completely disordered, so $T_\times\to\infty$.
As $p\to 1$, the system is perfectly ordered, so $T\to 0$.

\section{Conclusions}

We systematically examined the phase transitions for the community
detection problem via a ``noise test'' across a range of parameters.
The noise test consists of a structured graph with a
strongly-defined ground state. We add increasing numbers of
extraneous intercommunity edges (noise) and test the performance of
a stochastic community detection algorithm in solving for the
well-defined ground state. Specifically, we studied two types
(sequences) of systems. In the first such sequence of systems in
\figref{fig:susAllalpha}, we fixed the ratio $\alpha=q/N$ of the
number of communities $q$ to the number of nodes $N$. We fixed $q$
at different values and varied $N$ in the second sequence of systems
in \figref{fig:susAllq}. In \figref{fig:EHqAll}, we explored the
largest tested systems with $N=2048$ nodes in more detail where we
depicted additional measures to illustrate the transitions. All of
these systems showed regions with distinct phase transitions in the
large $N$ limit. Deviations occurred most often in smaller systems
indicating a definite finite-size effect.

The spin-glass-type phase transitions in our noise test occurred
between solvable and unsolvable regions of the community detection
problem. A hard, but solvable, region lies at the transition itself
where it is difficult, in general, for any community detection
algorithm to obtain the correct solution. We analyzed a system of
non-interacting cliques and illustrated that in the large $q$ limit,
the system experiences a thermal disorder in the thermodynamic limit
for any non-zero temperature. When in contact with a heat bath, the
asymptotic behavior of the temperatures beyond which the system is
permanently disordered varies slowly with the number of communities
$q$, specifically, $\Tcross\simeq O[1/\log q]$. This implies that
problems of practical size maintain a definite region of
solvability. Given the connection between Jones polynomials of knot
theory and Tutte polynomials for the Potts model, our results imply
similar transitions in large random knots (see
\Appref{app:trefoilknot}).

We further studied the free energy of arbitrary graphs arriving at the same
conclusion.
Increasing number of communities $q$ emulates increasing $T$ in arbitrary graphs
for a general Potts model.
The effective interaction strength for increasing $q$ scales such that this disorder
is circumvented by the often standard use of a simulated annealing algorithm, but
the ``glassy'' (high noise) region remains a challenge for any community detection
algorithm.

\section*{Acknowledgments}

This work was supported by NSF grant DMR-1106293 (ZN).
We also wish to thank S. Chakrabarty, R. Darst, P. Johnson,
V. Dobrosavljevic, B. Leonard, A. Middleton, M. E. J. Newman,
D. Reichman, V. Tran, and L. Zdeborov{\'a} for discussions and
ongoing work.

\appendix

\section{Definitions: Trials and Replicas}\label{app:replica}

We review the notion of trials and replicas on which our algorithms
are based. Both pertain to the use of multiple identical copies of
the same system which differ from one another by a permutation of
the site indices. Thus, whenever the time evolution may depend on
sequentially ordered searches for energy lowering moves (as it will
in our greedy algorithm),  these copies may generally reach
different final candidate solutions. By the use of an {\em ensemble}
of such identical copies (see, e.g., \figref{fig:MRAlandscape}), we
can attain accurate result as well as determine information theory
correlations between candidate solutions and infer from these a
detailed picture of the system.

In the definitions of ``trials'' and ``replicas'' given below, we
build on the existence of a given algorithm (any algorithm) that may
minimize a given energy or cost function. In our particular case, we
minimize the Hamiltonian of Eq.(\ref{eq:ourpotts}.
\bigskip

$\bullet$ {\underline{{\em Trials}.} We use trials alone in our bare
community detection algorithm. We run the algorithm on the same
problem $t$ independent times. This may generally lead to different
contending states that minimize Eq.(\ref{eq:ourpotts}). Out of these
$t$ trials, we will pick the lowest energy state and use that state
as the solution.
\bigskip

$\bullet$ {\underline{{\em Replicas}.}  We use both trials and
replicas in our multi-scale community detection algorithm. Each
sequence of the above described $t$ trials is termed a {\em
replica}.  When using ``replicas'' in the current context, we run
the aforementioned $t$ trials (and pick the lowest solution) $r$
independent times. By examining information theory correlations
between the $r$ {\em replicas} we infer which features of the
contending solutions are well agreed on (and thus are likely to be
correct) and on which features there is a large variance between the
disparate contending solutions that may generally mark important
physical boundaries. We will compute the information theory
correlations within the ensemble of $r$ replicas. Specifically, {\em
information theory extrema} as a function of the scale parameters,
generally correspond to more pertinent solutions that are locally
stable to a continuous change of scale.  It is in this way that we
will detect the important physical scales in the system
(\figref{fig:MRAlandscape}).
\bigskip

\section{Information theory and complexity measures}
\label{app:information}

We use information theory measures to calculate correlations between
community detection solutions and expected partitions in the noise
test problem. To begin, $N$ nodes of partition $A$ are partitioned
into $q_A$ communities of size $\{n_a\}$ where $1 \leq a \leq q_A$.
The ratio $n_a/N$ is the probability that a randomly selected node
is found in community $a$. The Shannon entropy is
\begin{equation}
  H_A = -\sum_{a=1}^{q_A} \frac{n_a}{N}\log_2\frac{n_a}{N}
  \label{eq:HA}
\end{equation}
The mutual information $I(A,B)$ between partitions $A$ and $B$ is
\begin{equation}
  I(A,B)=\sum_{a=1}^{q_A}\sum_{b=1}^{q_B}\frac{n_{ab}}{N}\log_2\frac{n_{ab}N}{n_an_b}
  \label{eq:IAB}
\end{equation}
where $n_{ab}$ is the number of nodes of community $a$ in partition
$A$ that are also found in community $b$ of partition $B$.
%
The normalized mutual information $I_N(A,B)$ is then
\begin{equation}
  I_N(A,B)=\frac{2I(A,B)}{H_A+H_B}.
\end{equation}
with the obvious range of $0\leq I_N(A,B)\leq 1$.
High $I_N$ 
values indicate better agreement between compared partitions.

\section{Computational susceptibility}  \label{app:chi}

The complexity $\Sigma(e)$ of the energy landscape is related to the
number of states ${\cal{N}}(E) \sim \exp[N\Sigma(e)]$
\cite{ref:mezardpz} with energy $E$ and energy density $e = E/N$. In
the current analysis, we detect the onset of the high complexity
with no prior assumptions or approximations by computing a
``computational susceptibility'' \cite{ref:rzmultires} defined as
\begin{equation}
  \chi_n=I_N(s=n)-I_N(s=4).
   \label{eq:sus}
\end{equation}
That is, $\chi$ measures the increase in the normalized mutual
information $I_N$ as the number of trials (number of independently
solved starting points in the energy landscape) $s=n$ is increased.
Physically, we evaluate how many different optimization trials are
necessary to achieve a desired accuracy threshold.

$\chi$ evaluates the expected response of the system to additional
optimization effort. That is, a higher $\chi$ indicates that
additional optimization effort will likely result in a better
solution. A low value of $\chi$ indicates that there will be less
improvement from the additional effort whether due to a trivially
solvable system, a complex energy landscape with numerous local
minima that trap the solver (at low to moderate temperatures), or
thermal-oriented effects of randomly wandering the energy landscape.

\section{Heat Bath Algorithm} \label{app:HBA}

We extend the greedy algorithm in \cite{ref:rzlocal,ref:rzmultires}
to non-zero temperatures by applying a heat bath algorithm. After,
we connect the system to a large thermal reservoir at a constant
temperature T, the probability for a particular node to move from
community $a$ to $b$ is set by a thermal distribution
\cite{ref:smcd},
\begin{equation}
  p_{a\to b}=\frac{\exp(-\Delta E_{a\to b}/T)}{\sum_d \exp(-\Delta
  E_{a\to d}/T)}.
  \label{eq:probability}
\end{equation}
$\Delta E_{a\rightarrow b}$ is the energy change that results if the
node is moved to the new community $b$, and the index $d$ runs over
all connected clusters including its current community or a new
empty community. The steps of our heat bath algorithm are as
follows:

($1$) \emph{Initialize the system}. Initialize the network into a
``symmetric'' state by assigning each node as the lone member of its
own community (\ie{}, $q_0=N$).

($2$) \emph{Find the best cluster for node $i$}. Select a node and
determine to which clusters it is connected (including its current
community and an empty cluster). Calculate the energy change $\Delta
E_{a\to b}$ required to move to each connected cluster $b$.
Calculate and sum all Boltzmann weights. Generate a random number
between $0$ and $1$ and determine into which cluster the node is
placed.

($3$) \emph{Iterate over all nodes}. Repeat step $2$ in sequence for
each node.

($4$) \emph{Merge clusters}. Allow for the merger of community pairs
based on the same Boltzmann-weighted merge probabilities.

($5$) \emph{Repeat the above two steps}. Repeat steps $2$ through
$4$ until the maximum number of iterations is reached.

($6$) \emph{Repeat all the above steps for s trials}. Repeat steps
$1$--$5$ for $s$ trials and select the lowest energy trial as the
best solution. Each trial randomly permutes the order of nodes in
the initial state.

This HBA is similar to our greedy algorithm except that we use a
random process to select the node moves in steps (2) and (4). The
results obtained at low temperature by our HBA are very close to the
results obtained by the zero temperature greedy algorithms. Note
that there is no cooling scheme as occurs in SA, so step $5$ ends at
a maximum number of iterations as opposed to a unchanged best
partition that is achieved as $T\to 0$ in SA.

In the easy phase, different starting trajectories, each beginning
in the symmetric initial state, but they often lead to the same
solution. In the hard phase, changing the random seed may
significantly alter the final result of an individual trial because
the solver becomes trapped in different local minima. Thus we apply
additional trials in order to sample different regions of the energy
landscape and arrive at better solutions. In the unsolvable phase,
increasing the number of trials $s$ does not substantially change
the quality of the solutions unless one happens to sample the energy
landscape in the immediate vicinity near the optimal partition, but
the probability of doing so is small with a finite number of trials
$s$.



\begin{figure*}[t!]
\begin{center}
\subfigure[]{\raisebox{0.775in}{\includegraphics[width=\subfigwidthnew]{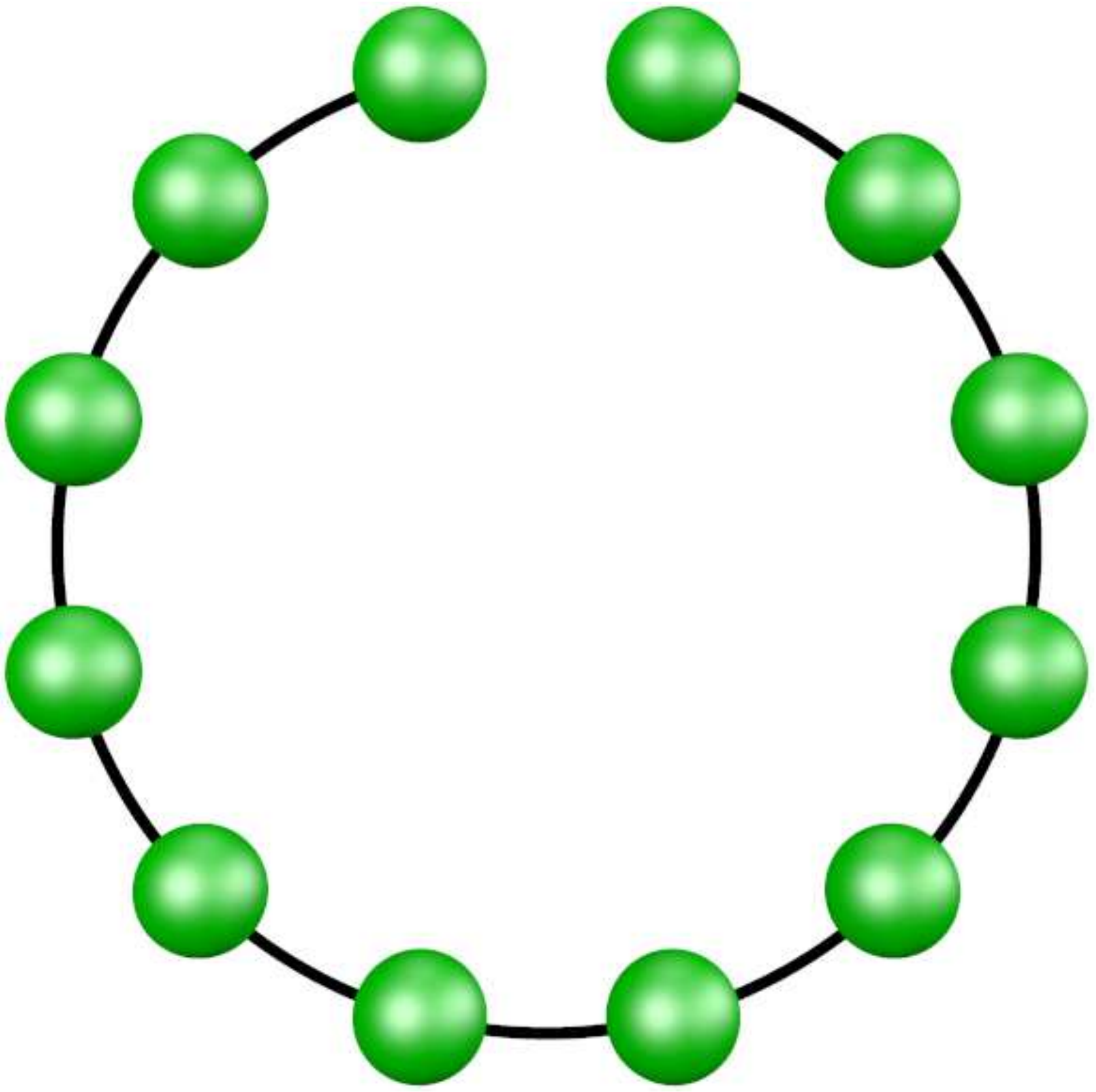}}}
\subfigure[]{\raisebox{0.025in}{\includegraphics[width=1.2\columnwidth]{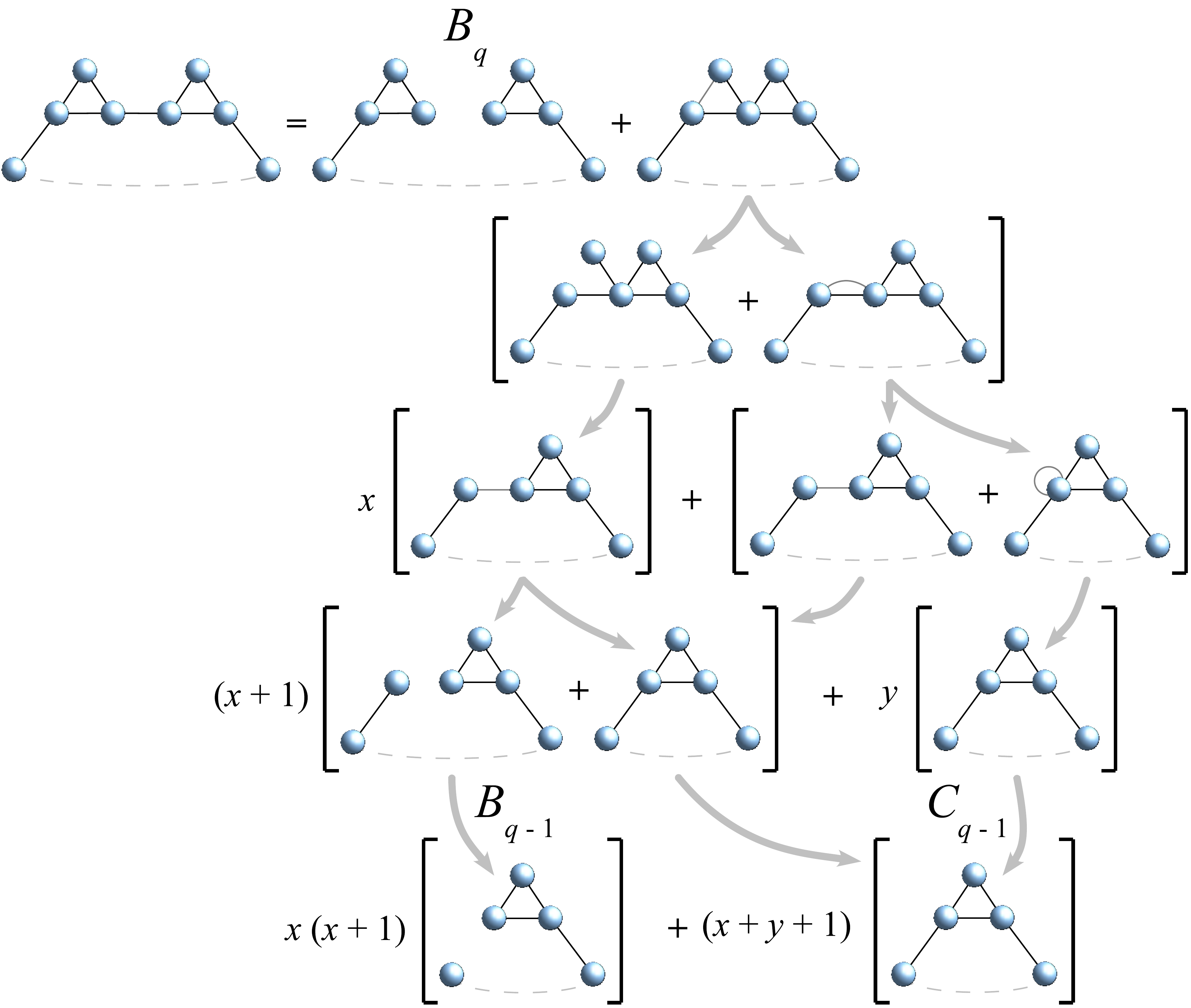}}}
\end{center}
\caption{(Color online) In panel (a), we depict a chain $B_q$ of $q$ cliques
(complete sub-graphs of maximally connected clusters) of size $l$ connected
by single edges.
The corresponding circle of $q$ cliques $C_q$ is depicted in \figref{fig:cliquecircle}.
In panel (b), we show the derivation of the Tutte polynomial in \eqnref{eq:Cq}
for size $l=3$ cliques.
We iteratively break edges and merge nodes according to the Tutte polynomial recurrence
relation \cite{ref:pottstutte} in \Appref{app:tuttepolyUW} until we arrive at configurations
that are reduced clique circle components.
For presentation purposes, gray edges are cut in the next line of the derivation.
The dashed gray line at the bottom of each sub-graph represents the remainder of the clique
circle which is not touched or affected by the operations on the displayed subgraph.}
\label{fig:CCTD}
\end{figure*}

\section{Tutte polynomials} \label{app:tuttepoly}

We give a very brief introduction to Tutte polynomials consisting of the essential
facts necessary for the derivations presented in this paper.
The notation used here is mostly standard, but the notation elsewhere in
the text deviates from standard notation in order to facilitate the partition
function derivation in \secref{sec:f-qandTrandom}.
For an undirected graph $G$, we denote  the \emph{deletion} (removal) of an edge
$e$ by $G'$ and a \emph{contraction} of the edge by $G''$ where a contraction
consists of removing the edge $e$ and merging the corresponding vertices.

\subsection{Unweighted graph $G$} \label{app:tuttepolyUW}

If $G$ has no edges, the Tutte polynomial is $t(G;x,y)=1$.
If $G$ is a disjoint graph of partitions, then $A$ and $B$ $t(G;x,y)=t(A;x,y)~\! t(B;x,y)$.
When an edge $e$ in an unweighted graph $G$ is ``\emph{cut},'' the recurrence
relations are \cite{ref:pottstutte}:
\begin{itemize}
  \item For a general edge, $t(G;x,y)=t(G_e';x,y)+t(G_e'';x,y)$
        which is the sum of two graphs where $e$ is deleted and contracted.
  \item If edge $e$ is an isthmus between two otherwise disconnected regions
        of $G$, then $t(G;x,y) = x~\! t(G_e'';x,y)$ where the edge $e$ is contracted.
  \item If edge $e$ is a loop (a vertex self-edge), then $t(G;x,y) = y~\! t(G_e';x,y)$
        where the edge $e$ is deleted.
\end{itemize}
The resulting Tutte polynomial is a function of two variables $(x,y)$, and it
is independent of the construction order.
Different graphs $G$ and $H$ may be described by the same function $t(G;x,y)=t(H;x,y)$.
A sample calculation is performed \Appref{app:cliquecircle} for a circle of complete
sub-graphs (cliques) as shown in \subfigref{fig:CCTD}{b}.

Tutte polynomials are related to the partition function of a ferromagnetic ($J>0$)
or antiferromagnetic ($J<0$) Potts model given by
\begin{equation}
  H(\{\sigma\}) = -\sum_{i\neq j} J\delta(\sigma_i,\sigma_j)
  \label{eq:tuttepottsUW}
\end{equation}
for any connected pair of nodes $i$ and $j$ with an interaction strength $J$.
The corresponding partition function is
\begin{equation}
  Z = q^{k(G)} v^{|V|-k(G)} t(G;x,y)
  \label{eq:ZTPApp}
\end{equation}
where $q$ is the number of clusters or states,
$v=\exp(\beta J)-1$, $G$ denotes the graph, $k(G)$ is the
number of connected components in $G$, $|V|$ is the number of vertices,
$x=(q+v)/v$ and $y=v+1$.


In \secref{sec:f-qandTrandom}, we use the following lemma to derive
high temperature $T$ approximation for a constructed random graph.
We denote $K_l$ as the Tutte polynomial for a complete graph, and
$K_l^{(d)}$ denotes that the graph has $d$ duplicated (possibly
redundant) edges.

\begin{lemma}
  For a clique $K_l^{(d)}$ of size $l$ with $d$ duplicate edges between
  any pair of nodes, the Tutte polynomial at $y=0$ is $K_l$.
  \label{lem:lemmaone}
\end{lemma}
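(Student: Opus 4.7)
The plan is to prove the lemma by induction on the number $d$ of duplicate edges present beyond the base simple clique. The key observation is that the Tutte recurrence, applied to any duplicate edge, produces one contracted-graph term in which a pre-existing parallel edge becomes a loop; loops extract a factor of $y$ via the loop rule, and so this term vanishes identically at $y=0$.

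For the base case $d=0$, the multigraph $K_l^{(0)}$ is just the ordinary clique $K_l$, and the statement is tautological. For the inductive step, assume the result for $K_l^{(d-1)}$ and consider $K_l^{(d)}$ with $d\geq 1$. By definition, there exists some pair of vertices $u,v$ connected by at least two parallel edges; select one such edge $e$. Applying the general edge recurrence from \Appref{app:tuttepolyUW} gives
\begin{equation}
t\bigl(K_l^{(d)};x,y\bigr) \;=\; t\bigl(K_l^{(d)}\!-\!e;\,x,y\bigr) \;+\; t\bigl(K_l^{(d)}/e;\,x,y\bigr).
\end{equation}

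First I would handle the deletion term. Removing a single duplicate edge between $u$ and $v$ still leaves every pair of vertices in the clique connected, so $K_l^{(d)}-e$ is precisely $K_l^{(d-1)}$. By the inductive hypothesis, $t(K_l^{(d)}-e;\,x,0)=K_l(x,0)$. Next I would handle the contraction term: merging $u$ and $v$ into a single vertex $w$ converts every surviving parallel edge between $u$ and $v$ into a loop at $w$, and by choice of $e$ at least one such loop is created. Applying the loop rule to that loop yields $t(K_l^{(d)}/e;x,y)=y\cdot t(\cdots;x,y)$, so this term carries an overall factor of $y$ and therefore vanishes at $y=0$. Combining the two contributions at $y=0$ gives $t(K_l^{(d)};x,0)=K_l(x,0)$, closing the induction.

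The main obstacle is simply bookkeeping: one must verify that contracting $e$ really does generate a loop (not merely another parallel edge) from any remaining edge between $u$ and $v$, since this is what guarantees a factor of $y$. The rest — that deletion leaves a clique with one fewer duplicate, that additional parallel edges produced between $w$ and the other vertices do not interfere with the argument, and that the order in which recurrence operations are applied is immaterial — follows immediately from the standard properties of $t(G;x,y)$ recalled in \Appref{app:tuttepolyUW}.
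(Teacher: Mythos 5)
Your proof is correct and takes essentially the same route as the paper's: induction on $d$, applying the delete--contract recurrence to one duplicate edge, with the deletion term reducing to $K_l^{(d-1)}$ and the contraction term vanishing at $y=0$ because the surviving parallel edge(s) become loops and extract a factor of $y$. The only cosmetic difference is that you anchor the induction at the tautological $d=0$ case, whereas the paper works out $d=1$ explicitly.
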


\begin{proof}
  Let $G$ be a complete graph with $l$ vertices and $d=1$ redundant
  edge.  If we delete and contract the duplicate edge, the Tutte
  polynomial $t(G)\equiv K_l^{(d=1)}$ is
  \begin{equation*}
    K_l^{(1)} = K_l + K_{l-1}^{(l-1)}
  \end{equation*}
  The contracted vertex in the second term contains $r=1$ loop.
  We cut the loop and have
  \begin{eqnarray}
    K_l^{(1)} & = & K_l + y K_{l-1}^{(l-2)} \nonumber\\
    K_l^{(1)} & = & K_l \label{eq:lemmaoneoneproof}
  \end{eqnarray}
  where we used $y=0$ in the second line.

  Now, assume that we can reduce $K_l^{(d)} = K_l$.
  Let $G$ be a complete graph with $l$ vertices and $d+1$ duplicate edges.
  If we cut one duplicate edge, the resulting Tutte polynomial
  $t(G)\equiv K_l^{(d+1)}$ is
  \begin{equation*}
     K_l^{(d+1)} = K_l^{(d)} + K_{l-1}^{(d+l-1)}
  \end{equation*}
  The contracted vertex in the second term contains $r\ge 1$ loops.
  We cut each loop in sequence and obtain
  \begin{eqnarray}
    K_l^{(d+1)} & = & K_l^{(d)} + y^r K_{l-1}^{(d+l-r-1)}. \nonumber\\
    K_l^{(d+1)} & = & K_l^{(d)}
    \label{eq:lemmaonedproof}
  \end{eqnarray}
  Since $K_l^{(d)}=K_l$, we also equate $K_l^{(d+1)}=K_l$ by \eqnref{eq:lemmaonedproof}.
  \Eqnref{eq:lemmaoneoneproof} shows that the relation holds for $d=1$; therefore,
  by mathematical induction $K_l^{(d)}=K_l$ holds true for any integer $d\ge 1$.\qedhere
\end{proof}

\subsection{Weighted graph $G$} \label{app:tuttepolyW}

An excellent summary of multivariate Tutte polynomials (MVTP) is found
in Ref.\ \cite{ref:jacksonmvtutte}.
The MVTP allows for arbitrary weights
$\mathbf{v} = [v_e]$ for the edges $\{e\}$ of $G$.
If $G$ has no edges, the MVTP is $Z(G;q,\mathbf{v})=q$.
For an undirected graph $G$, the weighted Potts Hamiltonian is
\begin{equation}
  H(\{\sigma\}) = -\sum_{i\neq j} J_{ij}\delta(\sigma_i,\sigma_j).
  \label{eq:tuttepottsW}
\end{equation}
When an edge $e$ in $G$ is ``\emph{cut},'' the recurrence relation is
\begin{equation}
  Z(G;q,\mathbf{v}) = Z(G';q,\mathbf{v}) + v_e Z(G'';q,\mathbf{v})
  \label{eq:mvtutte}
\end{equation}
where $J_e$ corresponds to the edge weight between two nodes $i$ and $j$
and $v_e = \exp{\beta J_e}-1$.

As with the unweighted case, if $G$ is a disjoint graph of partitions $A$ and $B$,
then $Z(G;x,y)=Z(A;q,\mathbf{v})~\!Z(B;q,\mathbf{v})$.
If partitions $A$ and $B$ are joined at a single vertex, then
then $Z(G;x,y)=Z(A;q,\mathbf{v})~\!Z(B;q,\mathbf{v})/q$.
Unlike \eqnref{eq:ZTPApp} for unweighted graphs, \eqnref{eq:mvtutte} holds for loops
or bridges, but for concreteness, cutting an isthmus $e$ yields
\begin{eqnarray}
  Z(G;q,\mathbf{v}) & = & \left(1 + v_e/q\right) Z(G_e';x,y) \\
  Z(G;q,\mathbf{v}) & = & \left(q + v_e\right) Z(G_e'';x,y)
  \label{eq:mvtpbridge}
\end{eqnarray}
where $e$ is deleted or contracted, respectively.
If $e$ is a loop, then
\begin{equation}
  Z(G;q,\mathbf{v}) = (1+v_e) Z(G_e';x,y).
  \label{eq:mvtploop}
\end{equation}
Note that the MVTP \emph{is} the partition function.
That is, there are no prefactors of $q$ or $v_e$.
Finally, if two parallel edges connect the same pair of nodes $i$ and $j$
with weights $J_1$ and $J_2$, then $Z_G$ is unchanged if we replace
the parallel edges by a single edge with a weight $J' = J_1 + J_2$
(this negates the need for lemma \ref{lem:lemmaone} above).

\section{Derivation of the Tutte polynimial for a circle of cliques}
\label{app:cliquecircle}

As depicted in \figref{fig:cliquecircle}, we define $\Cq{q}$ as a circle of $q$
cliques where we focus those of size $l=3$ for the current derivation.
The Tutte polynomial for a triangle is $\tri\equiv\trixy$.
For convenience, we also define, $\trip\equiv(\tri + x + 1)=[(x+1)^2 + y]$
and $y'\equiv(x + y + 1)$.

We define $\Bq{q}$ to be the Tutte polynomial for a clique chain
as depicted in \subfigref{fig:CCTD}{a}.
In this case, it is trivial to construct $\Bq{q}$ 
\begin{equation}
  \Bq{q} = x^{q-1} \trixy^q.
  \label{eq:Bq}
\end{equation}
With \eqnref{eq:Bq}, we construct a recurrence relation for the clique
circle configurations as shown in \subfigref{fig:CCTD}{b}
\begin{equation}
  \Cqq = \Bqq + x\xpo\Bq{q-1} + \xpoy\Cq{q-1}.
  \label{eq:Cq}
\end{equation}
From this relation, we can sum the series exactly.
\begin{eqnarray}
  \Cqq 
        & = & \Bqq + \trip\Bq{q-1} + x\xpo\xpoy\Bq{q-2}  \nonumber\\
        &   & +~\xpoyp{2}\Cq{q-2}  \nonumber\\
    \vdots &   & \phantom{\Bqq + \trip\Bq{q-1} } \vdots \nonumber\\
  \Cqq & = & \Bqq + \trip\sum_{i=0}^{q-4}\xpoyp{i}\Bq{q-i-1}  \nonumber\\
       &   &   +~\tri\xpoyp{q-3}\Bq{2} + \xpoyp{q-2}\Cq{2}.~~~~~ 
  \label{eq:Cqrr}
\end{eqnarray}
Note that the last $\Bq{j}$ term uses $\tri$ not $\trip$.
Also, it can be shown that $\Cq{2} = \xpop{2}\big( x^3 + \tri \big) + y\xpo\tri$.
Substituting these values into the equation, we arrive at
\begin{eqnarray}
  \Cqq & = & x^{q-1}\tri^q + \trip\sum_{i=0}^{q-4}{y'}^{i}x^{q-i-2}\tri^{q-i-1}
              + x{y'}^{q-3}\tri^3 \nonumber\\
       &   & +~x{y'}^{q}\big( x^2 + x + 1 \big) + y{y'}^{q-1}\trip .
  \label{eq:Cqexpanded}
\end{eqnarray}
In the high temperature $T$ limit, $y\ll x$, so we approximate $y\simeq 0$,
and the equation simplifies to
\begin{eqnarray}
  \Cqq^{(T)} & \simeq & 
              x\xpop{q}\Big[ x^{2q-2} + \cdots + x^2 + x + 1 \Big] \nonumber\\
        & = & x\xpop{q}\left[ \frac{1-x^{2q-1}}{1-x} \right] ,
  \label{eq:CqexpandedhighT}
\end{eqnarray}
We make a final high $T$ approximation
\begin{equation}
  \Cqq^{(T)} \simeq \xpop{q+1} x^{2q-3}
  \label{eq:CqexpandedhighTmore}
\end{equation}
using $\left( x^{2q-1}-1\right) \simeq x^{2q-1}$ and
$\left(1-x\right)^{-1} \simeq \left(1+x\right)/x^2$.

\begin{figure}[t]
\begin{center}%
\subfigure[]{\includegraphics[width=0.45\columnwidth]{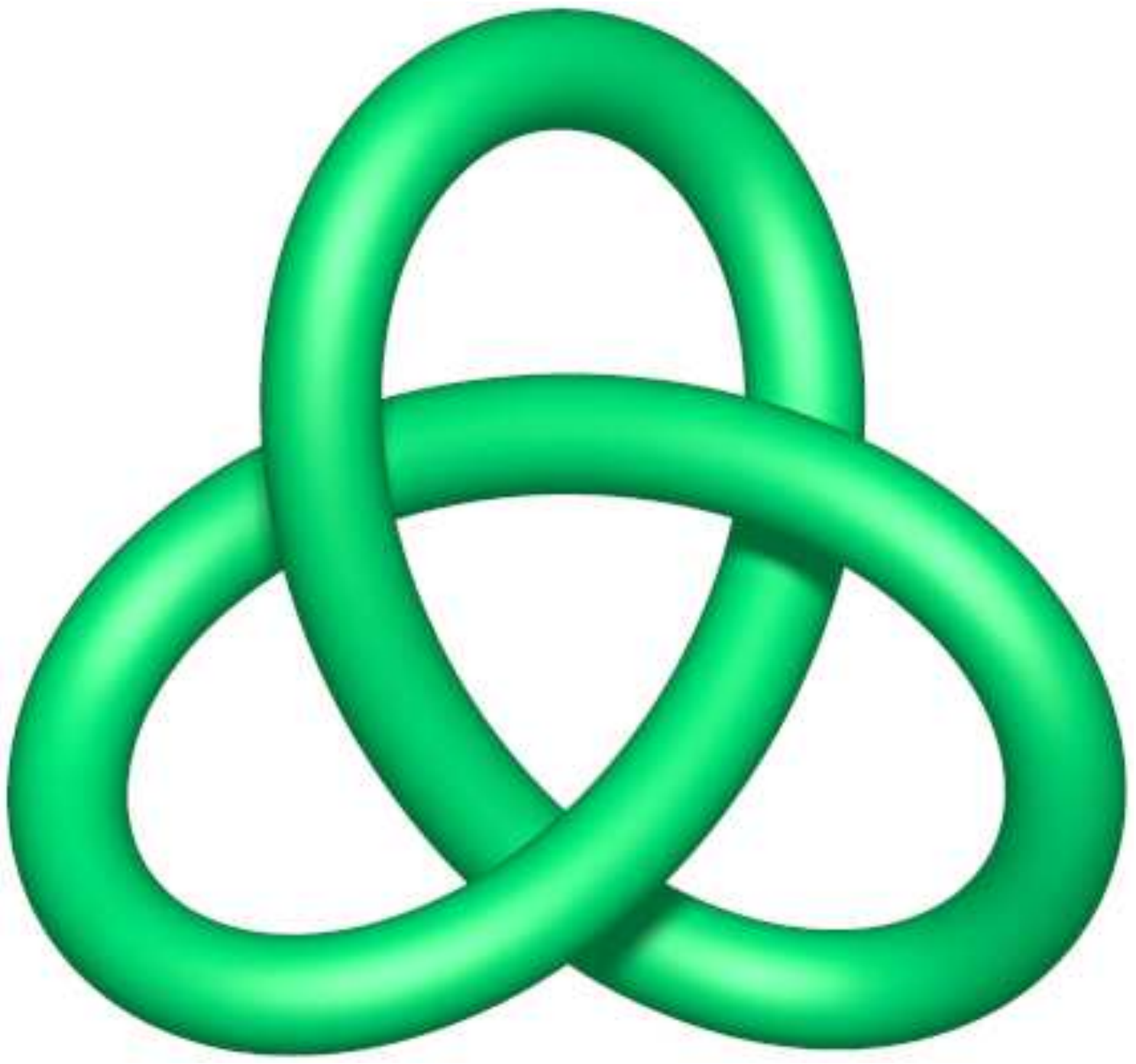}}%
\subfigure[]{\includegraphics[width=0.45\columnwidth]{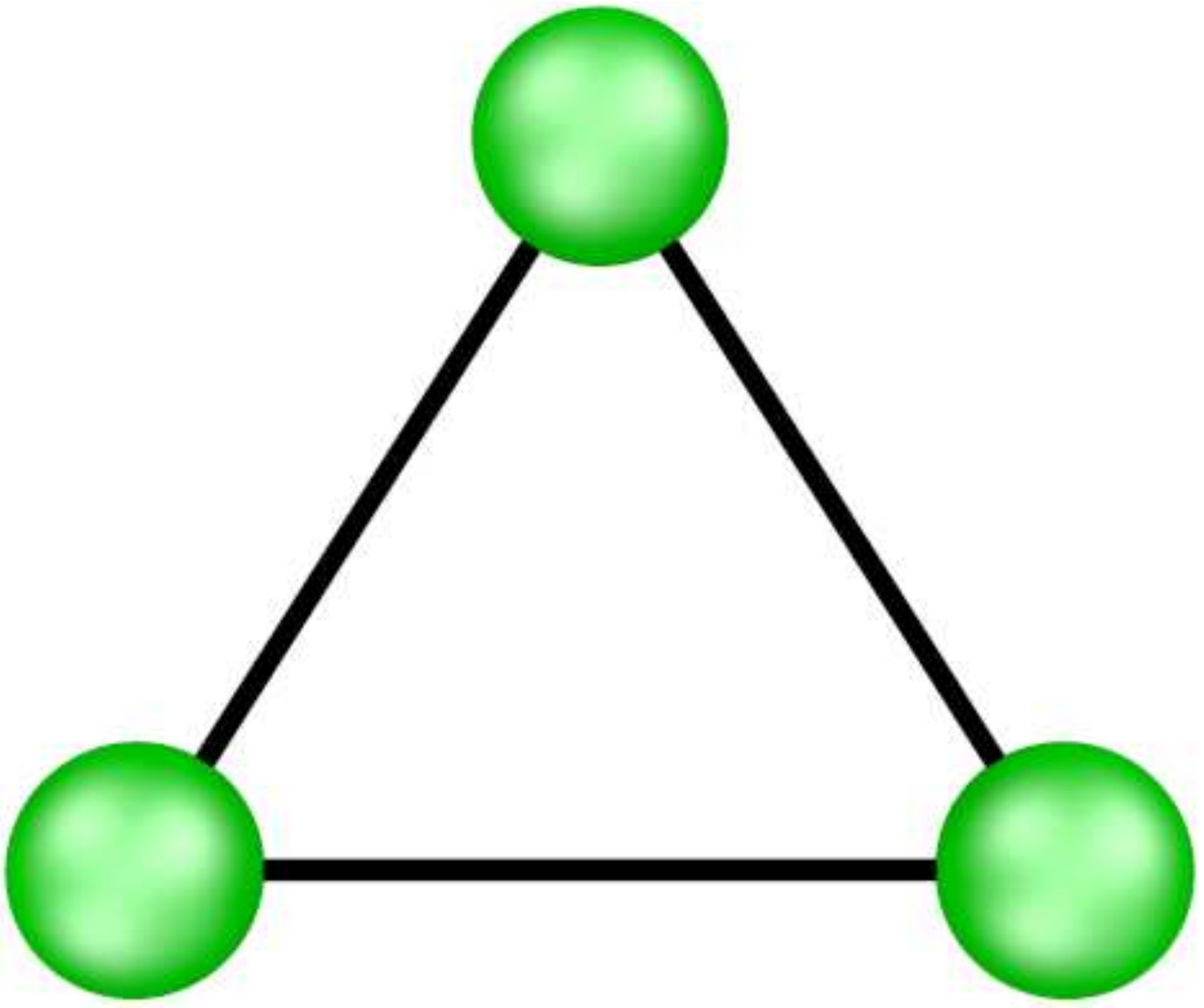}}%
\end{center}%
\caption{(Color online) Panel (a) depicts the trefoil knot, and panel (b)
shows the corresponding graph $G$ constructed from the distinct knot regions
and crossings \cite{ref:kauffmansignedG}.
That is, nodes correspond to ``checkerboard-shaded'' regions (shade the outside
lobes of the trefoil knot leaving the interior region unshaded), and edges
correspond to knot crossings.
Jones polynomials $V_J(x)$ in knot theory are related to Tutte Polynomials,
and \eqnref{eq:trefoilknot} represents the trefoil knot corresponding to the
triangle subgraph in panel (b).}%
\label{fig:trefoilknot}%
\end{figure}

\section{Random knot ``transitions''} \label{app:trefoilknot}

A general 3D knot may be represented as a $4$-valent planar graph
\cite{ref:kauffmansignedG} [\ie{}, corresponding to a two-dimensional (2D)
square lattice connectivity allowing self-loops].
This relation connects the Tutte polynomial to the Jones polynomial in knot
theory.
Conversely, all connected, signed planar graphs have a corresponding link
diagram representation (2D knot projection).
Alternating over-under crossings result in unsigned planar graphs
\cite{ref:kauffmansignedG} (e.g., the trefoil knot in \figref{fig:trefoilknot}).
Ref.\ \cite{ref:kauffmanknotsintro} provides an introduction to the mathematics
and physics of knot theory.
The Jones polynomial of a given knot is intimately related to quantum field theories
\cite{ref:wittenknots}, via its connection to [an SU(2) type] Wilson loop associated
the same knot.

As a concrete example, \subfigref{fig:trefoilknot}{a} depicts a simple trefoil
knot which is related to the triangle clique depicted in \subfigref{fig:trefoilknot}{b}
\cite{ref:kauffmansignedG}.
The Tutte polynomial of \subfigref{fig:trefoilknot}{b} is $K_3(G;x,y) = x^2 +x + y$.
Then we generate the Jones polynomial 
\begin{equation}
  V_J(x) = x^2 + x + \frac{1}{x}
  \label{eq:trefoilknot}
\end{equation}
where we used $xy=1$ because the trefoil knot has alternating crossings
\cite{ref:thistlewaitespanning}.
While the trefoil knot is clearly not random, we conjecture that the
transitions detected in random graphs with embedded ground states
in the current work can have similar transition repercussions in random knots.


\begin{thebibliography}{72}
\expandafter\ifx\csname natexlab\endcsname\relax\def\natexlab#1{#1}\fi
\expandafter\ifx\csname bibnamefont\endcsname\relax
  \def\bibnamefont#1{#1}\fi
\expandafter\ifx\csname bibfnamefont\endcsname\relax
  \def\bibfnamefont#1{#1}\fi
\expandafter\ifx\csname citenamefont\endcsname\relax
  \def\citenamefont#1{#1}\fi
\expandafter\ifx\csname url\endcsname\relax
  \def\url#1{\texttt{#1}}\fi
\expandafter\ifx\csname urlprefix\endcsname\relax\def\urlprefix{URL }\fi
\providecommand{\bibinfo}[2]{#2}
\providecommand{\eprint}[2][]{\url{#2}}

\bibitem[{\citenamefont{Newman}(2008)}]{ref:newmanphystoday}
\bibinfo{author}{\bibfnamefont{M.~E.~J.} \bibnamefont{Newman}},
  \bibinfo{journal}{Phys. Today} \textbf{\bibinfo{volume}{61}},
  \bibinfo{pages}{33} (\bibinfo{year}{2008}).

\bibitem[{\citenamefont{Fortunato}(2010)}]{ref:fortunatophysrep}
\bibinfo{author}{\bibfnamefont{S.}~\bibnamefont{Fortunato}},
  \bibinfo{journal}{Phys. Rep.} \textbf{\bibinfo{volume}{486}},
  \bibinfo{pages}{75} (\bibinfo{year}{2010}).

\bibitem[{\citenamefont{Rosvall and Bergstrom}(2008)}]{ref:rosvallmaprw}
\bibinfo{author}{\bibfnamefont{M.}~\bibnamefont{Rosvall}} \bibnamefont{and}
  \bibinfo{author}{\bibfnamefont{C.~T.} \bibnamefont{Bergstrom}},
  \bibinfo{journal}{Proc. Natl. Acad. Sci. U.S.A.}
  \textbf{\bibinfo{volume}{105}}, \bibinfo{pages}{1118} (\bibinfo{year}{2008}).

\bibitem[{\citenamefont{Blondel et~al.}(2008)\citenamefont{Blondel, Guillaume,
  Lambiotte, and Lefebvre}}]{ref:blondel}
\bibinfo{author}{\bibfnamefont{V.~D.} \bibnamefont{Blondel}},
  \bibinfo{author}{\bibfnamefont{J.-L.} \bibnamefont{Guillaume}},
  \bibinfo{author}{\bibfnamefont{R.}~\bibnamefont{Lambiotte}},
  \bibnamefont{and} \bibinfo{author}{\bibfnamefont{E.}~\bibnamefont{Lefebvre}},
  \bibinfo{journal}{J. Stat. Mech.: Theory Exp.} \textbf{\bibinfo{volume}{10}},
  \bibinfo{pages}{P10008} (\bibinfo{year}{2008}).

\bibitem[{\citenamefont{Hastings}(2006)}]{ref:hastings}
\bibinfo{author}{\bibfnamefont{M.~B.} \bibnamefont{Hastings}},
  \bibinfo{journal}{Phys. Rev. E} \textbf{\bibinfo{volume}{74}},
  \bibinfo{pages}{035102(R)} (\bibinfo{year}{2006}).

\bibitem[{\citenamefont{Reichardt and Bornholdt}(2006)}]{ref:smcd}
\bibinfo{author}{\bibfnamefont{J.}~\bibnamefont{Reichardt}} \bibnamefont{and}
  \bibinfo{author}{\bibfnamefont{S.}~\bibnamefont{Bornholdt}},
  \bibinfo{journal}{Phys. Rev. E} \textbf{\bibinfo{volume}{74}},
  \bibinfo{pages}{016110} (\bibinfo{year}{2006}).

\bibitem[{\citenamefont{Lancichinetti et~al.}(2009)\citenamefont{Lancichinetti,
  Fortunato, and Kert{\'e}sz}}]{ref:lanc}
\bibinfo{author}{\bibfnamefont{A.}~\bibnamefont{Lancichinetti}},
  \bibinfo{author}{\bibfnamefont{S.}~\bibnamefont{Fortunato}},
  \bibnamefont{and}
  \bibinfo{author}{\bibfnamefont{J.}~\bibnamefont{Kert{\'e}sz}},
  \bibinfo{journal}{New J. Phys.} \textbf{\bibinfo{volume}{11}},
  \bibinfo{pages}{033015} (\bibinfo{year}{2009}).

\bibitem[{\citenamefont{Ronhovde and Nussinov}(2009)}]{ref:rzmultires}
\bibinfo{author}{\bibfnamefont{P.}~\bibnamefont{Ronhovde}} \bibnamefont{and}
  \bibinfo{author}{\bibfnamefont{Z.}~\bibnamefont{Nussinov}},
  \bibinfo{journal}{Phys. Rev. E} \textbf{\bibinfo{volume}{80}},
  \bibinfo{pages}{016109} (\bibinfo{year}{2009}).

\bibitem[{\citenamefont{Cheng and Shen}(2010)}]{ref:chengshen}
\bibinfo{author}{\bibfnamefont{X.-Q.} \bibnamefont{Cheng}} \bibnamefont{and}
  \bibinfo{author}{\bibfnamefont{H.-W.} \bibnamefont{Shen}},
  \bibinfo{journal}{J. Stat. Mech.: Theory Exp.}
  \textbf{\bibinfo{volume}{2010}}, \bibinfo{pages}{P04024}
  (\bibinfo{year}{2010}).

\bibitem[{\citenamefont{Gudkov et~al.}(2008)\citenamefont{Gudkov, Montealegre,
  Nussinov, and Nussinov}}]{ref:gudkov}
\bibinfo{author}{\bibfnamefont{V.}~\bibnamefont{Gudkov}},
  \bibinfo{author}{\bibfnamefont{V.}~\bibnamefont{Montealegre}},
  \bibinfo{author}{\bibfnamefont{S.}~\bibnamefont{Nussinov}}, \bibnamefont{and}
  \bibinfo{author}{\bibfnamefont{Z.}~\bibnamefont{Nussinov}},
  \bibinfo{journal}{Phys. Rev. E} \textbf{\bibinfo{volume}{78}},
  \bibinfo{pages}{016113} (\bibinfo{year}{2008}).

\bibitem[{\citenamefont{Barber and Clark}(2009)}]{ref:barberLPA}
\bibinfo{author}{\bibfnamefont{M.~J.} \bibnamefont{Barber}} \bibnamefont{and}
  \bibinfo{author}{\bibfnamefont{J.~W.} \bibnamefont{Clark}},
  \bibinfo{journal}{Phys. Rev. E} \textbf{\bibinfo{volume}{80}},
  \bibinfo{pages}{026129} (\bibinfo{year}{2009}).

\bibitem[{\citenamefont{Danon et~al.}(2005)\citenamefont{Danon,
  D{\'i}az-Guilera, Duch, and Arenas}}]{ref:danon}
\bibinfo{author}{\bibfnamefont{L.}~\bibnamefont{Danon}},
  \bibinfo{author}{\bibfnamefont{A.}~\bibnamefont{D{\'i}az-Guilera}},
  \bibinfo{author}{\bibfnamefont{J.}~\bibnamefont{Duch}}, \bibnamefont{and}
  \bibinfo{author}{\bibfnamefont{A.}~\bibnamefont{Arenas}},
  \bibinfo{journal}{J. Stat. Mech.: Theory Exp.} \textbf{\bibinfo{volume}{9}},
  \bibinfo{pages}{P09008} (\bibinfo{year}{2005}).

\bibitem[{\citenamefont{Noack and Rotta}(2009)}]{ref:noack}
\bibinfo{author}{\bibfnamefont{A.}~\bibnamefont{Noack}} \bibnamefont{and}
  \bibinfo{author}{\bibfnamefont{R.}~\bibnamefont{Rotta}}, in
  \emph{\bibinfo{booktitle}{Experimental Algorithms}}, edited by
  \bibinfo{editor}{\bibfnamefont{J.}~\bibnamefont{Vahrenhold}}
  (\bibinfo{publisher}{Springer-Verlag Berlin, Heidelberg},
  \bibinfo{year}{2009}), vol. \bibinfo{volume}{5526}, pp.
  \bibinfo{pages}{257--268}.

\bibitem[{\citenamefont{Shen and Cheng}(2010)}]{ref:shenchengspectral}
\bibinfo{author}{\bibfnamefont{H.-W.} \bibnamefont{Shen}} \bibnamefont{and}
  \bibinfo{author}{\bibfnamefont{X.-Q.} \bibnamefont{Cheng}},
  \bibinfo{journal}{J. Stat. Mech.: Theory Exp.}
  \textbf{\bibinfo{volume}{2010}}, \bibinfo{pages}{P10020}
  (\bibinfo{year}{2010}).

\bibitem[{\citenamefont{Lancichinetti and
  Fortunato}(2009)}]{ref:lancLFRcompare}
\bibinfo{author}{\bibfnamefont{A.}~\bibnamefont{Lancichinetti}}
  \bibnamefont{and}
  \bibinfo{author}{\bibfnamefont{S.}~\bibnamefont{Fortunato}},
  \bibinfo{journal}{Phys. Rev. E} \textbf{\bibinfo{volume}{80}},
  \bibinfo{pages}{056117} (\bibinfo{year}{2009}).

\bibitem[{\citenamefont{Hogg et~al.}(1996)\citenamefont{Hogg, Huberman, and
  Williams}}]{ref:hoggHW}
\bibinfo{author}{\bibfnamefont{T.}~\bibnamefont{Hogg}},
  \bibinfo{author}{\bibfnamefont{B.~A.} \bibnamefont{Huberman}},
  \bibnamefont{and} \bibinfo{author}{\bibfnamefont{C.~P.}
  \bibnamefont{Williams}}, \bibinfo{journal}{Artificial Intelligence}
  \textbf{\bibinfo{volume}{81}}, \bibinfo{pages}{1} (\bibinfo{year}{1996}).

\bibitem[{\citenamefont{M{\'e}zard et~al.}(2002)\citenamefont{M{\'e}zard,
  Parisi, and Zecchina}}]{ref:mezardpz}
\bibinfo{author}{\bibfnamefont{M.}~\bibnamefont{M{\'e}zard}},
  \bibinfo{author}{\bibfnamefont{G.}~\bibnamefont{Parisi}}, \bibnamefont{and}
  \bibinfo{author}{\bibfnamefont{R.}~\bibnamefont{Zecchina}},
  \bibinfo{journal}{Science} \textbf{\bibinfo{volume}{297}},
  \bibinfo{pages}{812} (\bibinfo{year}{2002}).

\bibitem[{\citenamefont{Monasson et~al.}(1999)\citenamefont{Monasson, Zecchina,
  Kirkpatrick, Selman, and Troyansky}}]{ref:monassonzecchina}
\bibinfo{author}{\bibfnamefont{R.}~\bibnamefont{Monasson}},
  \bibinfo{author}{\bibfnamefont{R.}~\bibnamefont{Zecchina}},
  \bibinfo{author}{\bibfnamefont{S.}~\bibnamefont{Kirkpatrick}},
  \bibinfo{author}{\bibfnamefont{B.}~\bibnamefont{Selman}}, \bibnamefont{and}
  \bibinfo{author}{\bibfnamefont{L.}~\bibnamefont{Troyansky}},
  \bibinfo{journal}{Nature (London)} \textbf{\bibinfo{volume}{400}},
  \bibinfo{pages}{133} (\bibinfo{year}{1999}).

\bibitem[{\citenamefont{Mertens}(1998)}]{ref:mertensNPPT}
\bibinfo{author}{\bibfnamefont{S.}~\bibnamefont{Mertens}},
  \bibinfo{journal}{Phys. Rev. Lett.} \textbf{\bibinfo{volume}{81}},
  \bibinfo{pages}{4281} (\bibinfo{year}{1998}).

\bibitem[{\citenamefont{Gent and Walsh}(1996)}]{ref:gentTSPPT}
\bibinfo{author}{\bibfnamefont{I.~P.} \bibnamefont{Gent}} \bibnamefont{and}
  \bibinfo{author}{\bibfnamefont{T.}~\bibnamefont{Walsh}},
  \bibinfo{journal}{Artificial Intelligence} \textbf{\bibinfo{volume}{88}},
  \bibinfo{pages}{349} (\bibinfo{year}{1996}).

\bibitem[{\citenamefont{Weigt and Hartmann}(2000)}]{ref:weigtVCPT}
\bibinfo{author}{\bibfnamefont{M.}~\bibnamefont{Weigt}} \bibnamefont{and}
  \bibinfo{author}{\bibfnamefont{A.~K.} \bibnamefont{Hartmann}},
  \bibinfo{journal}{Phys. Rev. Lett.} \textbf{\bibinfo{volume}{84}},
  \bibinfo{pages}{6118} (\bibinfo{year}{2000}).

\bibitem[{\citenamefont{Lacasa et~al.}(2008)\citenamefont{Lacasa, Luque, and
  Miramontes}}]{ref:lacasaprimes}
\bibinfo{author}{\bibfnamefont{L.}~\bibnamefont{Lacasa}},
  \bibinfo{author}{\bibfnamefont{B.}~\bibnamefont{Luque}}, \bibnamefont{and}
  \bibinfo{author}{\bibfnamefont{O.}~\bibnamefont{Miramontes}},
  \bibinfo{journal}{New J. Phys.} \textbf{\bibinfo{volume}{10}},
  \bibinfo{pages}{023009} (\bibinfo{year}{2008}).

\bibitem[{\citenamefont{Krzakala and Zdeborov{\'a}}(2008)}]{ref:krzakala}
\bibinfo{author}{\bibfnamefont{F.}~\bibnamefont{Krzakala}} \bibnamefont{and}
  \bibinfo{author}{\bibfnamefont{L.}~\bibnamefont{Zdeborov{\'a}}},
  \bibinfo{journal}{Journal of Physics: Conference Series}
  \textbf{\bibinfo{volume}{95}}, \bibinfo{pages}{012012}
  (\bibinfo{year}{2008}).

\bibitem[{\citenamefont{Bauke et~al.}(2003)\citenamefont{Bauke, Mertens, and
  Engel}}]{ref:baukeMSPPT}
\bibinfo{author}{\bibfnamefont{H.}~\bibnamefont{Bauke}},
  \bibinfo{author}{\bibfnamefont{S.}~\bibnamefont{Mertens}}, \bibnamefont{and}
  \bibinfo{author}{\bibfnamefont{A.}~\bibnamefont{Engel}},
  \bibinfo{journal}{Phys. Rev. Lett.} \textbf{\bibinfo{volume}{90}},
  \bibinfo{pages}{158701} (\bibinfo{year}{2003}).

\bibitem[{\citenamefont{Mukherjee and Manna}(2005)}]{ref:mukherjeejamming}
\bibinfo{author}{\bibfnamefont{G.}~\bibnamefont{Mukherjee}} \bibnamefont{and}
  \bibinfo{author}{\bibfnamefont{S.~S.} \bibnamefont{Manna}},
  \bibinfo{journal}{Phys. Rev. E} \textbf{\bibinfo{volume}{71}},
  \bibinfo{pages}{066108} (\bibinfo{year}{2005}).

\bibitem[{\citenamefont{Ar{\'e}valo et~al.}(2010)\citenamefont{Ar{\'e}valo,
  Zuriguel, and Maza}}]{ref:arevalojamming}
\bibinfo{author}{\bibfnamefont{R.}~\bibnamefont{Ar{\'e}valo}},
  \bibinfo{author}{\bibfnamefont{I.}~\bibnamefont{Zuriguel}}, \bibnamefont{and}
  \bibinfo{author}{\bibfnamefont{D.}~\bibnamefont{Maza}},
  \bibinfo{journal}{Phys. Rev. E} \textbf{\bibinfo{volume}{81}},
  \bibinfo{pages}{041302} (\bibinfo{year}{2010}).

\bibitem[{\citenamefont{Ashok and Patra}(2010)}]{ref:ashokpatraPT}
\bibinfo{author}{\bibfnamefont{B.}~\bibnamefont{Ashok}} \bibnamefont{and}
  \bibinfo{author}{\bibfnamefont{T.~K.} \bibnamefont{Patra}},
  \bibinfo{journal}{Pramana} \textbf{\bibinfo{volume}{75}},
  \bibinfo{pages}{549} (\bibinfo{year}{2010}).

\bibitem[{\citenamefont{Rose et~al.}(1990)\citenamefont{Rose, Gurewitz, and
  Fox}}]{ref:rosePTC}
\bibinfo{author}{\bibfnamefont{K.}~\bibnamefont{Rose}},
  \bibinfo{author}{\bibfnamefont{E.}~\bibnamefont{Gurewitz}}, \bibnamefont{and}
  \bibinfo{author}{\bibfnamefont{G.~C.} \bibnamefont{Fox}},
  \bibinfo{journal}{Phys. Rev. Lett.} \textbf{\bibinfo{volume}{65}},
  \bibinfo{pages}{945} (\bibinfo{year}{1990}).

\bibitem[{\citenamefont{Graepel et~al.}(1997)\citenamefont{Graepel, Burger, and
  Obermayer}}]{ref:graepelmapsCPT}
\bibinfo{author}{\bibfnamefont{T.}~\bibnamefont{Graepel}},
  \bibinfo{author}{\bibfnamefont{M.}~\bibnamefont{Burger}}, \bibnamefont{and}
  \bibinfo{author}{\bibfnamefont{K.}~\bibnamefont{Obermayer}},
  \bibinfo{journal}{Phys. Rev. E} \textbf{\bibinfo{volume}{56}},
  \bibinfo{pages}{3876} (\bibinfo{year}{1997}).

\bibitem[{\citenamefont{Dorogovtsev et~al.}(2008)\citenamefont{Dorogovtsev,
  Goltsev, and Mendes}}]{ref:dorogovtsevRMP}
\bibinfo{author}{\bibfnamefont{S.~N.} \bibnamefont{Dorogovtsev}},
  \bibinfo{author}{\bibfnamefont{A.~V.} \bibnamefont{Goltsev}},
  \bibnamefont{and} \bibinfo{author}{\bibfnamefont{J.~F.~F.}
  \bibnamefont{Mendes}}, \bibinfo{journal}{Rev. Mod. Phys.}
  \textbf{\bibinfo{volume}{80}}, \bibinfo{pages}{1275} (\bibinfo{year}{2008}).

\bibitem[{\citenamefont{Ronhovde and Nussinov}(2010)}]{ref:rzlocal}
\bibinfo{author}{\bibfnamefont{P.}~\bibnamefont{Ronhovde}} \bibnamefont{and}
  \bibinfo{author}{\bibfnamefont{Z.}~\bibnamefont{Nussinov}},
  \bibinfo{journal}{Phys. Rev. E} \textbf{\bibinfo{volume}{81}},
  \bibinfo{pages}{046114} (\bibinfo{year}{2010}).

\bibitem[{\citenamefont{Hu et~al.}(2012{\natexlab{a}})\citenamefont{Hu,
  Ronhovde, and Nussinov}}]{ref:huCDPTsgd}
\bibinfo{author}{\bibfnamefont{D.}~\bibnamefont{Hu}},
  \bibinfo{author}{\bibfnamefont{P.}~\bibnamefont{Ronhovde}}, \bibnamefont{and}
  \bibinfo{author}{\bibfnamefont{Z.}~\bibnamefont{Nussinov}},
  \bibinfo{journal}{Phil. Mag.} \textbf{\bibinfo{volume}{92}},
  \bibinfo{pages}{406} (\bibinfo{year}{2012}{\natexlab{a}}).

\bibitem[{\citenamefont{Good et~al.}(2010)\citenamefont{Good, de~Montjoye, and
  Clauset}}]{ref:goodMC}
\bibinfo{author}{\bibfnamefont{B.~H.} \bibnamefont{Good}},
  \bibinfo{author}{\bibfnamefont{Y.-A.} \bibnamefont{de~Montjoye}},
  \bibnamefont{and} \bibinfo{author}{\bibfnamefont{A.}~\bibnamefont{Clauset}},
  \bibinfo{journal}{Phys. Rev. E} \textbf{\bibinfo{volume}{81}},
  \bibinfo{pages}{046106} (\bibinfo{year}{2010}).

\bibitem[{\citenamefont{Decelle et~al.}(2011)\citenamefont{Decelle, Krzakala,
  Moore, and Zdeborov{\' a}}}]{ref:decelleKMZPT}
\bibinfo{author}{\bibfnamefont{A.}~\bibnamefont{Decelle}},
  \bibinfo{author}{\bibfnamefont{F.}~\bibnamefont{Krzakala}},
  \bibinfo{author}{\bibfnamefont{C.}~\bibnamefont{Moore}}, \bibnamefont{and}
  \bibinfo{author}{\bibfnamefont{L.}~\bibnamefont{Zdeborov{\' a}}},
  \bibinfo{journal}{Phys. Rev. Lett.} \textbf{\bibinfo{volume}{107}},
  \bibinfo{pages}{065701} (\bibinfo{year}{2011}).

\bibitem[{\citenamefont{Lee et~al.}(2005)\citenamefont{Lee, Goh, Kahng, and
  Kim}}]{ref:leeavalanche}
\bibinfo{author}{\bibfnamefont{E.~J.} \bibnamefont{Lee}},
  \bibinfo{author}{\bibfnamefont{K.-I.} \bibnamefont{Goh}},
  \bibinfo{author}{\bibfnamefont{B.}~\bibnamefont{Kahng}}, \bibnamefont{and}
  \bibinfo{author}{\bibfnamefont{D.}~\bibnamefont{Kim}},
  \bibinfo{journal}{Phys. Rev. E} \textbf{\bibinfo{volume}{71}},
  \bibinfo{pages}{056108} (\bibinfo{year}{2005}).

\bibitem[{\citenamefont{Moreno et~al.}(2003)\citenamefont{Moreno,
  Pastor-Satorras, Vázquez, and Vespignani}}]{ref:morenocascade}
\bibinfo{author}{\bibfnamefont{Y.}~\bibnamefont{Moreno}},
  \bibinfo{author}{\bibfnamefont{R.}~\bibnamefont{Pastor-Satorras}},
  \bibinfo{author}{\bibfnamefont{A.}~\bibnamefont{Vázquez}}, \bibnamefont{and}
  \bibinfo{author}{\bibfnamefont{A.}~\bibnamefont{Vespignani}},
  \bibinfo{journal}{Europhys. Lett.} \textbf{\bibinfo{volume}{62}},
  \bibinfo{pages}{292} (\bibinfo{year}{2003}).

\bibitem[{\citenamefont{Wang and Chen}(2008)}]{ref:wangcascade}
\bibinfo{author}{\bibfnamefont{W.-X.} \bibnamefont{Wang}} \bibnamefont{and}
  \bibinfo{author}{\bibfnamefont{G.}~\bibnamefont{Chen}},
  \bibinfo{journal}{Phys. Rev. E} \textbf{\bibinfo{volume}{77}},
  \bibinfo{pages}{026101} (\bibinfo{year}{2008}).

\bibitem[{\citenamefont{Zheng et~al.}(2007)\citenamefont{Zheng, Gao, and
  Zhao}}]{ref:zhengscascade}
\bibinfo{author}{\bibfnamefont{J.-F.} \bibnamefont{Zheng}},
  \bibinfo{author}{\bibfnamefont{Z.-Y.} \bibnamefont{Gao}}, \bibnamefont{and}
  \bibinfo{author}{\bibfnamefont{X.-M.} \bibnamefont{Zhao}},
  \bibinfo{journal}{Europhys. Lett.} \textbf{\bibinfo{volume}{79}},
  \bibinfo{pages}{58002} (\bibinfo{year}{2007}).

\bibitem[{\citenamefont{Wu et~al.}(2006)\citenamefont{Wu, Gao, and
  Sun}}]{ref:wucascade}
\bibinfo{author}{\bibfnamefont{J.-j.} \bibnamefont{Wu}},
  \bibinfo{author}{\bibfnamefont{Z.-y.} \bibnamefont{Gao}}, \bibnamefont{and}
  \bibinfo{author}{\bibfnamefont{H.-j.} \bibnamefont{Sun}},
  \bibinfo{journal}{Phys. Rev. E} \textbf{\bibinfo{volume}{74}},
  \bibinfo{pages}{066111} (\bibinfo{year}{2006}).

\bibitem[{\citenamefont{Ikeda et~al.}(2010)\citenamefont{Ikeda, Hasegawa, and
  Nemoto}}]{ref:ikedacascade}
\bibinfo{author}{\bibfnamefont{Y.}~\bibnamefont{Ikeda}},
  \bibinfo{author}{\bibfnamefont{T.}~\bibnamefont{Hasegawa}}, \bibnamefont{and}
  \bibinfo{author}{\bibfnamefont{K.}~\bibnamefont{Nemoto}},
  \bibinfo{journal}{Journal of Physics: Conference Series}
  \textbf{\bibinfo{volume}{221}}, \bibinfo{pages}{012005}
  (\bibinfo{year}{2010}).

\bibitem[{\citenamefont{Tahbaz-Salehi and
  Jadbabaie}(2007)}]{ref:tahbazsalehiauto}
\bibinfo{author}{\bibfnamefont{A.}~\bibnamefont{Tahbaz-Salehi}}
  \bibnamefont{and}
  \bibinfo{author}{\bibfnamefont{A.}~\bibnamefont{Jadbabaie}}, in
  \emph{\bibinfo{booktitle}{Proceedings of the 2007 American Control
  Conference}} (\bibinfo{publisher}{IEEE}, \bibinfo{year}{2007}), pp.
  \bibinfo{pages}{699--704}.

\bibitem[{\citenamefont{Arenas et~al.}(2006)\citenamefont{Arenas, D{\'
  i}az-Guilera, and P{\' e}rez-Vicente}}]{ref:arenassync}
\bibinfo{author}{\bibfnamefont{A.}~\bibnamefont{Arenas}},
  \bibinfo{author}{\bibfnamefont{A.}~\bibnamefont{D{\' i}az-Guilera}},
  \bibnamefont{and} \bibinfo{author}{\bibfnamefont{C.~J.} \bibnamefont{P{\'
  e}rez-Vicente}}, \bibinfo{journal}{Phys. Rev. Lett.}
  \textbf{\bibinfo{volume}{96}}, \bibinfo{pages}{114102}
  (\bibinfo{year}{2006}).

\bibitem[{\citenamefont{Serrano and Bogu{\~n}{\'a}}(2006)}]{ref:serranoperc}
\bibinfo{author}{\bibfnamefont{M.~{\'A}.} \bibnamefont{Serrano}}
  \bibnamefont{and}
  \bibinfo{author}{\bibfnamefont{M.}~\bibnamefont{Bogu{\~n}{\'a}}},
  \bibinfo{journal}{Phys. Rev. Lett.} \textbf{\bibinfo{volume}{97}},
  \bibinfo{pages}{088701} (\bibinfo{year}{2006}).

\bibitem[{\citenamefont{Moore and Newman}(2000)}]{ref:mooreperc}
\bibinfo{author}{\bibfnamefont{C.}~\bibnamefont{Moore}} \bibnamefont{and}
  \bibinfo{author}{\bibfnamefont{M.~E.~J.} \bibnamefont{Newman}},
  \bibinfo{journal}{Phys. Rev. E} \textbf{\bibinfo{volume}{61}},
  \bibinfo{pages}{5678} (\bibinfo{year}{2000}).

\bibitem[{\citenamefont{Lancichinetti et~al.}(2008)\citenamefont{Lancichinetti,
  Fortunato, and Radicchi}}]{ref:lancbenchmark}
\bibinfo{author}{\bibfnamefont{A.}~\bibnamefont{Lancichinetti}},
  \bibinfo{author}{\bibfnamefont{S.}~\bibnamefont{Fortunato}},
  \bibnamefont{and} \bibinfo{author}{\bibfnamefont{F.}~\bibnamefont{Radicchi}},
  \bibinfo{journal}{Phys. Rev. E} \textbf{\bibinfo{volume}{78}},
  \bibinfo{pages}{046110} (\bibinfo{year}{2008}).

\bibitem[{\citenamefont{Traag et~al.}(2011)\citenamefont{Traag, Van~Dooren, and
  Nesterov}}]{ref:traaglocalscope}
\bibinfo{author}{\bibfnamefont{V.~A.} \bibnamefont{Traag}},
  \bibinfo{author}{\bibfnamefont{P.}~\bibnamefont{Van~Dooren}},
  \bibnamefont{and} \bibinfo{author}{\bibfnamefont{Y.}~\bibnamefont{Nesterov}},
  \bibinfo{journal}{Phys. Rev. E} \textbf{\bibinfo{volume}{84}},
  \bibinfo{pages}{016114} (\bibinfo{year}{2011}).

\bibitem[{\citenamefont{Newman and Girvan}(2004)}]{ref:gn}
\bibinfo{author}{\bibfnamefont{M.~E.~J.} \bibnamefont{Newman}}
  \bibnamefont{and} \bibinfo{author}{\bibfnamefont{M.}~\bibnamefont{Girvan}},
  \bibinfo{journal}{Phys. Rev. E} \textbf{\bibinfo{volume}{69}},
  \bibinfo{pages}{026113} (\bibinfo{year}{2004}).

\bibitem[{\citenamefont{Fortunato and Barth{\'e}lemy}(2007)}]{ref:fortunato}
\bibinfo{author}{\bibfnamefont{S.}~\bibnamefont{Fortunato}} \bibnamefont{and}
  \bibinfo{author}{\bibfnamefont{M.}~\bibnamefont{Barth{\'e}lemy}},
  \bibinfo{journal}{Proc. Natl. Aca. Sci. U.S.A.}
  \textbf{\bibinfo{volume}{104}}, \bibinfo{pages}{36} (\bibinfo{year}{2007}).

\bibitem[{\citenamefont{Lancichinetti and
  Fortunato}(2011)}]{ref:lancfortunatomod}
\bibinfo{author}{\bibfnamefont{A.}~\bibnamefont{Lancichinetti}}
  \bibnamefont{and}
  \bibinfo{author}{\bibfnamefont{S.}~\bibnamefont{Fortunato}},
  \bibinfo{journal}{Phys. Rev. E} \textbf{\bibinfo{volume}{84}},
  \bibinfo{pages}{066122} (\bibinfo{year}{2011}).

\bibitem[{\citenamefont{Kasteleyn and Fortuin}(1969)}]{ref:kasteleyn}
\bibinfo{author}{\bibfnamefont{P.~W.} \bibnamefont{Kasteleyn}}
  \bibnamefont{and} \bibinfo{author}{\bibfnamefont{C.~M.}
  \bibnamefont{Fortuin}}, in \emph{\bibinfo{booktitle}{Proceedings of the
  International Conference on Statistical Mechanics, September 9-14, Koyto}}
  (\bibinfo{year}{1969}), vol.~\bibinfo{volume}{26}, p.~\bibinfo{pages}{11}.

\bibitem[{\citenamefont{Raghavan et~al.}(2007)\citenamefont{Raghavan, Albert,
  and Kumara}}]{ref:LPA}
\bibinfo{author}{\bibfnamefont{U.~N.} \bibnamefont{Raghavan}},
  \bibinfo{author}{\bibfnamefont{R.}~\bibnamefont{Albert}}, \bibnamefont{and}
  \bibinfo{author}{\bibfnamefont{S.}~\bibnamefont{Kumara}},
  \bibinfo{journal}{Phys. Rev. E} \textbf{\bibinfo{volume}{76}},
  \bibinfo{pages}{036106} (\bibinfo{year}{2007}).

\bibitem[{\citenamefont{Mercaldo et~al.}(2004)\citenamefont{Mercaldo,
  Angl\`es~d'Auriac, and Igl\'oi}}]{ref:mercaldo}
\bibinfo{author}{\bibfnamefont{M.~T.} \bibnamefont{Mercaldo}},
  \bibinfo{author}{\bibfnamefont{J.-C.} \bibnamefont{Angl\`es~d'Auriac}},
  \bibnamefont{and} \bibinfo{author}{\bibfnamefont{F.}~\bibnamefont{Igl\'oi}},
  \bibinfo{journal}{Phys. Rev. E} \textbf{\bibinfo{volume}{69}},
  \bibinfo{pages}{056112} (\bibinfo{year}{2004}).

\bibitem[{\citenamefont{d'Auriac et~al.}(2002)\citenamefont{d'Auriac,
  Igl{\'o}i, Preissmann, and Seb{\"o}}}]{ref:preissmann}
\bibinfo{author}{\bibfnamefont{J.-C.~A.} \bibnamefont{d'Auriac}},
  \bibinfo{author}{\bibfnamefont{F.}~\bibnamefont{Igl{\'o}i}},
  \bibinfo{author}{\bibfnamefont{M.}~\bibnamefont{Preissmann}},
  \bibnamefont{and} \bibinfo{author}{\bibfnamefont{A.}~\bibnamefont{Seb{\"o}}},
  \bibinfo{journal}{J. Phys. A} \textbf{\bibinfo{volume}{35}},
  \bibinfo{pages}{6973} (\bibinfo{year}{2002}).

\bibitem[{\citenamefont{Juh\'asz
  et~al.}(2001{\natexlab{a}})\citenamefont{Juh\'asz, Rieger, and
  Igl\'oi}}]{ref:juhasz}
\bibinfo{author}{\bibfnamefont{R.}~\bibnamefont{Juh\'asz}},
  \bibinfo{author}{\bibfnamefont{H.}~\bibnamefont{Rieger}}, \bibnamefont{and}
  \bibinfo{author}{\bibfnamefont{F.}~\bibnamefont{Igl\'oi}},
  \bibinfo{journal}{Phys. Rev. E} \textbf{\bibinfo{volume}{64}},
  \bibinfo{pages}{056122} (\bibinfo{year}{2001}{\natexlab{a}}).

\bibitem[{\citenamefont{Ronhovde and Nussinov}(2012)}]{ref:rzLMRO}
\bibinfo{author}{\bibfnamefont{P.}~\bibnamefont{Ronhovde}} \bibnamefont{and}
  \bibinfo{author}{\bibfnamefont{Z.}~\bibnamefont{Nussinov}},
  \bibinfo{journal}{(in preparation)}  (\bibinfo{year}{2012}).

\bibitem[{\citenamefont{Ronhovde et~al.}(2012)\citenamefont{Ronhovde, Hu, and
  Nussinov}}]{ref:rhzglobaldisorder}
\bibinfo{author}{\bibfnamefont{P.}~\bibnamefont{Ronhovde}},
  \bibinfo{author}{\bibfnamefont{D.}~\bibnamefont{Hu}}, \bibnamefont{and}
  \bibinfo{author}{\bibfnamefont{Z.}~\bibnamefont{Nussinov}},
  \bibinfo{journal}{(in preparation)}  (\bibinfo{year}{2012}).

\bibitem[{\citenamefont{Hu et~al.}(2012{\natexlab{b}})\citenamefont{Hu,
  Ronhovde, and Nussinov}}]{ref:HRNimages}
\bibinfo{author}{\bibfnamefont{D.}~\bibnamefont{Hu}},
  \bibinfo{author}{\bibfnamefont{P.}~\bibnamefont{Ronhovde}}, \bibnamefont{and}
  \bibinfo{author}{\bibfnamefont{Z.}~\bibnamefont{Nussinov}},
  \bibinfo{journal}{Phys. Rev. E} \textbf{\bibinfo{volume}{85}},
  \bibinfo{pages}{016101} (\bibinfo{year}{2012}{\natexlab{b}}).

\bibitem[{\citenamefont{Juh\'asz
  et~al.}(2001{\natexlab{b}})\citenamefont{Juh\'asz, Rieger, and
  Igl\'oi}}]{ref:juhaszrbPMlargeq}
\bibinfo{author}{\bibfnamefont{R.}~\bibnamefont{Juh\'asz}},
  \bibinfo{author}{\bibfnamefont{H.}~\bibnamefont{Rieger}}, \bibnamefont{and}
  \bibinfo{author}{\bibfnamefont{F.}~\bibnamefont{Igl\'oi}},
  \bibinfo{journal}{Phys. Rev. E} \textbf{\bibinfo{volume}{64}},
  \bibinfo{pages}{056122} (\bibinfo{year}{2001}{\natexlab{b}}).

\bibitem[{\citenamefont{Mercaldo et~al.}(2005)\citenamefont{Mercaldo, d'Auriac,
  and Igl{\' o}i}}]{ref:mercaldorbPMlargeq}
\bibinfo{author}{\bibfnamefont{M.~T.} \bibnamefont{Mercaldo}},
  \bibinfo{author}{\bibfnamefont{J.-C.~A.} \bibnamefont{d'Auriac}},
  \bibnamefont{and} \bibinfo{author}{\bibfnamefont{F.}~\bibnamefont{Igl{\'
  o}i}}, \bibinfo{journal}{Europhys. Lett.} \textbf{\bibinfo{volume}{70}},
  \bibinfo{pages}{733} (\bibinfo{year}{2005}).

\bibitem[{\citenamefont{Chang and Shrock}(2007)}]{ref:changlargeqzeroes}
\bibinfo{author}{\bibfnamefont{S.-C.} \bibnamefont{Chang}} \bibnamefont{and}
  \bibinfo{author}{\bibfnamefont{R.}~\bibnamefont{Shrock}},
  \bibinfo{journal}{Int. J. Mod. Phys. B} \textbf{\bibinfo{volume}{21}},
  \bibinfo{pages}{979} (\bibinfo{year}{2007}).

\bibitem[{\citenamefont{Mittag and Stephen}(1974)}]{ref:pottslargeqmean}
\bibinfo{author}{\bibfnamefont{L.}~\bibnamefont{Mittag}} \bibnamefont{and}
  \bibinfo{author}{\bibfnamefont{M.~J.} \bibnamefont{Stephen}},
  \bibinfo{journal}{J. Phys. A: Math. Nucl. Gen.} \textbf{\bibinfo{volume}{7}},
  \bibinfo{pages}{L109} (\bibinfo{year}{1974}).

\bibitem[{\citenamefont{Pearce and Griffiths}(1980)}]{ref:pearcelargeqmean}
\bibinfo{author}{\bibfnamefont{P.~A.} \bibnamefont{Pearce}} \bibnamefont{and}
  \bibinfo{author}{\bibfnamefont{R.~B.} \bibnamefont{Griffiths}},
  \bibinfo{journal}{J. Phys. A: Math. Gen.} \textbf{\bibinfo{volume}{13}},
  \bibinfo{pages}{2143} (\bibinfo{year}{1980}).

\bibitem[{\citenamefont{Welsh and Merino}(2000)}]{ref:pottstutte}
\bibinfo{author}{\bibfnamefont{D.~J.~A.} \bibnamefont{Welsh}} \bibnamefont{and}
  \bibinfo{author}{\bibfnamefont{C.}~\bibnamefont{Merino}},
  \bibinfo{journal}{J. Math. Phys.} \textbf{\bibinfo{volume}{41}},
  \bibinfo{pages}{1127} (\bibinfo{year}{2000}).

\bibitem[{\citenamefont{Traag and Bruggeman}(2009)}]{ref:traagPRE}
\bibinfo{author}{\bibfnamefont{V.~A.} \bibnamefont{Traag}} \bibnamefont{and}
  \bibinfo{author}{\bibfnamefont{J.}~\bibnamefont{Bruggeman}},
  \bibinfo{journal}{Phys. Rev. E} \textbf{\bibinfo{volume}{80}},
  \bibinfo{pages}{036115} (\bibinfo{year}{2009}).

\bibitem[{\citenamefont{Reichardt and Bornholdt}(2004)}]{ref:reichardt}
\bibinfo{author}{\bibfnamefont{J.}~\bibnamefont{Reichardt}} \bibnamefont{and}
  \bibinfo{author}{\bibfnamefont{S.}~\bibnamefont{Bornholdt}},
  \bibinfo{journal}{Phys. Rev. Lett.} \textbf{\bibinfo{volume}{93}},
  \bibinfo{pages}{218701} (\bibinfo{year}{2004}).

\bibitem[{\citenamefont{Jackson and Sokal}(2009)}]{ref:jacksonmvtutte}
\bibinfo{author}{\bibfnamefont{B.}~\bibnamefont{Jackson}} \bibnamefont{and}
  \bibinfo{author}{\bibfnamefont{A.~D.} \bibnamefont{Sokal}},
  \bibinfo{journal}{J. Combinatorial Theory, Series B}
  \textbf{\bibinfo{volume}{99}}, \bibinfo{pages}{869} (\bibinfo{year}{2009}).

\bibitem[{\citenamefont{Batista and Nussinov}(2005)}]{ref:batistaDreduc}
\bibinfo{author}{\bibfnamefont{C.~D.} \bibnamefont{Batista}} \bibnamefont{and}
  \bibinfo{author}{\bibfnamefont{Z.}~\bibnamefont{Nussinov}},
  \bibinfo{journal}{Phys. Rev. B} \textbf{\bibinfo{volume}{72}},
  \bibinfo{pages}{045137} (\bibinfo{year}{2005}).

\bibitem[{\citenamefont{Nussinov et~al.}(2011)\citenamefont{Nussinov, Ortiz,
  and Cobanera}}]{ref:nussinovholographies}
\bibinfo{author}{\bibfnamefont{Z.}~\bibnamefont{Nussinov}},
  \bibinfo{author}{\bibfnamefont{G.}~\bibnamefont{Ortiz}}, \bibnamefont{and}
  \bibinfo{author}{\bibfnamefont{E.}~\bibnamefont{Cobanera}},
  \bibinfo{journal}{e-print arXiv:1110.2179}  (\bibinfo{year}{2011}).

\bibitem[{\citenamefont{Kauffman}(1989)}]{ref:kauffmansignedG}
\bibinfo{author}{\bibfnamefont{L.~H.} \bibnamefont{Kauffman}},
  \bibinfo{journal}{Discrete Appl. Math} \textbf{\bibinfo{volume}{25}},
  \bibinfo{pages}{105} (\bibinfo{year}{1989}).

\bibitem[{\citenamefont{Kauffman}(2005)}]{ref:kauffmanknotsintro}
\bibinfo{author}{\bibfnamefont{L.~H.} \bibnamefont{Kauffman}},
  \bibinfo{journal}{Rep. on Progress in Phys.} \textbf{\bibinfo{volume}{68}},
  \bibinfo{pages}{2829} (\bibinfo{year}{2005}).

\bibitem[{\citenamefont{Witten}(1989)}]{ref:wittenknots}
\bibinfo{author}{\bibfnamefont{E.} \bibnamefont{Witten}},
  \bibinfo{journal}{Comm. Math. Phys.} \textbf{\bibinfo{volume}{121}},
  \bibinfo{pages}{351} (\bibinfo{year}{1989}).

\bibitem[{\citenamefont{Thistlethwaite}(1987)}]{ref:thistlewaitespanning}
\bibinfo{author}{\bibfnamefont{M.~B.} \bibnamefont{Thistlethwaite}},
  \bibinfo{journal}{Topology} \textbf{\bibinfo{volume}{26}},
  \bibinfo{pages}{297} (\bibinfo{year}{1987}).

\end{thebibliography}

\end{document}